\definecolor{Gray}{gray}{0.9}
\def\a{\mathbf{a}}
\def\0{\mathbf{0}}
\def\eps{\varepsilon}
\def\lam{\lambda}
\def \< {\langle}
\def \> {\rangle}
\def\ol{\overline}
\def\to{\rightarrow}
 \def\ol{\overline}    
\def\Om{\Omega}  \def\om{\omega} 
  \def\a{\alpha} 
\def\eps{\varepsilon}
 \def\t{\times}  \def\lam{\lambda}
\def\E{{\bf E}}
\def\beqa{\begin{eqnarray}}
\def\eeqa{\end{eqnarray}}
\def\beqas{\begin{eqnarray*}}
\def\eeqas{\end{eqnarray*}}
\newtheorem{theorem}{Theorem}[section]
\newtheorem{lemma}[theorem]{Lemma}
\newtheorem{proposition}[theorem]{Proposition}
\newtheorem{corollary}[theorem]{Corollary}
\newtheorem{remark}[theorem]{Remark}
\newtheorem{definition}[theorem]{Definition}
\newtheorem{assumption}[theorem]{Assumption} 
\numberwithin{equation}{section}
\newcommand{\hatd}[1]{{}}
\newcommand{\bd}{\begin{displaymath}}
\newcommand{\ed}{\end{displaymath}}
\newcommand{\be}{\begin{equation}}
\newcommand{\ee}{\end{equation}}
\newcommand{\bq}{\begin{eqnarray}}
\newcommand{\eq}{\end{eqnarray}}
\newcommand{\bn}{\begin{eqnarray*}}
\newcommand{\en}{\end{eqnarray*}}
\newcommand{\dl}{\delta}
\newcommand{\cT}{\mathcal{T}}
\def\wt{\widetilde}
\def\P{\mathbb{P}}
\def\E{{\mathbb{E}}}
\newcommand{\R}{{\mathbb R}}
\newcommand{\id}{{\rm id}}
\newcommand{\Mid}{{\ \Big|\ }}
\def\cA{\mathcal{A}}
\def\cE{\mathcal{E}}
\def\cF{\mathcal{F}}
\def\cG{\mathcal{G}}
\def\cI{\mathcal{I}}
\def\cN{\mathcal{N}}
\def\cO{\mathcal{O}}
\def\cT{\mathcal{T}}
\def\cV{\mathcal{V}}
\def\sB{\mathbb{B}}
\def\sC{\mathbb{C}}
\def\sE{{\mathbb{E}}}
\def\sN{{\mathbb{N}}}
\def\sP{\mathbb{P}}
\def\sR{{\mathbb R}}
\DeclareMathOperator*{\argmin}{arg\,min}
\newcommand{\todo}[1]{{\color{red} \fbox{#1}}}
\title{Statistical Learning with Sublinear Regret of Propagator Models}
\author{Eyal Neuman}
\author{Yufei Zhang}
\affil{Department of Mathematics, Imperial College London}
\begin{document}

 \vspace{-0.5cm}
\maketitle

\begin{abstract}
We consider a class of learning problems in which an agent liquidates a risky asset while creating both transient price impact driven by an unknown  convolution propagator and linear temporary price impact with
 an unknown parameter. We characterize the trader's performance as 
maximization of a revenue-risk functional, where the trader also exploits available information on a price predicting signal. We present a trading algorithm that alternates between exploration and exploitation phases and achieves sublinear regrets with high probability. For the exploration phase we propose a novel approach for 
non-parametric estimation of the price impact kernel by observing only the visible price process and derive sharp bounds on the convergence rate, which are characterised by the singularity of the propagator. These kernel estimation methods extend existing methods from the area of Tikhonov regularisation for inverse problems and are of independent interest. The bound on the regret in the exploitation phase is obtained by deriving stability results for the optimizer and value function of the associated class of infinite-dimensional stochastic control problems. As a complementary result we propose a regression-based algorithm to estimate the conditional expectation of non-Markovian signals and derive its convergence rate.   
\end{abstract} 

\begin{description}
\item[Mathematics Subject Classification (2010):] 
62L05,
60H30, 91G80, 68Q32, 93C73, 93E35, 62G08
\item[JEL Classification:] C02, C61, G11
\item[Keywords:] optimal portfolio liquidation, price impact, propagator models, predictive signals, Volterra stochastic control, 
non-parametric estimation, 
reinforcement learning, 
regret analysis
\end{description}

\bigskip

 \tableofcontents

\newpage 
\section{Introduction}

Price impact refers to the empirical fact that execution of a large order
affects the risky asset's price in an adverse and persistent manner  leading to less favourable prices. Propagator models are a central tool in describing this phenomena mathematically. 
This class of models provides deep insight into the nature of price impact and price dynamics.  It expresses price moves in terms of the influence of past trades, which gives reliable reduced form view on the limit order book. It also provides interesting insights on liquidity, price formation and on the interaction between different market participants through price impact. The model's tractability provides a convenient formulation for stochastic control problems arising from optimal execution \citep{bouchaud_bonart_donier_gould_2018, gatheral2010no}. 
More precisely, if the trader's holdings in a risky asset is denoted by $Q=\{Q_t\}_{t\geq 0}$, then the asset price $S_t$ is given by
\be \label{ex-framwork} 
S_t = S_0+ \lambda\dot Q_t + \int_0^tG(t-s)dQ_s + P_t,
\ee
where $P$ is a semimartingale, 
$\lambda$ is the positive temporary price impact coefficient and the price impact kernel $G$ is also known as the \emph{propagator}. We refer to $\dot Q$ as the execution trading speed. Since $G(t)$ typically decays for large values of $t$, the convolution on the right-hand-side of \eqref{ex-framwork} is referred to as transient price impact (see e.g. \citet[Chapter 13]{bouchaud_bonart_donier_gould_2018}). A well-known example à la Almgren and Chriss, introduces to the case where $G$ is a constant, then the above convolution represents permanent price impact (see \citep{AlmgrenChriss1,OPTEXECAC00}). 

In the aforementioned setting, the trader can only observe the visible price process $S$ and   her own inventory $Q$.
In order to quantify the price impact and hence the trading costs, the trader needs a good estimation of $G$ and $\lambda$.
Some estimators for discrete-time versions of the model were proposed in \citep{bouchaud-gefen,Forde:2022aa, Toth_17,Benzaquen_22} and in Chapter 13.2 of \cite{bouchaud_bonart_donier_gould_2018}, where 
only a finite amount of values $\{G(t_n)\}_{n =1}^N$ are estimated for a predetermined grid $0\leq t_1<...<t_n$. However, even in this finite dimensional projection of the problem, the convergence of the estimators remains unproved, hence rigorous results on the estimation of $G$ are considered as a long-standing open problem. In one of the main results of this paper we propose a novel approach for non-parametric estimation of the price impact kernel by observing only the visible price process and we derive sharp bounds on the convergence rate of our estimators, which are characterised by the singularity of the kernel. 

Precise quantification of price impact is a crucial ingredient in portfolio liquidation problems. Considering the adverse effect of the price impact on the execution price, a trader who wishes to minimize her
trading costs has to split her order into a
sequence of smaller orders which are executed over a finite time
horizon. At the same time, the trader also has an incentive to execute
these split orders rapidly because she does not want to carry the risk
of an adverse price move far away from her initial decision
price. This trade-off between price impact and market risk is usually
translated into a stochastic optimal control problem where the trader
aims to minimize a risk-cost functional over a suitable class of
execution strategies, see~\citep{cartea15book, GatheralSchiedSurvey, GokayRochSoner, Gueant:16, citeulike:12047995,N-Sch16} among others. In addition, many traders and trading algorithms also strive for using short term price predictors in their dynamic order execution schedules, which are often related to order book dynamics as discussed in~\cite{leh16moun,Lehalle-Neum18, citeulike:12820703, cont14}.  From the modelling point of view, incorporating signals into execution problems translates into taking a non-martingale price process $P$, in contrast to a martingale price in the classical setting (see \cite{Car-Jiam-2016, BMO:19, NeumanVoss:20, N-V-2021, AJ-N-2022}). This changes the problem significantly as the resulting optimal strategies are often random and in particular signal-adaptive, in contrast to deterministic strategies, which are typically obtained in the martingale price case \cite{stat-adp18}.   

The main goal of this paper is to estimate the price impact kernel while trading a risky asset in a cost-effective manner. In order to do that we propose a learning algorithm that alternates between exploration and exploitation phases. In the exploration phase we proposes a novel approach for non-parametric estimation of the price impact kernel by observing only the visible price process $S$ 
and the signal process, which is the non-martingale component of $P$   in \eqref{ex-framwork} {(see a note about observables in Section \ref{sec:episodic_learning_setting})}. Our estimation method extends the existing theory of Tikhonov regularisation for inverse problems and is of independent interest (see Remark \ref{rem-opt}). Specifically, we propose a regularised least-squares estimator for a squared integrable 
kernel $G$, where samples of the visible price process $S$ are generated by a deterministic trading strategy executed by the trader.  We derive \emph{sharp bounds} on the convergence rate of the estimator with arbitrary high probability under two different assumptions.  For a regular kernel, which has a squared integrable weak derivative, we prove that the convergence rate is of order $N^{-1/6}$. For a singular kernel with a decay rate $G(t) \sim t^{-\alpha}$ for some $\alpha \in(0,1/2)$ we find that the convergence rate is of order $N^{-\frac{1-2\alpha }{2(3-2\alpha)}}$ (see Theorem \ref{thm:estimate_error_high_probability}). Here $N$ is the  sample size   
for the least-squares estimation.
Moreover, from Proposition \ref{prop-opt} 
it follows that the convergence rates given in Theorem \ref{thm:estimate_error_high_probability} are optimal.
More precisely, 
{the outlined upper bounds for the rate of convergence match the lower bound rates of convergence under the assumption of regular kernel and of a power law kernel $G^\star(t)= t^{-\alpha}$,
with $\alpha\in (0,1/2)$. See Remark \ref{rmk:rate_optimal} for specific details.}
Theorem \ref{thm:estimate_error_high_probability} is the first result that proves convergence of any estimator for a propagator, which is based on market price quotes. These results answer an open question that arises from \citep{bouchaud-gefen, bouchaud_bonart_donier_gould_2018,Forde:2022aa, Toth_17,Benzaquen_22} among others. 

After each exploration episode,  we perform a number of exploitation episodes in which we use the current estimation of the kernel and temporary price impact coefficient $\theta^n=(\lam^n, G^n)$ in a framework of optimal execution (see \eqref{ex-framwork}). We execute the optimal trading strategy subject to the estimated parameter $\theta^n$  and derive the performance gap between the revenues of the aforementioned strategy and the optimal strategy with the real choice of $\theta^{\star}$. The bound on the performance gap, which is presented in Theorem \ref{thm:gap}, is derived by proving a stability result (see Proposition \ref{prop:u_lipschitz}) for the optimizer of the associated infinite-dimensional stochastic control problem proposed in \cite{AJ-N-2022}. 

Finally, by combining the exploration and exploration schemes we propose Algorithm  \ref{alg:phased} which achieves sublinear regrets 
 in high probability. In Theorem \ref{thm:regret_general_rate} and Corollary \ref{cor-reg-bnd} we derive a bound on the regret after $N$ trading episodes, which is of order $N^{3/4}$ for a regular kernel and of order $N^{\frac{ 3-2\alpha}{4-4\alpha }}$ for a singular kernel. 
 These  sublinear regret bounds
   underperform the square-root (or  logarithmic) regret for  reinforcement   learning problems with finite-dimensional parametric models  (see e.g., \cite{basei2022logarithmic, guo2023reinforcement,szpruch2021exploration, gao2022logarithmic,gao2022square, szpruch2024optimal}), 
   due to the present  infinite-dimensional non-parametric  kernel estimation 
  (see Remark \ref{rem-conv-compar}).
 
In order to complete our argument, we provide a regression-based algorithm for signal estimation which is performed off-line, that is, independently from the trading algorithm. Specifically we decompose the semimatingale price process $P$ in \eqref{ex-framwork} to a martingale and to a finite variation process $A$ which has the interpretation of a trading signal (see e.g. \cite{N-V-2021}). 
We observe that the optimal trading strategy which is used in the algorithm (see \eqref{eq:optimalcontrol_monotone_str}) involves the  conditional  process 
$(t,s)\mapsto  \sE\left[   A_s\mid \cF_t\right]$. As  the conditional distribution of $A$ is in general not observable, we propose a regression-based algorithm to estimate it,
based on observed signal trajectories. 
Since the agent's trading strategy does not affect the signal, 
the signal estimation can be carried out separately from the 
learning algorithm for $(\lambda^\star,G^\star)$. The convergence rate of  this algorithm is derived in Theorem \ref{theorem-signal}.

Our main results which were outlined above significantly extend the work on reinforcement learning for continuous-time parametric models  which were studied by   \cite{basei2022logarithmic, guo2023reinforcement,szpruch2021exploration, gao2022logarithmic,gao2022square, szpruch2024optimal}  among others. We outline our main contributions that correspond to each part of the learning algorithm. 
\medskip \\
\textbf{Non-parametric kernel estimation:} 
{The main component of the exploration phase is to estimate the kernel function 
$G$ in the \emph{non-Markovian} model 
  \eqref{ex-framwork} in non-parametric manner. This 
results in  a   learning problem with  infinite dimensional parameter, input, and output spaces, 
which 
stands in  contrast to existing theoretical works on 
discrete  Markov decision processes  
(see e.g.~\cite{osband2014model, dean2020sample, 
 he2021logarithmic})
 and on continuous time  
parametric Markov processes
\cite{basei2022logarithmic, guo2023reinforcement,szpruch2021exploration, gao2022logarithmic,gao2022square, szpruch2024optimal}. 
The  sample  complexity bounds therein  depend explicitly on the dimensions of the parameter space, the input space, and the output space, and hence cannot be applied in the present infinite dimensional setting.

It is also challenging to apply existing functional linear regression (FLR) frameworks to our model \eqref{ex-framwork}.
Standard FLR frameworks directly estimate the mapping from the input $ Q$ to the response 
$S$  as an unknown regression operator, instead of estimating 
$\lambda$ and $G$ individually. Moreover, most existing FLR works   characterise  the convergence rate of the proposed estimators under the so-called source conditions, which   assume that the unknown regression operator lies in the range of a suitable fractional power of the input covariance operator (see, e.g., \cite[Assumption 4]{benatia2017functional} and also \cite{benatia2017functional, Rastogi2020convergence}).
It is well known that identifying explicit conditions on 
$G$ and $\dot Q$ such that these source conditions are met is challenging, as it typically requires computing the spectral decomposition of the input covariance and the unknown regression operator, which cannot be performed analytically for general $G$ and $\dot Q$.

In this work, 
we   propose a novel method for the convergence rate analysis of the estimator of $G$ which applies even for singular kernels. 
 For a suitable deterministic trading speed $\dot Q$, the proposed method estimates 
$\lambda$   using a classical Monte Carlo estimator and estimates $G$ using a regularised least-squares estimator involving the estimated $\lambda$. 
We further quantify the convergence rate of the estimator for $G$ by using \emph{appropriate source conditions}, instead of the standard source conditions found in \cite{benatia2017functional, blanchard2018optimal, Rastogi2020convergence}. These appropriate source conditions quantify the degree to which the true kernel $G$ violates the assumption of being in the range of the input operator (see \eqref{eq:distance_function}) and were introduced in \cite{hofmann2006approximate} to study deterministic inverse problems. We extend these ideas to the present setting with stochastic observations (see Theorem \ref{thm:estimate_error_high_probability} and Remark \ref{rem-opt}).

In particular, we identify explicit regularity conditions on the kernel $G$ 
and the trading speed $\dot Q$
such that the approximate source conditions hold and optimise the convergence rates of the estimator accordingly. The resulting convergence rates are optimal under the regularity conditions of the kernel $G$
 and are better than the worst-case convergence rates given in the statistical inverse problem literature \cite{blanchard2018optimal, Rastogi2020convergence}; see Remark \ref{rmk:rate_optimal}.
 The  method is general and can be applied for non-parametric estimation in other classes of infinite-dimensional stochastic control problems. 
}
\medskip \\
\textbf{Lipschitz stability of infinite-dimensional optimal control:}  
In \cite{basei2022logarithmic,guo2023reinforcement,szpruch2021exploration}, Lipschitz stability 
of optimal controls for  stochastic control problems was derived,   by showing that the optimiser is continuously differentiable with respect to  finite-dimensional  model parameters, hence establishing Lipschitz continuity.  
The Lipschitz stability of   controls is crucial for 
quantifying the precise performance gap between   controls derived from estimated and true models,
and for
characterizing the  regret
order of learning algorithms.

The performance gap in Theorem \ref{thm:gap} entails proving the Lipschitz  stability of the optimal control with respect to the (infinite-dimensional) kernel function $G$ (see Proposition \ref{prop:u_lipschitz}),
for which   the preceding argument 
developed   for  finite-dimensional parameters 
does not apply. Moreover,  in our setting the non-Markovianity introduced by $G$ turns the problem to infinite-dimensional stochastic control, in contrast to 
the finite-dimensional   control problems studied 
in the aforementioned references.
This major difference is reflected in the ingredients of optimiser. For example standard Riccati equations become operator-valued Riccati equations and solution to a BSDE becomes a solution to infinite dimensional BSDE (see Sections 6.2-6.3 of \cite{AJ-N-2022} for additional details). In this work, we 
establish  uniform boundedness and Lipschitz stability for all 
components of the optimiser 
in suitable norms (see Remark \ref{rmk: stability_estimate}).

Also note that for the control problem studied in this paper the running cost is not strongly concave in the control variable, and is not concave in the state variable (see \eqref{def:objective-str}). Such a (strong) concavity assumption is assumed in the aforementioned references in order to establish the required Lipschitz stability. We overcome this issue by imposing a nearly non-negativeness condition of the estimated kernel $G$ (see Definition \ref{assum1:parameters}) in order to prove stability.

\paragraph{Organisation of the paper:} 
In Section \ref{sec-l-p} we describe the reinforcement learning problem and present our main results on the convergence of the  estimator for propagator and on the regret bounds. Section \ref{sec-num} is dedicated to a numerical implementation of our propagator estimation results. In Section \ref{sec-results} we recall some essential results on the associated optimal liquidation problem. Section \ref{sec-bnd} is dedicated to the proof Theorem \ref{thm:gap}, which provides the bound on the performance gap. Section \ref{sec-est} deals with the analysis of the regularised least-squares estimator. Section \ref{sec-pf-rate} dedicated to the proof of Theorem \ref{thm:regret_general_rate}, which derives regret rate. Sections \ref{sec-pf-prop-u} and \ref{sec-pf-thm-lips} contain proofs for some auxiliary results. Finally in Appendix \ref{app-signal} we provide a regression-based algorithm for signal estimation and derive its convergence rate. 

 \section{Problem formulation and main results} \label{sec-l-p} 
 
This section studies  the optimal liquidation problem with unknown transient price impact kernel and temporary price coefficient $\theta^\star=(\lambda^\star,G^\star)$.  
 The agent's objective is to search for the optimal trading strategy while simultaneously learn the price dynamics, that is to learn $\theta^\star$. We first propose a least-squares estimator for these coefficients and derive its convergence rate. Then we present a phased-based learning algorithm and establish its regret bound.

 \subsection{Episodic learning  for optimal liquidation problems}
\label{sec:episodic_learning_setting}

 \paragraph{Optimal liquidation with known price impacts.}
 We first recall the optimal liquidation framework which was presented in \cite{AJ-N-2022}. \medskip 
 
 Let $T>0$ denote a finite deterministic time horizon and fix a filtered probability space~$(\Omega, \mathcal F,(\mathcal F_t)_{0 \leq t\leq T}, \P  )$ satisfying the usual conditions of right continuity and completeness.  We consider a semimartingale unaffected price process~$P=(P_t)_{0 \leq t\leq T}$ with a canonical decomposition  \begin{equation} \label{eq:P_decomposition}
 P_t = A_t + M_t, \quad 0\leq t \leq T,
\end{equation} 
into a predictable finite-variation signal process $A=(A_t)_{0 \leq t\leq T}$ and 
an independent  martingale $M$ satisfying $\mathbb{E}[M_0]=0$ and 
\be \label{ass:P} 
E \left[ \langle   M  \rangle_T \right] + E\left[\left( \int_0^T |dA_s| \right)^2 \right] < \infty.  
\ee
Let  $\theta^\star=(\lambda^\star,G^\star)\in (0,\infty)\times  L^2([0,T], \mathbb{R})$ be fixed 
coefficients such that 
 for every $f\in L^2\left([0,T],\mathbb R\right)$, 
 \be \label{pos-def_convolution}
\int_{0}^T\int_{0}^T G^\star(|t-s|)f(s)  f(t) dsdt \geq 0. 
\ee

\begin{remark} \label{R:boundedkernels}
Note that in \eqref{pos-def_convolution} we consider a class of non-negative kernels. 
An important subclass of these kernels is the class of bounded 
non-increasing convex functions (see Example 2.7 in \cite{GSS}). 
See  Remark \ref{rmk:G_regularity_example}
for further examples. 
\end{remark}

We consider a trader with an initial position of $q>0$ shares in a risky asset. The number of shares the trader holds at time $t\in [0,T]$ is prescribed as 
    \begin{align} \label{def:Q}
    Q_t^u = q -\int_0^t u_s ds, 
    \end{align}
where $(u_s)_{s \in [0,T]}$ denotes the trading speed which is chosen from the set of admissible strategies
\be \label{def:admissset} 
\mathcal A \triangleq \left\{ u \, : \, u \textrm{ progressively measurable s.t. } \mathbb E\left[ \int_0^Tu_s^2 ds \right] <\infty \right\}.
\ee
We assume that the trader's trading activity causes price impact on the risky asset's execution price. In order to define the price impact effects we introduce some additional definitions. 
For any trading speed $u\in \cA$,
the   price process $S^u$ satisfies the following dynamics: 
 for all $t\in [0,T]$, 
 \be \label{def:S_str}
S^u_{t} := P_{t} - \lam^\star u_t -  Z_t^{\theta^\star, u},
\qquad 
\textnormal{with 
$Z_t^{\theta^\star, u}=   \int_0^t G^\star(t-s)u_s ds$}.
\ee
Note that $\lambda^\star$ is the temporary price impact coefficient and $Z_t^{\theta^\star, u}$ is the transient price impact term, which is associated with the price impact kernel $G^\star$, also known as the propagator. 
 
Consider maximising  the following risk-revenue functional  
  over $u\in \mathcal A$: 
\begin{equation} \label{def:objective-str}
\begin{aligned}
& J^{\theta^\star}(u ) := \mathbb{E} \Bigg[ \int_0^T S^u_t   u_t dt  + Q_T^u P_T   -\phi \int_0^T (Q_t^u)^2 dt - \varrho (Q_T^u)^2 \Bigg].
\end{aligned}
\end{equation}
The first two terms in~\eqref{def:objective-str} represent the trader's terminal wealth;  that is, her final cash position including the accrued revenue  as well as her remaining final risky asset position's book value. The third and fourth terms in~\eqref{def:objective-str} implement a penalty $\phi \geq 0$ and $\varrho \geq 0$ on her running and terminal inventory, respectively. 
Observe that $J(u) < \infty$ for any   strategy $u \in \mathcal A$.

If the agent knows $\theta^\star$, then \eqref{def:S_str}-\eqref{def:objective-str} is a special case of the Volterra stochastic control problem studied in \cite{AJ-N-2022}.
By Proposition 4.5 therein, the optimal trading strategy
 $u^{\theta^\star}$ 
 is given by 
	\begin{align}	\label{eq:optimalcontrol_monotone_str}
	 u^{\theta^\star}_t = a^{\theta^\star}_t + \int_0^t B^{\theta^\star}(t,s)u^{\theta^\star}_s ds, \quad 0\leq t \leq T, 
	\end{align}
 where $a^{\theta^\star}$ is a stochastic process satisfying \eqref{def:admissset}, depending on $A$ but not on $M$ in \eqref{eq:P_decomposition}, and $B^{\theta^\star}$ is a function satisfying 
$
\sup_{t\leq T} \int_0^t(B^{\theta^\star}(t,s))^2ds<\infty,  
$ 
see \eqref{eq:aB} for the precise definition.  
We emphasise the dependence of $u^{\theta^\star} $ in \eqref{eq:optimalcontrol_monotone_str}  by writing
  $$u^{\theta^\star} = 
\textsf{Greedy}\left( A, \theta^\star
\right).$$
{ 
  \paragraph{A note about observables.}
Recall that the visible price process $S^u$ was introduced in \eqref{def:S_str}. In addition to this observable the agent clearly knows her own trading rate $u$, which impacts $S^u$. Recall that the fundamental price process $P$ was defined in \eqref{eq:P_decomposition}. While $P$ is unobserved by the trader, it is a common practice that the short term price predicting signal $A$ (also called \emph{alpha}) is an observable, typically obtained from limit order books real-time data. We briefly survey some well known examples for such signals which impact the price at different time scales. In Section 4 of \cite{Lehalle-Neum18} a detailed statistical analysis of the limit order book imbalance signal was performed. The effect of this signal on future price moves was demonstrated in time intervals of the $10$ next trades. The usage of this signal by high frequency proprietary traders was also proved statistically. The order flow imbalance signal has been extensively studied in the literature, in particular the correlation between the current order flow and the future price move in 10 seconds intervals was studied by \citet{cont14}. More examples of observed trading signals which are used in optimal execution can be found in a practitioners presentation by Robert Almgren \cite{almg-pres}. 
In reality the agent also determines the penalty parameters $\phi, \varrho$ in the quadratic costs \eqref{def:objective-str}, however the parameters  $\theta^\star = (\lam^*,G^*)$ are unknown and are subject to estimation in this paper. 
}
  \paragraph{Optimal liquidation with unknown price impacts.}

 In this work, we consider an agent  who 
  repeatedly liquidates the risky asset in \eqref{def:S_str}
  without knowing the price impact coefficient $\theta^\star$.
    This is often referred to as the episodic  (also known as reset or restart) learning 
framework in the reinforcement learning literature. 
  The agent will
 improve  her knowledge of   $\theta^\star$ through successive learning episodes,
 while  simultaneously optimise the objective \eqref{def:objective-str}.
 In reality the agent knows 
the dynamics of $A$ in \eqref{eq:P_decomposition}, the form of \eqref{def:S_str} (excluding the coefficient $\theta^\star$) and the penalty parameters $\phi, \varrho$ in the quadratic costs \eqref{def:objective-str}. We will therefore assume that these are known features of the model in the following.  
For each episode, the agent observes (a realisation of) the price $S$ and the signal  $A$, but not the      noise $M$.  
Additional   regularity properties of 
  $\theta^\star$
will be assumed in order to optimise the    learning  algorithm    (see Assumptions 
\ref{assume:kernel_assumption} 
and \ref{assumption:theta_bound}).

Mathematically, the learning problem is described as follows. 
Let $(\Omega, \mathcal{F},\mathbb{P})$
be a probability space,
let 
$(A^m)_{m\in \mathbb{N}}$
and $(M^m)_{m\in \mathbb{N}}$   be   mutually  independent copies of $A$ and $M$
 on $(\Omega, \mathcal{F},\mathbb{P})$,
respectively,
and for each $m\in \mathbb{N}$,
let  $P^m=A^m+M^m$.
Here $A^m$ and 
$M^m$ 
correspond  to the observed signal 
and 
unobserved martingale noise
for the $m$-th learning episode, respectively.
For each episode, the agent   interacts with \eqref{def:S_str} by  
choosing  controls   that are adapted to available observations.
These   
 admissible controls  
 and observation filtrations
are defined recursively as follows. 
The   observation before the first episode is given by 
 the $\sigma$-algebra 
$\cF_0=\cN$,
where 
$\cN$ is  the $\sigma$-algebra 
 generated by the $\sP$-null set. 
For the $m$-th  episode with $m\in \sN$,
taking the   $\sigma$-algebra $\cF_{m-1}$,
the agent executes a square-integrable control $u^m$  
that is progressively measurable with respect to the filtration 
$(\cG^m_t)_{t\in [0,T]}$
with 
$\cG^m_t\coloneqq \cF_{m-1}\vee \sigma\{A^m_s\mid  s\in [0,t]\}$,
and  observes a trajectory of the   price process 
$S^{m}$   governed by  the following dynamics (cf.~\eqref{def:S_str}): 
\begin{equation} \label{def:S_m}
S^{m}_{t} = A^m_{t}+M^m_t - \lam^\star u^m_t -     \int_0^t G^\star(t-s)u^m_s ds. 
\end{equation}
The  available information for the agent before the $(m+1)$-th episode 
is   
$\cF_m\coloneqq \cF_{m-1} \vee \sigma\{S^{m}_{t} , A^m_t\mid  t\in [0,T]\}$.

To measure the performance of 
the  controls $(u^m)_{m\in \mathbb{N}}$
(also referred to as an learning algorithm)
 in this setting, one widely adopted criteria is the  regret of learning  \citep{guo2023reinforcement, basei2022logarithmic, gao2022logarithmic,gao2022square}:
 for each $N\in \mathbb{N}$, the regret   of learning up to $N$-th episode is given by
\begin{equation}\label{eq:regret_def}
 R(N) = \sum_{m=1}^N \big(   
 J^{\theta^\star}(u^{\theta^\star}) -J^{\theta^\star}(u^m)\big),
 \end{equation}
  where $ J^{\theta^\star}(u^{\theta^\star})$ is the  optimal value  that agent can achieve knowing the parameter $\theta^\star$,
  and $J^{\theta^\star}(u^m) $ is the 
  expected  performance of the control $u^m$ for the $m$-episode\footnotemark
   \footnotetext{With a slight abuse of notation, we
 denote by $J^{\theta^\star}(\cdot)$
  the performance functional for all  episodes,
  without specifying its   dependence on $m$. 
It is possible as  $P^m$ is independent of $\cF_{m-1}$. 
  }:
\begin{equation} \label{def:objective-m}
\begin{aligned}
& J^{\theta^\star}(u^m) \coloneqq \mathbb{E} \Bigg[
 \int_0^T S^m_t  u^m_t dt  + Q_T^{u^m} P^m_T   -\phi \int_0^T (Q_t^{u^m})^2 dt - \varrho (Q_T^{u^m})^2
 \,\bigg\vert\, \cF_{m-1} \Bigg],
\end{aligned}
\end{equation}
with $Q^{u^m}_t=q-\int_0^t u^m_s d s$ being the corresponding inventory (cf.~\eqref{def:Q}). 
The expectation in \eqref{def:objective-m} is only taken with respect to   $P^m$, and
hence 
$ J^{\theta^\star}(u^m)   $ is a random variable depending on the realisations of the signals 
$(A^n)_{n=1}^{m-1}$
and noises 
$(M^n)_{n=1}^{m-1}$.
Intuitively, 
the regret $ R(N)$ characterises the cumulative expected loss from taking sub-optimal controls up to the $N$-th  episode. 
Agent's aim is to construct a learning algorithm for which the regret
$  R(N)$ 
 grows sublinearly in $N$   in high probability.

Note that  the above setting 
assumes the algorithm runs indefinitely without a prescribed maximal  number of  learning episodes. 
The agent 
will then
derive an \emph{anytime} learning algorithm (see e.g., \cite{lattimore2020bandit,szpruch2021exploration}), i.e., an algorithm whose implementation does
not require advance knowledge of the algorithm termination time and whose performance
guarantee holds for all learning episodes;
see Remark \ref{rmk:phased_based_algorithm} for more details.

 \subsection{A least-squares estimator and its convergence rate}
\label{sec:lse}

In this section we derive the identifiability of  $\theta^\star=(\lambda^\star, G^\star)$ under suitable exploratory strategies. 
We propose a regularised least-squares estimator based on observed trajectories and analyse its finite sample accuracy.   
The estimator will be employed in Section \ref{reg-alg-sec}  to design a regret optimal learning algorithm for \eqref{def:objective-str}. 
By an abuse of notation,  
we will index the observed trajectories for the 
estimator by $m$, 
which   is typically different from 
the number of learning episodes   in Section   \ref{reg-alg-sec}.

More precisely,  let  $A$ and $M$ the    processes
   in \eqref{def:S_str},
and   let   $(A^m)_{m\in \mathbb{N}}$ and  $(M^m)_{m\in \mathbb{N}}$  be mutually  independent copies of 
  $A$ and $M$, respectively,  defined on
the probability space $(\Omega, \mathcal{F},\mathbb{P})$.
 The agent    executes a    trading strategy 
$u^e\in L^2([0,T],\mathbb{R})$, 
and 
estimates  $\theta^\star=(\lambda^\star, G^\star)$ using  the corresponding price  trajectories $(S^m,A^m)_{m\in \mathbb{N}}$,
where for all  $m\in \mathbb{N}$,
$(S^m,A^m)_{t\in [0,T]}$ satisfies 
for all $t\in [0,T]$,
\begin{align} \label{def:S_m_estimation}
\begin{split}
S^m_{t} 
&= A^m_{t}+M^m_t - \lam^\star u^e(t) -     \int_0^t G^\star(t-s)u^e(s) ds
\\
& = A^m_{t}+M^m_t - \lam^\star u^e(t) -   (\boldsymbol{u}^e G^\star)(t),
\end{split}
\end{align} 
with  
 $\boldsymbol{u}^e :  L^2([0,T],\mathbb{R})\to L^2([0,T],\mathbb{R}) $ being the integral   operator 
 defined by 
\begin{equation}
\label{eq:operator_u}
(\boldsymbol{u}^e f)(t)\coloneqq 
\int_0^t u^e({t-s})  f(s) d s,
\quad   f\in L^2([0,T],\mathbb{R}). 
\end{equation} 
Note that in \eqref{def:S_m_estimation}, $G^\star$ plays the role of an  unknown  function instead of a kernel. 
 
 The following regularity condition on $u^e$ is imposed for the identifiability of $\theta^\star$.
Recall that 
  $H^1([0,T],\sR)$ is  the space of absolute continuous functions $f:[0,T]\to 
\sR$   whose    derivative (which exists a.e.) belongs to   $L^2([0,T],\sR)$.
 
\begin{assumption}
\label{assum:learning}
 $u^e\in H^1([0,T],\sR)$ 
 and $u^e(0)\not =0$.
   
\end{assumption}
 
 \begin{remark} 
Assumption \ref{assum:learning} holds for any nonzero constant strategy or classical 
 trading strategies in the Almgren--Chriss framework (see e.g. \cite[Chapter 6]{cartea15book}). 
 Unfortunately, the trajectories of the greedy strategy $u^{\theta^\star}$ in \eqref{eq:optimalcontrol_monotone_str} may not  satisfy Assumption \ref{assum:learning}.
 Indeed, $u^{\theta^\star}_0$ could be zero,  due to the randomness of the signal process $A$. Moreover,   the time regularity of   $u^{\theta^\star}$ relies on   the   regularity of $a_t$ and $B(t,\cdot)$  with respect to $t$, which subsequently depends on   
 the path regularity 
 of the conditional expectations
 of the signal $A$ (cf.~\eqref{eq:aB}). 
 Even for the special case 
  with   $A\equiv 0$, it is still   challenging to obtain explicit conditions for the differentiability of $t\mapsto B(t,\cdot)$ and $t\mapsto a_t$ to ensure that   $u^{\theta^\star}\in H^1([0,T],\sR)$. 
 
 \end{remark} 
 
Under Assumption \ref{assum:learning}, 
 the operator 
  $\boldsymbol{u}^e:   L^2([0,T],\sR)\to   L^2([0,T],\sR) $ is injective
as shown in   Lemma  \ref{lemma:u_operator_property}.
This indicates that $\theta^\star$  can be uniquely identified
based on sufficiently many  trajectories $(S^m,A^m)_{m\in \mathbb{N}}$.
In the sequel, we propose  a regularised least-squares estimator for $\theta^\star$ and analyse its finite sample accuracy. 

 By   \eqref{def:S_str} and $\sE[M_0]=0$,  
 $\lambda^\star u^e(0)=-\sE[S_0-A_0]$. Replacing the expectation by an empirical mean 
yields  the following estimation for $\lambda^\star$: 

\begin{equation}
\label{eq:lse_lambda}
\lambda^N \coloneqq -\frac{1}{Nu^e(0)}\sum_{m=1}^N (S^m_0-A^m_0), \quad \textrm{for } N\in \sN, 
\end{equation}
 which is well-defined as $u^e(0)\not =0$. 
Given the estimators 
$(\lambda^N)_{N\in \sN}$,
we  then introduce   a sequence of projected least-squares estimators
for   the kernel    $G^\star$.
To this end,  
let  
$\mathscr{P}_{[0,T]}$ be the collection of all partitions of $[0,T]$,
and 
let $(\pi_N)_{N\in \sN} \subset \mathscr{P}_{[0,T]}$ 
be  such that 
 $\pi_N=\{0=t^{(N)}_0<\cdots<t^{(N)}_N=T\}$ for all $N\in \sN$
and 
 $\lim_{N\to \infty}|\pi_N|=0$, where 
 $|\pi_N| \coloneqq \max_{i=0,\ldots, N-1}(t^{(N)}_{i+1}-t^{(N)}_i)$ is the mesh size of $\pi_N$. 
For each $N\in \sN$, let $V_N $ be the space of 
piecewise constant functions on  $\pi_N$: 
\begin{align} 
\label{eq:piecewise_constant}
\begin{split}
V_N
&=\left\{ f\in L^2([0,T],\sR)  \,\Big\vert \,
f_t=\sum_{i=0}^{N-1} f_{t_i} \mathds 1_{[t^{(N)}_i,t^{(N)}_{i+1})}(t),
\;
\textnormal{for all $t\in [0,T]$}\right\}.
\end{split}
\end{align}
Then for each 
   regularising weight   $\tau_N>0$,    consider 
   minimising the following  $L^2$-regularised 
   least-squares estimation error over $V_N$ (cf.~\eqref{def:S_m_estimation}): 
   \begin{align}\label{eq:lse_G}
\begin{split}
G^N 
&\coloneqq 
 \argmin_{G\in  V_N}
 \left(
 \frac{1}{N}\sum_{m=1}^N    \|S^m 
-A^m
 +\lambda^N u^e
 +  \boldsymbol{u}^e G\|_{L^2([0,T])} ^2 
+\tau_N \|G\|_{L^2([0,T])}^2
 \right),
 \end{split}
 \end{align} 
which is derived by replacing $\lambda^\star$ in \eqref{def:S_m_estimation}
with $\lambda^N$. 
As $\tau_N>0$, it is easy to see that the quadratic functional in \eqref{eq:lse_G}
has a unique minimum and hence 
 $ {G}^N $ 
is well-defined.
\begin{remark} 
Here, we take  
 $(V_N)_{N\in \sN}$ to be spaces of piecewise constant functions
  for the clarity of presentation, 
 but   the estimator \eqref{eq:lse_G} and its convergence  analysis   can be   extended to any subspaces 
  $(V_N)_{N\in \sN}$   of  $L^2([0,T],\sR)$ such that
 $\ol{\bigcup_{N=1}^\infty V_N} =L^2([0,T],\sR)$; 
 see Section    \ref{reg-alg-sec} for details. 
 \end{remark} 
For notational simplicity, we write $\theta^N=(\lambda^N,G^N)$ 
in \eqref{eq:lse_lambda} and \eqref{eq:lse_G} as 
\begin{equation}
\label{eq:lse_abbreviation} 
\theta^N = 
\textsf{LSE}\left(
(S^m,A^m)_{1\le m\le N}, \tau_N, \pi_N
\right),
\end{equation}
which emphasises  the dependence    on the data $(S^m,A^m)_{1\le m\le N}$, the regularising weight $\tau_N$
and the mesh size of $\pi_N$. 

\begin{remark} \label{rem-t} 
The fact that $\tau_N>0$ is critical for the well-posedness of \eqref{eq:lse_G}. 
Indeed, 
 consider  $\tau_N=0$, $V_N = L^2([0,T],\sR)$ and  $u^e\equiv 1$. 
Then  \eqref{def:S_m_estimation} and  \eqref{eq:lse_G} suggest  that 
\begin{align}
\label{eq:lse_tau=0}
\begin{split}
 G^N(t)   &= -\frac{1}{N}\sum_{m=1}^N (d S^m_t-d A^m_t)
= -\frac{1}{N}\sum_{m=1}^N d M^m_t +   G^\star(t) ,
\quad \forall t\in [0,T].
\end{split}
\end{align}
As a non-constant continuous martingale has infinite variation,  $G^N \in  L^2([0,T],\sR)$ satisfying \eqref{eq:lse_tau=0} does not exist in general. 

The above observation also indicates that 
a proper scaling of      the regularising weight 
$\tau_N $ with respect to the sample size $N$ 
is crucial   for the smoothness and convergence   of $(G^N)_{N\in \sN}$.
Reducing the weight $\tau_N$ too fast essentially 
 fits the time fluctuation of $(M^m)_{m=1}^N$,
and hence  
  leads to an irregular estimate $G^N$. 
This is in contrast to   the regularised least-squares estimator 
  for parametric models  as in \cite{basei2022logarithmic}.
  The regularising weights  $(\tau_N)_{N\in \sN}$  therein can be chosen  as any vanishing sequence 
  such that $\lim\sup_{N\to\infty}\sqrt{N}\tau_N<\infty $.
  
\end{remark} 

The   dependence of
$ \tau_N$ on $N$
 results in a slower convergence of 
$(G^N)_{N\in \sN}$ compared with the 
$\cO(N^{-1/2})$ order for
classical Monte-Carlo methods. 
It is known that 
the optimal choice  of   $(\tau_N)_{N\in \sN}$ 
 depends on the regularity of  the true kernel $G^\star$
 (also known as  the ``source condition" 
in inverse problem literature 
\cite{hofmann2006approximate, blanchard2018optimal}).
We impose the following  regularity conditions on the kernel 
$G^\star$.

\begin{assumption}
\label{assume:kernel_assumption}
$G^\star\in L^2([0,T],\sR)$ 
is differentiable a.e.,  and is one of the two types: 
\begin{enumerate}[(1)]
\item 
\label{label:regular_kernel}
Regular kernel:
$G^\star\in H^1([0,T],\sR)$ and $G^\star(T) \not =0$.
\item 
\label{label:singular_kernel}
Power-type singular kernel:
 there exists $\alpha\in (0,1/2)$, $t_0\in (0,T)$ 
 and $C_0>0$
 such that 
  $|\frac{d}{dt}{G}^\star(t)|\le C_0 t^{-\alpha-1}$
     for a.e.~$t\in (0,t_0)$,
  and  ${G}^\star\in H^1([t_0,T],\sR)$. 
\end{enumerate}
\end{assumption}
 
 \begin{remark}
\label{rmk:G_regularity_example}

Note that 
Assumption \ref{assume:kernel_assumption}\ref{label:regular_kernel}
requires   $G^\star$ to be continuous on $[0,T]$,
due to   Morrey's inequality. 
It is satisfied by 
  the exponential   kernel $G^\star(t)=e^{-\beta t}$ for $\beta>0$  proposed by 
\cite{Ob-Wan2005}
and the truncated power law kernel $G^\star(t)=  (c_0+t)^{-\beta}$  for some $\beta,c_0>0$   studied in 
\cite{bouchaud-gefen, gatheral2010no}. 
On the other hand, 
{Assumption \ref{assume:kernel_assumption}\ref{label:singular_kernel}
allows for a  power-type singularity   at $t=0$.
It includes as a special case 
 the power law kernel  $G^\star(t)= t^{-\beta}$, for any $0<\beta \leq \alpha$ proposed in \cite{gatheral2010no}. Note that the constant $\alpha$, which determines the range of power law singularities allowed in the kernel $G^\star$, is well documented in the literature, both by non-rigorous empirical estimates using historical data (see e.g. \citep{bouchaud-gefen,citeulike:13990970}) and from theoretical arguments (see \cite{gatheral2010no} and Chapter 13 of \cite{bouchaud_bonart_donier_gould_2018}).}
 \end{remark}

 In the sequel, we assume that the agent knows the precise type of $G^\star$ as in Assumption \ref{assume:kernel_assumption},
i.e., $G^\star$ is regular  on $[0,T]$
or 
admits a power-type singularity at zero  
with   known component $\alpha$.
This allows for 
 specifying  the  precise  decay rate   of 
$(\tau_N)_{N\in \sN}$
in \eqref{eq:lse_G}
and  then
establishing the convergence rate of $(G^N)_{N\in \sN}$.
 To quantify the   convergence rate  
of   $(\lambda^N,G^N)_{N\in \sN}$
 in high probability, 
we impose the   following concentration condition on the martingale  process $M$.

 \begin{assumption}
\label{assum:concentration_M}

There exists  $C_M>0$ 
such that for all $N\in \sN$ and $ {\eta}>0$,
$$
\sP\Bigg(
\left|
\frac{1}{N}\sum_{m=1}^N M_0^m\right|^2
+\left\|\frac{1}{N}\sum_{m=1}^N M^m \right\|^2_{L^2([0,T])} 
\ge C^2_M (\log(2 {\eta}^{-1}))^2 N^{-1}\Bigg)
\le  {\eta}.
$$
\end{assumption}  

{
The following lemma provides a sufficient condition for Assumption \ref{assum:concentration_M}.

\begin{lemma}
\label{lemma:bernstein}
  There exists $L,\sigma>0$ such that for all $p\ge 2$,
$
\sE[(|M_0|^2 +\|M\|^2_{L^2([0,T])})^{p/2}]\le \frac{1}{2}p!\sigma^2 L^{p-2}$.  Then Assumption \ref{assum:concentration_M} holds with $C_M=2(L+\sigma)$. 
\end{lemma}

The proof of Lemma \ref{lemma:bernstein} follows by applying  \cite[Proposition A.1]{blanchard2018optimal} to the  random variable $Z\coloneqq (M_0,M)$ taking values  in the Hilbert space $\sR\times L^2([0,T])$.
The moment condition   in Lemma \ref{lemma:bernstein} is commonly referred to as a Bernstein-type assumption. It is often imposed on the observation noise distribution in the statistical inverse problem literature for conducting complexity analysis  \cite{blanchard2018optimal, Rastogi2020convergence}. For martingales   given by stochastic integrals 
with respect to Brownian motions or   Poisson   measures,
this moment  condition  can be verified by 
 Burkholder's inequality as in \cite{guo2023reinforcement}. 
In the sequel, we will directly work with Assumption \ref{assum:concentration_M}, which is   more general than the Bernstein assumption, 
and is sufficient  for obtaining the optimal convergence  rate of 
\eqref{eq:lse_abbreviation} in high probability. 
}

Under Assumptions \ref{assume:kernel_assumption} and \ref{assum:concentration_M},
the following theorem chooses the optimal regularising   weights $(\tau_N)_{N\in \sN}$ and 
mesh sizes $(|\pi_N|)_{N\in\sN}$, 
and quantifies  the convergence rate of  $(\lambda^N,G^N)_{N\in \sN}$ in high probability.  
It follows as a special case 
of Theorem \ref{thm:estimate_error_general} in  
Section \ref{sec-est}.

   \begin{theorem}
  \label{thm:estimate_error_high_probability}
  Suppose that Assumptions \ref{assum:learning} 
  and \ref{assum:concentration_M}
  hold. Let $C\ge 1$. 
  \begin{enumerate}[(1)]
  \item 
  If Assumption \ref{assume:kernel_assumption}\ref{label:regular_kernel} holds,
    then for all $\eta \in (0,1) $, 
       by setting  
$(\tau_N)_{N\in \sN}\subset (0,\infty) $ and $(\pi_N)_{N\in \sN} \subset \mathscr{P}_{[0,T]}$ such that 
for all $N\in \sN$,
 \begin{equation}\label{eq:lse_parameter_regular}
  \textstyle
  \frac{1}{C} \left(\frac{  \log(  {\eta}^{-1})+ \log N  }{\sqrt{N}}
\right)^{\frac{4 }{3}} \le \tau_N \le   C \left(\frac{  \log(  {\eta}^{-1})+ \log N  }{\sqrt{N}}
\right)^{\frac{4 }{3}},
\quad 
|\pi_N| \le C {\tau_N}^{\frac{1 }{2}}, 
\end{equation}
   it holds  with probability at least $1-\eta$ that,
for all    $N\in \sN\cap [2,\infty)$, 
\begin{equation}
\label{eq:lse_conv_regular}
 \textstyle
  |\lambda^N-\lambda^\star| 
  \le C'  \left( \frac{  \log(  {\eta}^{-1})+ \log N  }{\sqrt{N}} \right),
 \quad
 \|G^N -G^\star\|_{L^2([0,T])} 
 \le C'  \left( \frac{  \log(  {\eta}^{-1})+\log N  }{\sqrt{N}} \right)^{\frac{1}{3}}.
\end{equation}

  \item 
  If Assumption \ref{assume:kernel_assumption}\ref{label:singular_kernel} holds,
    then for all $\eta\in (0,1)$, 
       by setting  
$(\tau_N)_{N\in \sN}\subset (0,\infty) $ and $(\pi_N)_{N\in \sN} \subset \mathscr{P}_{[0,T]}$ such that 
for all $N\in \sN$,
  \begin{equation}
  \label{eq:lse_parameter_singular}
  \textstyle
  \frac{1}{C} \left(\frac{  \log(  {\eta}^{-1})+ \log N   }{\sqrt{N}}
\right)^{\frac{4}{3-2\alpha}} \le \tau_N \le   C \left(\frac{  \log(  {\eta}^{-1})+ \log N  }{\sqrt{N}}
\right)^{\frac{4}{3-2\alpha}},
\quad 
|\pi_N| \le C {\tau_N}^{\frac{1 }{2}}, 
\end{equation}
   it holds  with probability at least $1-\eta$ that,
   for all    $N\in \sN\cap [2,\infty)$, 
\begin{equation}
\label{eq:lse_conv_singular}
 \textstyle
    |\lambda^N-\lambda^\star| 
  \le C'  \left( \frac{  \log(  {\eta}^{-1})+\log N   }{\sqrt{N}} \right),
 \quad
 \|G^N -G^\star\|_{L^2([0,T])} 
 \le C'  \left( \frac{  \log(  {\eta}^{-1})+ \log N   }{\sqrt{N}} \right)^{\frac{1-2\alpha }{3-2\alpha}}.
\end{equation}
\end{enumerate}
  The constant $C'>0$ appearing in 
  \eqref{eq:lse_conv_regular}
and 
\eqref{eq:lse_conv_singular}
is  independent of $\eta$ and $N$. 
  \end{theorem}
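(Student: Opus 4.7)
The plan is to recast \eqref{eq:lse_G} as a stochastic inverse problem for the injective Volterra operator $\boldsymbol{u}^e$ and transfer Tikhonov regularisation rates from the deterministic-observation literature to this setting. Evaluating \eqref{def:S_m_estimation} at $t=0$ immediately decouples the $\lambda$-estimation: $S^m_0-A^m_0=M^m_0-\lambda^\star u^e(0)$ yields $\lambda^N-\lambda^\star=-\frac{1}{Nu^e(0)}\sum_{m=1}^N M^m_0$, so Assumption \ref{assum:concentration_M} together with $u^e(0)\neq 0$ from Assumption \ref{assum:learning} gives the $\log(\eta^{-1})N^{-1/2}$ bound for $\lambda^N$ claimed in both \eqref{eq:lse_conv_regular} and \eqref{eq:lse_conv_singular}.

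For $G^N$, let $\boldsymbol{A}\coloneqq(\boldsymbol{u}^e)^*\boldsymbol{u}^e$, $\bar M^N\coloneqq\frac{1}{N}\sum_{m=1}^N M^m$, and let $P_N$ denote the $L^2$-orthogonal projection onto $V_N$. Using $S^m-A^m+\lambda^N u^e=M^m+(\lambda^N-\lambda^\star)u^e-\boldsymbol{u}^e G^\star$, the first-order condition for \eqref{eq:lse_G} reads
\begin{equation*}
(\tau_N I+P_N\boldsymbol{A}P_N)G^N=P_N\boldsymbol{A}G^\star-P_N(\boldsymbol{u}^e)^*\bar M^N-(\lambda^N-\lambda^\star)P_N(\boldsymbol{u}^e)^*u^e.
\end{equation*}
Introducing the noise-free projected Tikhonov estimator $G^{\mathrm{det}}_N\coloneqq(\tau_N I+P_N\boldsymbol{A}P_N)^{-1}P_N\boldsymbol{A}G^\star$, I would split $G^N-G^\star=(G^N-G^{\mathrm{det}}_N)+(G^{\mathrm{det}}_N-G^\star)$. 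The stochastic part is controlled through the classical resolvent inequality $\|(\tau I+P_N\boldsymbol{A}P_N)^{-1}P_N(\boldsymbol{u}^e)^*\|_{\mathrm{op}}\le(2\sqrt{\tau})^{-1}$ combined with Assumption \ref{assum:concentration_M} and the $\lambda^N$-bound from above, producing $\|G^N-G^{\mathrm{det}}_N\|_{L^2}\lesssim\log(\eta^{-1})N^{-1/2}\tau_N^{-1/2}$.

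The remaining bias $\|G^{\mathrm{det}}_N-G^\star\|_{L^2}$ decomposes into a projection error $\|(I-P_N)G^\star\|_{L^2}$, bounded by $|\pi_N|$ raised to a power determined by the local regularity of $G^\star$, and the pure Tikhonov bias $\|(\tau_N I+\boldsymbol{A})^{-1}\tau_N G^\star\|_{L^2}$, for which one invokes a source condition $G^\star\in\mathrm{Range}(\boldsymbol{A}^\nu)$ that produces a contribution of order $\tau_N^\nu$. Establishing this source condition is the principal obstacle: it requires a spectral/smoothing analysis of the Volterra convolution $\boldsymbol{u}^e$ on $L^2([0,T],\sR)$ that relates its singular-value decay to the Sobolev regularity of $G^\star$ and to its boundary behaviour at $t=0$, producing $\nu=1/4$ in the regular case of Assumption \ref{assume:kernel_assumption}\ref{label:regular_kernel} (where the non-vanishing boundary value $G^\star(T)\neq 0$ also enters) and $\nu=(1-2\alpha)/4$ in the singular case \ref{label:singular_kernel}. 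Given these source exponents, balancing the stochastic term $\tau_N^{-1/2}N^{-1/2}$ against the bias $\tau_N^\nu$ gives the optimal scaling $\tau_N\asymp N^{-1/(2\nu+1)}$, which reproduces \eqref{eq:lse_parameter_regular}--\eqref{eq:lse_parameter_singular} and the announced rates \eqref{eq:lse_conv_regular}--\eqref{eq:lse_conv_singular}; the $\log N$ factor arises from a union bound making the high-probability estimate uniform over $N\ge 2$, and the mesh restriction $|\pi_N|\le C\tau_N^{1/2}$ forces the projection error to be dominated by the regularisation bias.
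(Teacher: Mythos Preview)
Your overall strategy—recasting \eqref{eq:lse_G} as Tikhonov regularisation for the Volterra operator $\boldsymbol{u}^e$, splitting into a stochastic term controlled via Assumption \ref{assum:concentration_M} and a deterministic bias, and balancing against $\tau_N$—is exactly the paper's route. The $\lambda^N$ argument and the union-bound treatment of $\log N$ are also the same.

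The genuine gap is the source condition. You propose to establish $G^\star\in\mathrm{Range}(\boldsymbol{A}^\nu)$ with $\nu=1/4$ (regular case) or $\nu=(1-2\alpha)/4$ (singular case) and then invoke the classical bias bound $\|(\tau I+\boldsymbol{A})^{-1}\tau G^\star\|\le C\tau^\nu$. But these exact H\"older exponents are \emph{not} attained. Lemma \ref{lemma:u_operator_property} shows $\mathscr{R}((\boldsymbol{u}^e)^*)=\{f\in H^1:f(T)=0\}$, so the assumption $G^\star(T)\neq 0$ that you cite as ``entering'' the argument is precisely what forces $G^\star\notin\mathscr{R}((\boldsymbol{u}^e)^*)=\mathscr{R}(\boldsymbol{A}^{1/2})$; it is an obstruction, not a regularity aid. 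More quantitatively, Proposition \ref{prop-opt} computes for $u^e\equiv 1$, $G^\star(t)=t^{-\alpha}$ that $\sum_n\sigma_n^{-2\kappa}\langle G^\star,\mathfrak{u}_n\rangle^2<\infty$ holds if and only if $\kappa<1/2-\alpha$, i.e.\ $G^\star\in\mathrm{Range}(\boldsymbol{A}^\nu)$ only for $\nu$ \emph{strictly} below your target values. Using any admissible $\nu<1/4$ would yield a strictly worse rate than \eqref{eq:lse_conv_regular}.

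The paper's remedy is to replace the exact source condition by the \emph{approximate} one: it works with the distance function $\mathscr{D}(R)=\inf\{\|G^\star-(\boldsymbol{u}^e)^*v\|_{L^2}:\|v\|_{L^2}\le R\}$ and proves by explicit construction (Theorem \ref{thm:convergece_specifici_decay}) that $\mathscr{D}(R)\lesssim R^{-1}$ in the regular case and $\mathscr{D}(R)\lesssim R^{-(1-2\alpha)/(1+2\alpha)}$ in the singular case. Feeding this into the error bound of Lemma \ref{lemma:general_error}, which contains the additional free parameter $R$, and optimising over both $R$ and $\tau$ recovers the endpoint rate that your H\"older approach just misses. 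A secondary correction: the relevant discretisation error is $\gamma_N=\|(I-P_N)(\boldsymbol{u}^e)^*\|_{\mathrm{op}}$ (uniformly $O(|\pi_N|)$ by \eqref{eq:u^e_derivative}, independently of $G^\star$'s regularity), not $\|(I-P_N)G^\star\|_{L^2}$; this is what makes the mesh condition $|\pi_N|\le C\tau_N^{1/2}$ sufficient even in the singular case where $G^\star\notin H^1$.
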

  The proof of Theorem \ref{thm:estimate_error_high_probability} is given in Section \ref{sec-est}. 
  
 \begin{remark} 
 \label{rem-opt}
 Theorem \ref{thm:estimate_error_high_probability}
 is proved by 
   first  interpreting  
 \eqref{eq:lse_G}
 as Tikhonov
regularisation of   \eqref{def:S_m_estimation},
and then adapting  existing theoretical frameworks of 
Tikhonov regularisation   with deterministic observations
  (see  \cite{groetsch1984theory, hofmann2006approximate} and references therein) to the present setting with random observations.
 The crucial step in our argument is to quantify the distance between the true kernel $G^\star$
   to the range of the operator 
  $\boldsymbol{u}^e$ in \eqref{eq:operator_u}, which is 
  characterised by the  behavior  of a distance function  $R\mapsto \mathscr{D}(R)$ 
  for large $R$
(cf.~\eqref{eq:distance_function}).
We prove 
in Theorem \ref{thm:convergece_specifici_decay} that 
the function $\mathscr{D}$ decays as a power function as $R\to \infty$, whose exponent depends explicitly on Assumption \ref{assume:kernel_assumption}. 
 To the best of our knowledge, 
 such a power-type decay of $\mathscr D$ has only been established 
 in the literature for $u^e\equiv 1 $ and $G^\star \equiv 1$
 (see e.g., \cite{hofmann2006approximate}).
 We further prove in Proposition \ref{prop-opt} that the  exponents of these power functions   
are optimal, i.e., 
they   are   the maximal power-type decay rates 
of $\mathscr D$ under 
Assumption \ref{assume:kernel_assumption}. Specifically, 
  Proposition \ref{prop-opt}  considers      the power law kernel $G^\star(t)= t^{-\alpha}$,
  which satisfies    Assumption \ref{assume:kernel_assumption}\ref{label:regular_kernel} if $\alpha=0$,
  and       Assumption \ref{assume:kernel_assumption}\ref{label:singular_kernel} if $\alpha\in (0,1/2)$.

\end{remark}

{
\begin{remark}
\label{rmk:rate_optimal}
 The convergence  rates given in Theorem  \ref{thm:estimate_error_high_probability} are optimal (up to a logarithmic order in $N$) under Assumption \ref{assume:kernel_assumption}. 
In order to see this, note that the estimation of $G^\star$ can be interpreted as an inverse problem  with the forward operator 
  $\boldsymbol{u}^e$
  based on noisy observations, 
  where the 
   noise level is of the magnitude $\cO(1/\sqrt{N})$ with high probability; see Theorem \ref{thm:estimate_error_general}.
   By the order optimality result  \cite[Proposition 3.15]{engl1996regularization}, under the source condition that $G^\star =((\boldsymbol{u}^e)^*\boldsymbol{u}^e)^\mu w $ for some $w\in L^2([0,T];\sR)$ and $\mu>0$, no estimation algorithm can recover $G^\star$  with a rate faster than $\cO(N^{-\frac{\mu}{2\mu+1}})$ as $N\to \infty$.  
   We then characterise the worst $\mu$ for a kernel $G^\star$ 
   satisfying Assumption \ref{assume:kernel_assumption}. Recall that by \cite[Proposition 3.13]{engl1996regularization}, $G^\star$ satisfies the  source condition with $\mu>0$ if and only if 
   $\sum_{n=1}^\infty \frac{1}{\sigma_n^{4\mu  }}\langle G^\star,\mathfrak{u}_n \rangle^2_{L^2([0,T])}<\infty,
   $ 
   where 
$\sigma_1\ge \sigma_2\ge \ldots >0$  is  the  singular values  of   $\boldsymbol{u}^e $,
and $(\mathfrak{u}_n)_{n\in \sN} $ is the orthonormal system  of eigenfunctions of $(\boldsymbol{u}^e)^*\boldsymbol{u}^e$. 
 Now   consider  the power law kernel $G^\star(t)= t^{-\alpha}$, which   satisfies    Assumption \ref{assume:kernel_assumption}\ref{label:regular_kernel} if $\alpha=0$,
  and       Assumption \ref{assume:kernel_assumption}\ref{label:singular_kernel} if $\alpha\in (0,1/2)$.
 By \eqref{eq:singular_sum}
 and the above criterion, 
 the power law kernel 
   $G^\star(t)= t^{-\alpha}$ satisfies the source condition for   
  $\mu< \frac{1}{2}(\frac{1}{2}-\alpha)$ (but not $\mu= \frac{1}{2}(\frac{1}{2}-\alpha)$).
This suggests that under  Assumption \ref{assume:kernel_assumption}, the optimal rate is not greater  than 
   $\cO(N^{-\frac{1}{2}\frac{1-2\alpha }{3-2\alpha}})$ as $N\to \infty$.
This lower   rate of convergence matches the  
 upper   rate of convergence    in Theorem \ref{thm:estimate_error_high_probability}
   (up to a logarithmic term), 
which indicates the parameter choices in Theorem \ref{thm:estimate_error_high_probability} are order optimal under  Assumption \ref{assume:kernel_assumption}.

Note that the convergence rates in Theorem \ref{thm:estimate_error_high_probability} are better than the lower rates for general statistical inverse problems with random input and output variables \cite{blanchard2018optimal, Rastogi2020convergence}.
This is due to the usage   of  a deterministic trading strategy $u^e$ in \eqref{def:S_m_estimation}, which allows for applying deterministic inverse problem theory to achieve an improved rate.
 
\end{remark}
}

 \subsection{Phased-based learning algorithm and   its regret bound} \label{reg-alg-sec}

By leveraging Theorem \ref{thm:estimate_error_high_probability}, in
this section we propose a phased-based algorithm 
for learning \eqref{def:S_str}-\eqref{def:objective-str}.
The algorithm 
  alternates between exploration and exploitation phases, 
 and  achieves sublinear regrets with high probability.
%

\paragraph{Admissible estimated models.}

We first  introduce a class of estimated models based on which the greedy policies are constructed during the learning process. 
To facilitate the regret analysis, 
we assume that the agent knows the order of magnitude of the true parameter $\theta^\star$, from heuristic estimations using historical data (see \cite{bouchaud-gefen}, \cite[Chapter 13]{bouchaud_bonart_donier_gould_2018},   \cite[Chapter 6.2]{cartea15book} and \cite{ulrich_19} among others). Note that the constant $L$ which is defined below is a known parameter of the problem along with $\phi,\varrho$ in \eqref{def:objective-m}. 

\begin{assumption}
\label{assumption:theta_bound}
    There exists a known constant $L>0$ such that 
    $L^{-1}<\lambda^\star< L$
    and $\|G^{\star}\|_{L^2([0,T])}<L$.
\end{assumption}


\begin{definition} [Class of admissible parameters $ \Xi_{\eps}$]  \label{assum1:parameters}
  Let $L>0$ be the constant in Assumption \ref{assumption:theta_bound}.
For each   $\eps\in (0,L^{-1}/2)$,
   define
  $ \Xi_{\eps}$ 
to be  the set 
containing all   $(  \lambda, G) \in \sR\times L^2([0,T];\sR) $
such that
$L^{-1}\le   \lambda \le L$,
$ \|G\|_{L^2([0,T])} \le L$,
and  
\be \label{eps-pos-def}
\int_{0}^T\int_{0}^T G(|t-s|)  f(s)  f(t) dsdt 
\geq -\eps \|f\|^2_{L^2([0,T])}, \quad \textrm{for all }  f\in L^2([0,T],\sR). 
\ee
\end{definition}

\begin{remark}
Recall that 
by Theorem \ref{thm:estimate_error_high_probability},
the estimator \eqref{eq:lse_G} only approximates the true kernel 
$G^\star$ in the $L^2$ sense,
and hence may not be non-negative definite.
Thus, 
 Definition \ref{assum1:parameters} only 
 requires the estimated kernel $G$ to be nearly non-negative definite
relative to the estimated $\lambda$, as reflected by  \eqref{eps-pos-def} and $\eps\in (0,L^{-1}/2)$.
Since  
$\lambda^\star>0$ and 
$G^\star$ is 
  non-negative definite (see
 \eqref{pos-def_convolution}),
this condition can be satisfied 
by   estimated models  
with sufficiently many samples, 
as shown in Lemma \ref{lemma:initial_exploration}.  
\end{remark}

Definition \ref{assum1:parameters} ensures that 
the   greedy policy $u^\theta$ is well-defined 
 for any admissible model
$\theta\in \Xi_{\eps}$. 
 Moreover, Theorem \ref{thm:gap} shows that 
  the performance gap of the  
  greedy policy  of an estimated model depends quadratically on the model error.
 The proof of Theorem \ref{thm:gap} is given in Section \ref{sec-bnd}.
 
 \begin{theorem}\label{thm:gap}
 Let  $\eps\in (0,L^{-1}/2)$
 with $L>0$ as in Assumption \ref{assumption:theta_bound}.
For each $\theta \in \Xi_{\eps} $,
let $u^{\theta}=\textsf{Greedy}\left( A, \theta 
\right)$ be defined by \eqref{eq:optimalcontrol_monotone_str}. Then there exists a constant $C> 0$, depending on $L$ and $\eps$,
such that $$
 |J^{\theta^\star}(u^{\theta^\star}) -J^{\theta^\star}(u^{\theta}) |\leq C\left(|\lambda^{\star}-\lambda|^2  +\|G^{\star}-G\|^2_{L^2([0,T])} \right), \quad \textrm{for all } \theta,\theta' \in  \Xi_{\eps}.  
$$
\end{theorem}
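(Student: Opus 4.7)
The plan is to exploit the quadratic structure of $J^{\theta^\star}$ in $u$, combined with the first-order optimality of $u^{\theta^\star}$ and the Lipschitz stability of the greedy map promised by Proposition \ref{prop:u_lipschitz}. Reading off \eqref{def:S_str}--\eqref{def:objective-str}, the map $u\mapsto J^{\theta^\star}(u)$ is an inhomogeneous quadratic functional on the Hilbert space $\cA$: its pure quadratic part depends only on $\lambda^\star$, $G^\star$, $\phi$, $\varrho$, and $T$, while the affine and constant parts absorb $P$ and $q$. After symmetrising the transient-impact double integral, the quadratic part can be written as $\mathbb{E}[\langle \cH^{\theta^\star}u,u\rangle_{L^2([0,T])}]$, where $\cH^{\theta^\star}$ is a \emph{deterministic} symmetric bounded operator on $L^2([0,T],\sR)$. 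Young's inequality applied to the convolution against $G^\star$ together with Cauchy--Schwarz applied to the inventory integrals $\int_0^t u_s\,ds$ yields an operator-norm bound on $\cH^{\theta^\star}$ that depends only on $L$, $\phi$, $\varrho$, and $T$, and is in particular independent of the realisations of $A$ and $M$.

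Because $u^{\theta^\star}\in\cA$ is an unconstrained maximiser of the quadratic functional $J^{\theta^\star}$ on the linear space $\cA$, its Gateaux derivative vanishes at $u^{\theta^\star}$, and the exact Taylor expansion at $u^{\theta^\star}$ collapses to its Hessian term,
\begin{equation*}
J^{\theta^\star}(u^{\theta^\star})-J^{\theta^\star}(u^{\theta}) \;=\; -\,\mathbb{E}\bigl[\langle \cH^{\theta^\star}(u^\theta-u^{\theta^\star}),\,u^\theta-u^{\theta^\star}\rangle_{L^2([0,T])}\bigr].
\end{equation*}
Combined with the operator-norm bound of the previous paragraph, this gives
$|J^{\theta^\star}(u^{\theta^\star})-J^{\theta^\star}(u^{\theta})| \le C_1\,\mathbb{E}\bigl[\int_0^T (u^\theta_t-u^{\theta^\star}_t)^2 dt\bigr]$
with $C_1$ depending only on $L,\phi,\varrho,T$. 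Note that no (strong) concavity of $\cH^{\theta^\star}$ is needed here, only an upper bound on its operator norm.

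To close the argument, I will invoke Proposition \ref{prop:u_lipschitz}, which supplies a Lipschitz estimate
\begin{equation*}
\mathbb{E}\Bigl[\int_0^T(u^\theta_t-u^{\theta^\star}_t)^2\,dt\Bigr] \;\le\; C_2\bigl(|\lambda^\star-\lambda|^2+\|G^\star-G\|_{L^2([0,T])}^2\bigr),
\end{equation*}
uniformly for $\theta\in\Xi_{\eps}$, where $C_2$ depends on $L$ and $\eps$. Chaining the two bounds yields the claim with $C=C_1 C_2$.

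The main obstacle is Proposition \ref{prop:u_lipschitz} itself: establishing Lipschitz stability of $u^\theta$ with respect to the \emph{infinite-dimensional} parameter $\theta$ requires controlling the sensitivity of the operator-valued Riccati equation and of the associated infinite-dimensional BSDE that together characterise the Volterra optimiser \eqref{eq:optimalcontrol_monotone_str}, uniformly over $\Xi_\eps$. The nearly non-negative-definiteness condition \eqref{eps-pos-def} in the definition of $\Xi_\eps$ is precisely the ingredient that provides uniform coercivity of the control problem and makes these perturbation estimates possible in the absence of strong concavity. In this proof I treat Proposition \ref{prop:u_lipschitz} as a black box; the only remaining verification is that $\cH^{\theta^\star}$ is indeed independent of $A$ and $M$, which is immediate since the second variation of $J^{\theta^\star}$ removes all stochastic data.
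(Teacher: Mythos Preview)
Your proposal is correct and follows essentially the same route as the paper: both combine the quadratic structure of $J^{\theta^\star}$ together with the vanishing first variation at the optimiser (the paper's Proposition~\ref{lem-perfor}, which is exactly your Hessian identity after cancelling the Gateaux-derivative term) with the Lipschitz stability of $\theta\mapsto u^\theta$ from Proposition~\ref{prop:u_lipschitz}. The only cosmetic difference is that you phrase the first step abstractly via a bounded deterministic Hessian operator, whereas the paper expands $J^{\theta^\star}(u)-J^{\theta^\star}(u^{\theta^\star})$ explicitly and bounds each quadratic remainder term by hand.
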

 
\begin{remark}
\label{rmk: stability_estimate}
The performance gap   in Theorem \ref{thm:gap} relies   on the Lipschitz stability   of the optimal control $u^\theta$ with respect to the parameter $\theta$, in particular,    the kernel function $G$ (see Proposition \ref{prop:u_lipschitz}). 
In \cite{basei2022logarithmic,guo2023reinforcement,szpruch2021exploration},  
Lipschitz stability of
optimal controls has been derived 
for   finite-dimensional parametric control problems, 
where the parameter varies in a compact subset of a finite-dimensional   space. 
The stability analysis for 
Theorem \ref{thm:gap}
is more technically involved, 
as $G$ takes value in the infinite-dimensional space $L^2([0,T]; \sR)$,
and the control problem \eqref{def:objective-str} 
is non-Markovian due to the kernel $G$.
For instance, 
one can no longer 
prove the Lipschitz continuity of $u^\theta$
by  simply arguing a continuous differentiability  of   $\theta\mapsto u^\theta$  as in 
\cite{basei2022logarithmic,szpruch2021exploration},
since a bounded subset of  $L^2([0,T]; \sR)$
may not be compact. 

To overcome these difficulties,
we exploit an explicit representation of $u^\theta$ given in \eqref{eq:aB},
and establish   uniform   bounds  of  
  $a$ and $B$ in \eqref{eq:optimalcontrol_monotone_str} with suitable norms 
over all $\theta\in \Xi_{\eps}$. 
These  a-priori bounds further allow for proving the Lipschitz stability of the non-Markovian controls $u^\theta$. Note that the explicit form of $a$ and $B$ is given in \eqref{eq:aB}, and its main ingredients are operators and stochastic processes. 
We point out that as the running cost in  
\eqref{def:objective-str} is not strongly concave with respect to the control variable, 
the control problem is an  indefinite linear-quadratic control problem (see   
\cite[Chapter 6]{yong1999stochastic} and references therein).
Hence the   condition \eqref{eps-pos-def}
is essential for the well-definedness and the stability  of $u^\theta$. 

\end{remark}

%

\paragraph{Phased-based   algorithm and its regret.}
The algorithm goes as follows.
{The input of the algorithm includes $\eps>0$ which satisfies Definition \ref{assum1:parameters}, and a deterministic exploration strategy $u^e\in L^2([0,T],\sR)$ as in Assumption \ref{assum:learning}.}
The algorithm starts with an initial exploration phase, where the agent  exercises 
 $u^e$ for $\mathfrak{m}^e_0$ episodes,
 and forms an  
  estimate  $ {\theta}^0$  of $\theta^\star$
according to 
\eqref{eq:lse_abbreviation}: 
\begin{equation}
\label{eq:lse_initial} 
\theta^0 = 
\textsf{LSE}\left(
(S^m,A^m)_{1\le m\le \mathfrak{m}^e_0}, \tau_{\mathfrak{m}^e_0},
 \pi_{\mathfrak{m}^e_0}
\right).
\end{equation}
{Here $\mathfrak{m}^e_0 \in \sN $  is a prescribed  number 
  such that $ {\theta}^0$ is guaranteed to be  in  $ \Xi_{\eps}$
  (with high probability). Note that such initial exploration phase has a constant weight in the regret bounds, hence it has no impact on the results established in Theorem \ref{thm:regret_general_rate} and Corollary \ref{cor-reg-bnd}. One can alternatively use estimators based on historical datasets in order to get $ {\theta}^0$ which is in $ \Xi_{\eps}$. Some references for these preliminary estimates are available in \citep{bouchaud-gefen,citeulike:13990970} and in Chapter 13 of \cite{bouchaud_bonart_donier_gould_2018}.
Additional approach for getting $ {\theta}^0$ from historical trading data, uses the offline learning approach which was developed in \cite{off-prop}. Indeed a continuous version of Theorem 2.10 therein allows us to get a candidate for $G$ which satisfies the properties in Definition \ref{assum1:parameters}. Finally, as shown in \cite[Example 2.7]{GSS}, 
any  bounded, non-increasing and convex $G$  satisfies \eqref{eps-pos-def} with $\eps=0$. 
If one assumes the true kernel 
$G^\star$
  satisfies these shape constraints (as suggested by the empirical studies    in \cite{bouchaud-gefen, Ob-Wan2005}), then one can   enforce these constraints in the estimated kernels by minimising \eqref{eq:lse_G} over shaped constrained functions. This approach could potentially avoid the initial exploration phase and improve the sample complexity. 
  These shape-constrained estimators have been analysed for   discrete-time propagator models in \cite{off-prop}, and extending the analysis therein to   continuous-time propagator models is left for future work. } 

   After this initial exploration, the algorithm  then
   operates in cycles, and each cycle consists of exploitation and exploration  phases.
The exploitation phase of the $k$-th cycle, $k\in \sN$,
contains  $\mathfrak{n}(k)$ 
consecutive episodes for some prescribed $\mathfrak{n}(k)\in \sN$. 
At each exploitation episode,
the agent 
 executes the optimal strategy 
\eqref{eq:optimalcontrol_monotone_str}
defined using   the current estimate $ {\theta}^{k-1}$
and   the    signal trajectory observed in this   episode.  
  During the exploration phase of the $k$-th cycle, 
 the agent  exercises  the exploration strategy $u^e$   for one episode, and 
constructs an updated  estimate   $ {\theta}^{k}$  
by \eqref{eq:lse_abbreviation}
 using data from    previous exploration episodes:
\begin{equation}
\label{eq:lse_k_cycle} 
\theta^k = 
\textsf{LSE}\left(
(S^m,A^m)_{m\in \cE_k}, \tau_{\mathfrak{m}^e_0+k},
\pi_{\mathfrak{m}^e_0+k}
\right),
\end{equation}
 where 
 $\cE_k=\{1,\ldots, \mathfrak{m}^e_0\}\cup \{ 
 \mathfrak{m}^e_0 +\sum_{i=1}^j  \mathfrak{n}(i)+j \mid 1\le  j \le  k\}
$ is the indices   of  all    exploration episodes up to the $k$-th cycle.
This   parameter $\theta^k$ will be used in the exploitation phase of   the $(k+1)$-th cycle. 
The algorithm is summarised as follows.

\begin{algorithm}[H]
\label{alg:phased}
\DontPrintSemicolon
\SetAlgoLined

  \KwInput{ $ \eps>0$,
$u^e\in L^2([0,T],\sR)$,
   $\mathfrak{m}^e_0 \in \sN $ 
and 
$\mathfrak{n} : \sN \to \sN$.
  }

 {
Execute   $u^e$ for $\mathfrak{m}^e_0$ episodes,
and set $\theta^0\in \Xi_\eps $ as in \eqref{eq:lse_initial}.
 }\;

 \For{$k = 1, 2,\ldots$}
 {
 {
 $\mathfrak{L}(k-1)=\mathfrak{m}^e_0 +\sum_{i=1}^{k-1}  \mathfrak{n}(i)+k-1$. \Comment*[r]{last  episode's index}}
  \For{$m= \mathfrak{L}(k-1)+1,  \ldots,   \mathfrak{L}(k-1)+    \mathfrak{n}(k)$}
  {
	  {Execute the greedy strategy  
	  $u^m
	 = \textsf{Greedy}\left( A^m, \theta^{k-1}
\right)$. }\;

}

 {Execute    $u^e$ for one episode, and set $\theta^k\in \Xi_\eps $ as in \eqref{eq:lse_k_cycle}.}\;

%
 }
 \caption{Phased-based  learning algorithm}
\end{algorithm}

\begin{remark}
\label{rmk:phased_based_algorithm}
Algorithm \ref{alg:phased} is an anytime algorithm,
as it does not restrict   the maximum number of learning episodes
(see the last paragraph of Section \ref{sec:episodic_learning_setting}).
It
distributes the exploration episodes  over the whole learning process according to the   schedulers   $\mathfrak{m}^e_0   $ and
  $\mathfrak{n}$, which are chosen  to optimise the regret order   for all   episodes.
This should be in contrast to the setting 
where the  algorithm termination time  
  is fixed and known by the agent.
  In this case, the agent can put all   exploration episodes   at the beginning, 
  whose   number  depends explicitly on the prescribed     maximal episode number (see  \cite[Chapter 6]{ lattimore2020bandit}). 


 Compared with the algorithm in  \cite{szpruch2021exploration}, 
 Algorithm \ref{alg:phased} introduces an initial exploration step, and 
 updates   $(\theta^k)_{k\ge 0 }$  
using   trajectories    generated by a  fixed  exploration strategy $u^e$.
This allows for applying Theorem   \ref{thm:estimate_error_high_probability}
to ensure that, with high probabilities,  $(\theta^k)_{k\ge 0}$ stay in $\Xi_{\eps}$ 
(without an explicit projection
as in \cite{szpruch2021exploration})
and converge to $\theta^\star$ as $k\to \infty$. 
\end{remark}

The following theorem chooses 
the learning schedulers $\mathfrak{m}^e_0 \in \sN $ and 
$\mathfrak{n}: \sN \to \sN$
 such that Algorithm  \ref{alg:phased} achieves sublinear regrets 
 in high probability.
These hyper-parameters are optimised depending on the   convergence rate   
of the regularised least-squares estimator  \eqref{eq:lse_abbreviation} in Theorem \ref{thm:estimate_error_high_probability}.
The proof of Theorem \ref{thm:regret_general_rate}
is given in Section \ref{sec-pf-rate}.

\begin{theorem}
\label{thm:regret_general_rate}
  Suppose that the parameters $(\tau_N,\pi_N)_{N\in \sN}$ 
   for \eqref{eq:lse_abbreviation} are chosen such that 
   $(\theta^N)_{N\in \sN}$ satisfies the following error estimate: 
    there exists $\tilde{C}>0$ and 
  $\kappa\in (0,1)$
  such that for all $\eta\in (0,1)$, it holds with probability at least $1-\eta$ that,
  for all $N\in \sN\cap [2,\infty)$,
\begin{equation}
\label{eq:lse_error_regret}
\textstyle
  |\lambda^N-\lambda^\star| 
  \le \tilde{C}\left( \frac{  \log(  {\eta}^{-1})+\log N    }{\sqrt{N}} \right),
  \quad 
\|G^N-G^\star\|_{L^2([0,T])}
\le \tilde{C}\left( \frac{  \log(  {\eta}^{-1})+\log N    }{\sqrt{N}} \right)^{\kappa}.
\end{equation}
Assume further that  Assumption \ref{assumption:theta_bound} holds and let $\eps\in (0,L^{-1}/2)$ as in Definition \ref{assum1:parameters}.
  Then there exists $C_0>0$ such that   for all $\eta\in (0,1)$ and $C\ge C_0$,
  if one sets $\mathfrak{m}^e_0 =\lceil  {C}( \log(  {\eta}^{-1})^2+1)\rceil$,
  and $\mathfrak{n}:\sN\to \sN$   such that 
  $\mathfrak{n}(k)=[ k^\kappa]$ for all $k\in \sN$,
   then  with probability at least $1-\eta$,
the regret of Algorithm  \ref{alg:phased} satisfies 
$$
R(N)\le C' \left(
  N^{\frac{ 1}{1+\kappa}  }  \left(  {  \log(  {\eta}^{-1})+\log N  }  \right)^{2\kappa}
  +\log(  {\eta}^{-1})^2 
\right), \quad 
 \textnormal{for all $N\in \sN\cap [2,\infty)$},
$$
where $C'>0$ is a constant independent of $\eta$ and $N$.
 
\end{theorem}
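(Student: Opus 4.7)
The plan is to decompose $R(N)$ into exploration and exploitation contributions, bound each exploitation episode's regret via Theorem \ref{thm:gap} and the hypothesized convergence rate \eqref{eq:lse_error_regret}, and aggregate across cycles. The scheduling $\mathfrak{n}(k)\asymp k^\kappa$ is matched to the per-episode squared error decay so that each cycle contributes only polylogarithmic regret, while the cumulative episode count after $K$ cycles scales like $K^{1+\kappa}$; inversion then yields the $N^{1/(1+\kappa)}$ bound.

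First I would fix $\eta\in(0,1)$ and restrict to the event $\cG$ of probability at least $1-\eta$ on which \eqref{eq:lse_error_regret} holds for all $N\ge 2$. Since $\theta^k$ is built from $N_k:=\mathfrak{m}^e_0+k$ exploration trajectories generated under the same deterministic strategy $u^e$ with i.i.d.\ copies of $(A,M)$, the hypothesis applies simultaneously to every $\theta^k$ on $\cG$. The next step is to verify that $\theta^k\in\Xi_\eps$ for all $k\ge 0$ on $\cG$. The bounds $L^{-1}\le\lambda^k\le L$ and $\|G^k\|_{L^2}\le L$ from Assumption \ref{assumption:theta_bound} follow once the estimation error is below a fixed fraction of $L$. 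For the near-positivity condition \eqref{eps-pos-def}, I would use the continuity estimate
$$\left|\int_0^T\!\!\int_0^T (G-G^\star)(|t-s|) f(s) f(t)\,ds\,dt\right| \le \sqrt{T}\,\|G-G^\star\|_{L^2}\|f\|^2_{L^2},$$
obtained by Cauchy--Schwarz on the inner integral, combined with the non-negative definiteness of $G^\star$ from \eqref{pos-def_convolution}. Both requirements reduce to $\|G^k-G^\star\|_{L^2}$ and $|\lambda^k-\lambda^\star|$ being smaller than constants depending only on $L$ and $\eps$; since $\kappa<1$, the worst case is $k=0$, and choosing the constant $C_0$ in $\mathfrak{m}^e_0=\lceil C(\log(\eta^{-1})^2+1)\rceil$ large enough forces $\theta^0\in\Xi_\eps$ on $\cG$, while $\theta^k\in\Xi_\eps$ for $k\ge 1$ follows a fortiori from $N_k\ge N_0$.

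Granting admissibility, for each exploitation episode $m$ in cycle $k$ the policy $u^m=\textsf{Greedy}(A^m,\theta^{k-1})$ is well defined and Theorem \ref{thm:gap} gives
$$J^{\theta^\star}(u^{\theta^\star})-J^{\theta^\star}(u^m) \le C\bigl(|\lambda^{k-1}-\lambda^\star|^2+\|G^{k-1}-G^\star\|^2_{L^2}\bigr) \le C'\frac{(\log(\eta^{-1})+\log N_{k-1})^{2\kappa}}{N_{k-1}^{\kappa}},$$
where the $G$-term dominates because $\kappa<1$. Each exploration episode contributes a uniformly bounded regret $C_e>0$, since both $u^e$ and $u^{\theta^\star}$ yield finite values of $J^{\theta^\star}$. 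With $\mathfrak{n}(k)\asymp k^\kappa$ and $N_{k-1}\ge k$, summing over $k=1,\dots,K$ gives the exploitation total
$$\sum_{k=1}^K \mathfrak{n}(k)\cdot C'\frac{(\log(\eta^{-1})+\log k)^{2\kappa}}{k^\kappa} \le C''K(\log(\eta^{-1})+\log K)^{2\kappa},$$
plus $O(\mathfrak{m}^e_0)=O(\log(\eta^{-1})^2)$ from the initial exploration and $O(K)$ from within-cycle exploration. Inverting the counting identity $\mathfrak{L}(K)=\mathfrak{m}^e_0+K+\sum_{i=1}^K\mathfrak{n}(i)\asymp K^{1+\kappa}$ yields $K\asymp N^{1/(1+\kappa)}$, and substitution produces the advertised regret bound.

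The main obstacle I foresee is the uniform-in-$k$ admissibility argument in step two: one must calibrate $\mathfrak{m}^e_0$ so that, on the high-probability event $\cG$, the very first estimator $\theta^0$ already lands in $\Xi_\eps$, and one must verify that the $L^2$ control on $G^k-G^\star$ transfers to the quadratic-form bound needed for \eqref{eps-pos-def}. Both issues are genuinely about connecting the abstract $L^2$ convergence rate to the structural constraints of $\Xi_\eps$; once they are handled, the exploitation-regret summation and the $K$-to-$N$ inversion are routine.
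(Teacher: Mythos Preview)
Your proposal is correct and follows essentially the same route as the paper: the admissibility step (your second paragraph) is exactly Lemma \ref{lemma:initial_exploration}, including the Cauchy--Schwarz bound on the quadratic form (the paper gets $2\sqrt{T}$ rather than $\sqrt{T}$, but this is cosmetic), and the remainder—decomposition into exploration/exploitation, Theorem \ref{thm:gap} for per-episode exploitation regret, cycle summation, and inversion $K\asymp N^{1/(1+\kappa)}$—matches the paper's Section \ref{sec-pf-rate} line by line. One small slip: in your displayed exploitation sum you replace $\log N_{k-1}$ by $\log k$ in the numerator, which goes the wrong way; the paper instead bounds $\log(\mathfrak{m}^e_0+k)\le C(\log(\eta^{-1})+\log N)$ uniformly, which you should do as well.
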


As Algorithm \ref{alg:phased} operates in cycles,  
for each $N\in \sN$, the regret of learning $R(N)$ up to $N$ episodes 
can be upper bounded by the accumulated regret at the end of the $K$-th cycle, 
with $K=\min\{k\in \sN\cup \{0\}\mid 
 \mathfrak{L}(k) =
 \mathfrak{m}^e_0 +\sum_{i=1}^k  \mathfrak{n}(i)+k\ge N\}$.


\begin{remark} \label{rem-conv-compar} 
If 
\eqref{eq:lse_error_regret} holds with $\kappa =1$,
then 
Theorem \ref{thm:regret_general_rate} recovers the
square-root regret bound   in \cite{szpruch2021exploration}
for linear-convex RL problems with finite-dimensional unknown parameters. 
However, in the present non-parametric setting, 
\eqref{eq:lse_error_regret} typically holds with $\kappa <1$
  (see Theorems \ref{thm:estimate_error_high_probability}),
  and this leads to a worse sublinear regret bound.
 Indeed, classical results   for Tikhonov regularization indicate that $\kappa=2/3$ is the best rate 
 one can expect,
  even for a smooth  kernel $G^\star$
  (see \cite[Section 3.2]{groetsch1984theory}).
  Employing other regularisation approaches to 
  improve the sample efficiency of the kernel estimation  
is left for future research. 
\end{remark}

By combining   Theorems \ref{thm:estimate_error_high_probability} and \ref{thm:regret_general_rate},
the following corollary optimises the regret bounds of Algorithm \ref{alg:phased}
depending on 
  the regularity of the true kernel $G^\star$.

\begin{corollary} \label{cor-reg-bnd} 
 Suppose that Assumptions \ref{assum:learning}, \ref{assum:concentration_M}
 and    \ref{assumption:theta_bound}
  hold. Let $\eps\in (0,L^{-1}/2)$ as in Definition \ref{assum1:parameters}.
  \begin{enumerate}[(1)]
  \item
      If Assumption \ref{assume:kernel_assumption}\ref{label:regular_kernel} holds,
then there exists $C_0>0$ such that  for all   $\eta\in (0,1)$ and $C\ge C_0$,
by setting $(\tau_N,\pi_N)_{N\in \sN} $ 
as   \eqref{eq:lse_parameter_regular} for \eqref{eq:lse_abbreviation},
 $\mathfrak{m}^e_0 =\lceil C( \log(  {\eta}^{-1})^2+1)\rceil$,
  and 
$\mathfrak{n}:\sN\to \sN$   such that 
  $\mathfrak{n}(k)=[ k^{\frac{1}{3}}]$ for all $k\in \sN$,
    then  with probability at least $1-\eta$,
the regret of Algorithm  \ref{alg:phased} satisfies for all $N\in \sN\cap [2,\infty)$, 
\begin{equation}\label{eq:regret_bound_regular}
R(N)\le C' \left(
  N^{\frac{ 3}{4 }  }  \left(  {  \log(  {\eta}^{-1})+\log N  }  \right)^{\frac{2}{3}}
  +\log(  {\eta}^{-1})^2 
\right).
\end{equation}
  \item
      If Assumption \ref{assume:kernel_assumption}\ref{label:singular_kernel} holds,
then there exists $C_0>0$ such that  for all $\eta\in (0,1)$ and    $C\ge C_0$, 
by setting $(\tau_N,\pi_N)_{N\in \sN} $ 
as   \eqref{eq:lse_parameter_singular} for \eqref{eq:lse_abbreviation},
 $\mathfrak{m}^e_0 =\lceil C( \log(  {\eta}^{-1})^2+1)\rceil$,
  and 
$\mathfrak{n}:\sN\to \sN$   such that 
  $\mathfrak{n}(k)=[ k^{\frac{1-2\alpha }{3-2\alpha}}]$ for all $k\in \sN$,
    then  with probability at least $1-\eta$,
the regret of Algorithm  \ref{alg:phased} satisfies for all $N\in \sN\cap [2,\infty)$, 
\begin{equation}\label{eq:regret_bound_singular}
R(N)\le C' \left(
  N^{\frac{ 3-2\alpha}{4-4\alpha }  }  \left(  {  \log(  {\eta}^{-1})+\log N  }  \right)^{\frac{2(1-2\alpha) }{3-2\alpha}}
  +\log(  {\eta}^{-1})^2 
\right).
\end{equation}
  \end{enumerate}
   The constant $C'>0$ appearing in 
  \eqref{eq:regret_bound_regular}
and 
\eqref{eq:regret_bound_singular}
is  independent of $\eta$ and $N$. 
 
\end{corollary}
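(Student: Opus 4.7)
The corollary is essentially a specialisation of Theorem \ref{thm:regret_general_rate} using the kernel-estimation rates provided by Theorem \ref{thm:estimate_error_high_probability}. The two parts correspond to the two regularity regimes of Assumption \ref{assume:kernel_assumption}, and in each case the task reduces to identifying the exponent $\kappa$ appearing in \eqref{eq:lse_error_regret} and substituting it into the general regret bound. The hypotheses of Theorem \ref{thm:regret_general_rate}, namely Assumption \ref{assumption:theta_bound} and $\eps\in(0,L^{-1}/2)$, are assumed throughout the corollary, so the only thing to verify is the validity of \eqref{eq:lse_error_regret} under the stated parameter choices.

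For part (1), I would take $(\tau_N,\pi_N)_{N\in\sN}$ as in \eqref{eq:lse_parameter_regular}. Then Theorem \ref{thm:estimate_error_high_probability}(1) yields \eqref{eq:lse_conv_regular}, which is exactly \eqref{eq:lse_error_regret} with $\kappa=1/3$. A direct computation gives $\tfrac{1}{1+\kappa}=\tfrac{3}{4}$ and $2\kappa=\tfrac{2}{3}$, so setting $\mathfrak{n}(k)=[k^{1/3}]$ and $\mathfrak{m}^e_0=\lceil C(\log(\eta^{-1})^2+1)\rceil$, Theorem \ref{thm:regret_general_rate} delivers \eqref{eq:regret_bound_regular} verbatim.

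For part (2), I would take $(\tau_N,\pi_N)_{N\in\sN}$ as in \eqref{eq:lse_parameter_singular}. Then Theorem \ref{thm:estimate_error_high_probability}(2) yields \eqref{eq:lse_conv_singular}, which matches \eqref{eq:lse_error_regret} with $\kappa=(1-2\alpha)/(3-2\alpha)\in(0,1/3)$ for $\alpha\in(0,1/2)$. A short algebraic simplification gives $1+\kappa=(4-4\alpha)/(3-2\alpha)$, hence $\tfrac{1}{1+\kappa}=\tfrac{3-2\alpha}{4-4\alpha}$ and $2\kappa=\tfrac{2(1-2\alpha)}{3-2\alpha}$. With the scheduler $\mathfrak{n}(k)=[k^{(1-2\alpha)/(3-2\alpha)}]$, Theorem \ref{thm:regret_general_rate} produces \eqref{eq:regret_bound_singular}, and the constants $C_0$ and $C'$ are inherited from that theorem.

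Since all the substantive work has already been done in Proposition \ref{prop:u_lipschitz}, Theorem \ref{thm:gap}, Theorem \ref{thm:estimate_error_high_probability} and Theorem \ref{thm:regret_general_rate}, the only point that merits a moment of care is ensuring that the high-probability events match: the event of probability at least $1-\eta$ on which Theorem \ref{thm:estimate_error_high_probability} yields \eqref{eq:lse_error_regret} uniformly in $N\in\sN\cap[2,\infty)$ is the same event on which Theorem \ref{thm:regret_general_rate} produces its regret bound. This identification is immediate because both high-probability clauses are stated uniformly in $N$ on a common event generated by the exploration-episode data $(S^m,A^m)_{m\in\cE_k}$; no union bound over $N$ is required, which is precisely why the $\log N$ factor (rather than $\log(N/\eta)$) appears in the final bound. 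The verification of $\mathfrak{m}^e_0$ being large enough for $\theta^0\in\Xi_\eps$ is handled by Lemma \ref{lemma:initial_exploration}, which the statement of Theorem \ref{thm:regret_general_rate} already encodes in the threshold constant $C_0$.
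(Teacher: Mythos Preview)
Your proposal is correct and follows exactly the approach the paper takes: the corollary is stated as an immediate combination of Theorems \ref{thm:estimate_error_high_probability} and \ref{thm:regret_general_rate}, and your identification of $\kappa=1/3$ in the regular case and $\kappa=(1-2\alpha)/(3-2\alpha)$ in the singular case, together with the resulting arithmetic for $1/(1+\kappa)$ and $2\kappa$, is precisely what is needed. The paper itself offers no further proof beyond the sentence ``By combining Theorems \ref{thm:estimate_error_high_probability} and \ref{thm:regret_general_rate}\ldots'', so your write-up is in fact more detailed than the original.
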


{
\section{Numerical implementation} \label{sec-num}
In this section, we 
numerically examine the performance of the least-squares estimator \eqref{eq:lse_abbreviation} 
  developed in Section \ref{sec:lse}, which is the key    component of the 
  phased-based learning algorithm (Algorithm \ref{alg:phased}).
  We focus on 
estimating singular power law propagators, which are  extensively used by practitioners (see e.g. Chapter 13 of [10]). 

More precisely, let 
  $T>0$ and consider 
  $\lambda^\star >0$ to be an unknown temporary price impact coefficient,
and $G^\star \in L^2([0,T],\sR) $ to be an unknown transient impact kernel.
 For each $n\in \mathbb{N}$, consider the following price process (cf.~\eqref{def:S_m}):
\begin{align}
\label{eq:S_experiment}
S^n_t = A^n_t + M^n_t - \lambda^\star u^e_t -\int_0^t G^\star(t-s) u^e_s d s,
\quad
t\in [0,T],
\end{align}
where
 $u^e \in L^2([0,T], \mathbb{R})$ be a   trading strategy specified by the agent, 
$(A^n)_{n\in \mathbb{N}}$ are observed  signals, and  $(M^m)_{m\in \mathbb{N}}$ are unobserved  zero-mean   noises.  
The agent   estimates $(\lambda^\star, G^\star)$ based on the observed trajectories $(S^n,A^n)_{n\in \mathbb{N}}$ and the   trading strategy $u^e$.

Given sample trajectories   $(S^n,A^n)_{n=1}^N$, we estimate the parameter $\lambda^\star$  by 
$\lambda^N$ defined in 
\eqref{eq:lse_lambda},
and estimate the kernel  $G^\star$ by  
\begin{align}
\label{eq:lse_G_H}
\begin{split}
  G^N
  \coloneqq 
 & \argmin_{G\in V_N} \bigg(\frac{1}{N}\sum_{n=1}^N
 \int_0^T \left|
 S^n_t
-A^n_t
 +\lambda^N u^e_t
 + \int_0^t    u^e(t-s) G_s d s\right|^2 d t
 \\
&\quad +\tau_N \int_0^T (G_t- H_t)^2 dt\bigg).
\end{split}
\end{align}
The estimator   \eqref{eq:lse_G_H}
 extends \eqref{eq:lse_G} by allowing for  a (non-zero)  initial guess $H$ of $G^\star$
in the regularisation term.
It satisfies the convergence rates in Theorem \ref{thm:estimate_error_high_probability}
 provided  that  
  $G^\star -H$  satisfies Assumption \ref{assume:kernel_assumption} 
  (see e.g., \cite{Rastogi2020convergence}). 
  The estimator \eqref{eq:lse_G_H} can be   numerically
approximated by  
\begin{align}
\label{eq:lse_G_H_K}
\begin{split}
  G^N
  =& \argmin_{(G_k)_{k=0}^{K-1}} \bigg( \frac{1}{N}\sum_{n=1}^N
 \sum_{j=1}^{K}\left|
 S^n_{t_j}
-A^n_{t_j}
 +\lambda^N u^e_{t_j}
 +
 \sum_{k=0}^{j-1}
   u^e_{t_{j-k}} G_k \Delta t\right|^2 \Delta t
   \\
   &\quad 
+\tau_N  \sum_{k=0}^{K-1}
 (G_k-H_k)^2 \Delta t
 \bigg),
 \end{split}
\end{align}
where $\Delta t = {T}/{K}$ and 
  $t_i=iT/K$  for a  sufficiently large $K\in \sN$. Note that 
the minimiser of 
   \eqref{eq:lse_G_H_K} can be computed analytically by a first-order condition.

For our numerical experiments, we fix 
$\lambda^\star =0.5 $ and $G^\star (t) =t^{-\alpha}$ for some $\alpha \in (0,1/2)$.
For each $N\in \sN$, 
we  generate  observed trajectories $(S^m-A^m)_{m=1}^N $ 
according to \eqref{eq:S_experiment}
with  $T=1$,  $u^e\equiv 1$ and  
 $M^n_t=0.5 (B^n_t+\iota_0)$, $t\in [0,T]$,
where $(B^n)_{n=1}^N$ are independent Brownian motions and $\iota_0$ is an independent standard normal random variable. Using these trajectories, we evaluate 
$\lambda^N$ as in  \eqref{eq:lse_lambda}
and   $G^N$ as in \eqref{eq:lse_G_H_K} with $H\equiv 1$, $K=10^3$
and $\tau_N =N^{-{2}/{(3-2\alpha)}}$ as suggested by \eqref{eq:lse_parameter_singular}.
To estimate statistical properties of the estimators, we
carry out the experiments   for 10 independent runs, where among different executions, the observed state trajectories are simulated based on independent noises.
In the sequel, we only report  the performance of    $G^N$, as $G^\star$ is more challenging to estimate than $\lambda^\star$.

 Figure \ref{fig:power_law} illustrates the effectiveness of 
$G^N$ for estimating the power law kernel  $G^\star(t)=t^{-\alpha}$ with  different   $\alpha\in \{0.1,0.4\}$ and different sample sizes $N$.
Figure \ref{fig:true_kernel} plots the true kernels, showing that the degree of singularity at $t=0$ increases for larger 
$\alpha$. 
Figures \ref{fig:estimated_N2_10} and \ref{fig:estimated_N2_16} display the estimated kernels, where the solid lines represent the mean and the shaded areas indicate the extremes over 10 repeated experiments.
One can see that the estimator $G^N$ successfully recovers the overall behavior of the true kernels, even with a small sample size. The regularisation $\tau_N$ prevents oscillation in the estimated kernel, in contrast to an unregularised estimator as discussed in Remark \ref{rem-t}. However, the estimator is not able to accurately capture the singularities of the true kernels without a sufficient number of samples.
By increasing the sample sizes, these singularities can be better captured by the estimator.

 \begin{figure}[H]
\centering
\begin{subfigure} {0.45\textwidth} 
\centering
 \includegraphics[trim=18 5 30 25, clip, width=\textwidth]{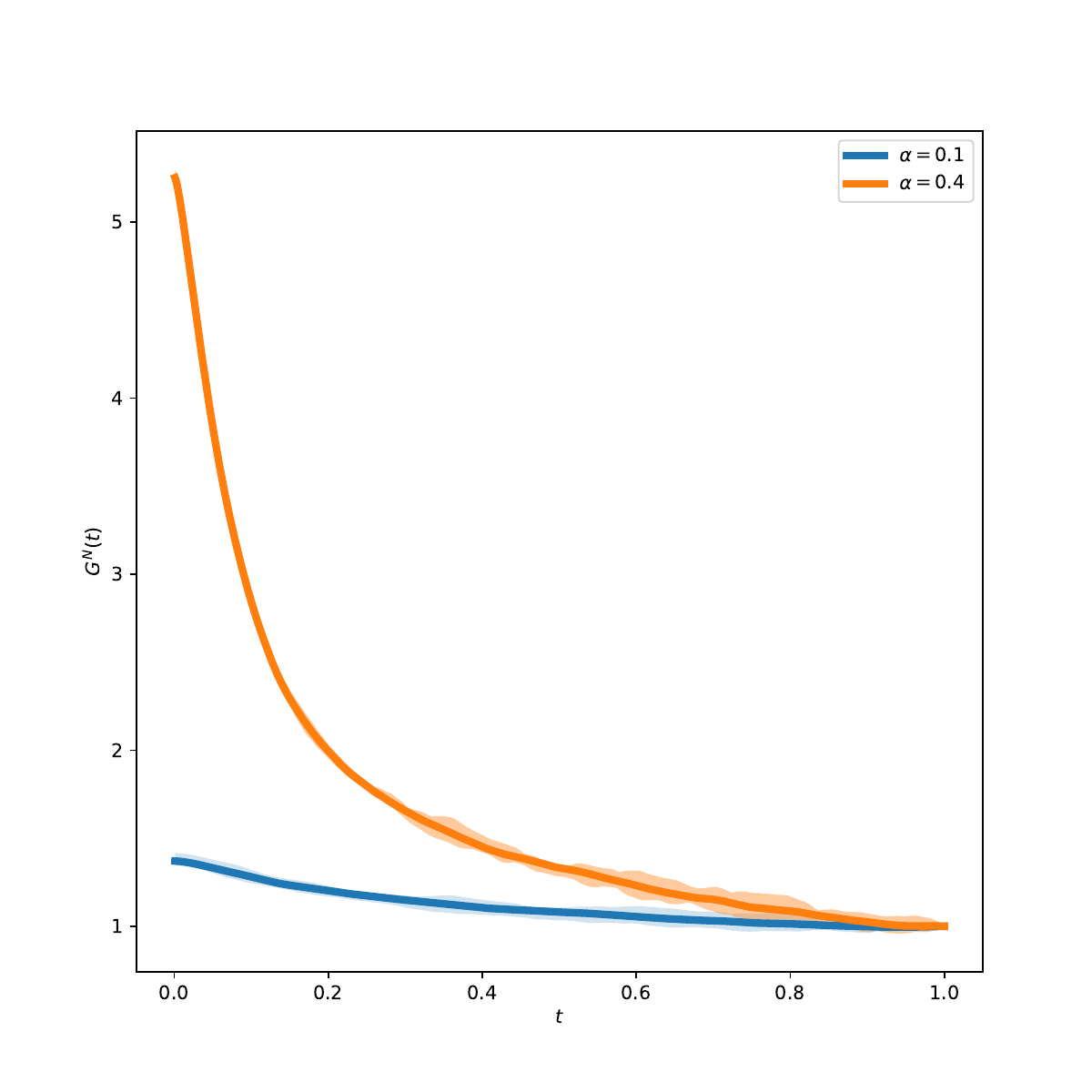}
\caption{Estimated kernels with $N=1024$} 
\label{fig:estimated_N2_10}
\end{subfigure}
\hfill
\begin{subfigure}{0.45\textwidth} 
\centering
\includegraphics[trim=20 5 30 25, clip,  width=\textwidth]
 {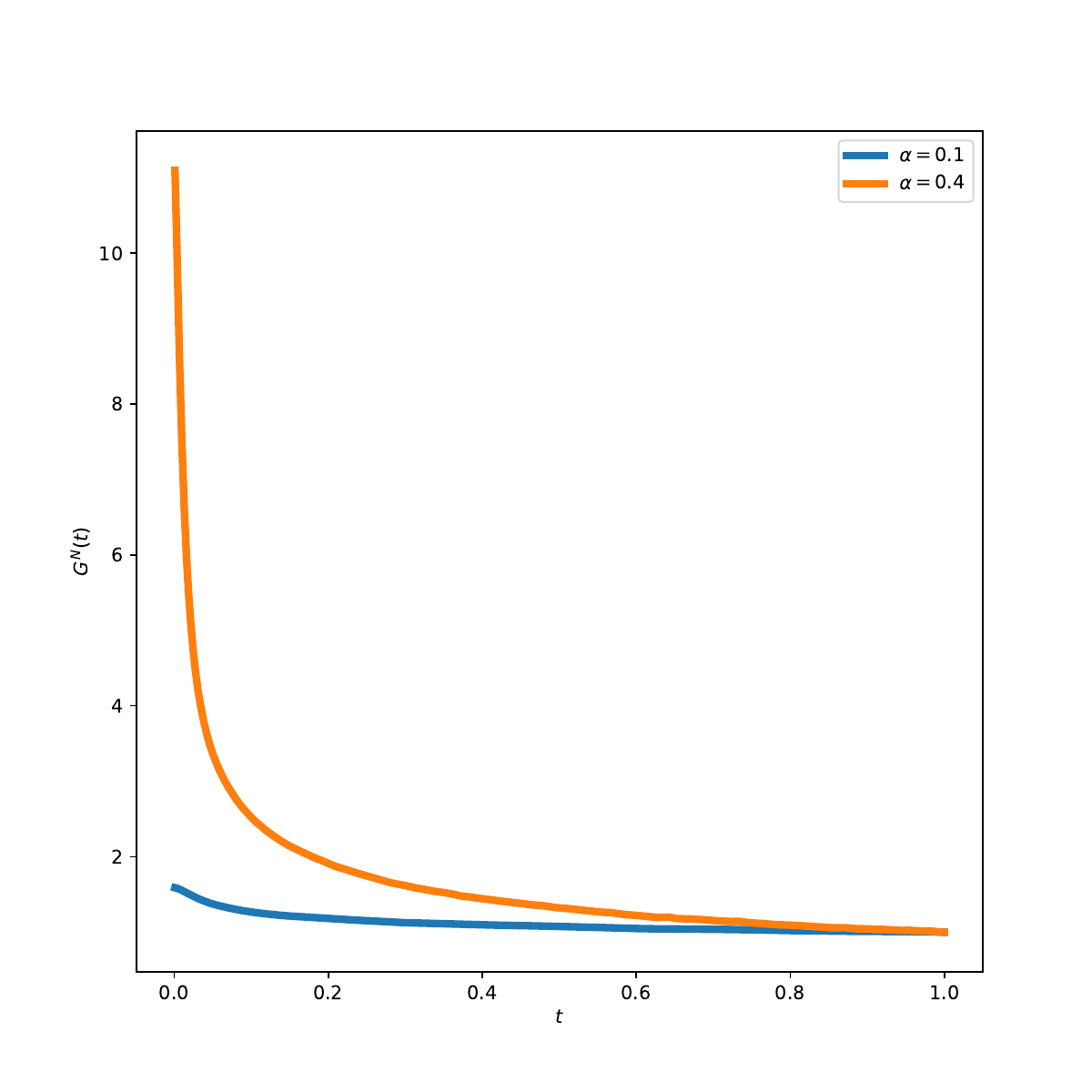}
\caption{Estimated kernels with $N=65526$}  
\label{fig:estimated_N2_16}
\end{subfigure}
\vspace{10pt}

\begin{subfigure}{0.45\textwidth} 
\centering
\includegraphics[trim=18 5 30 25, clip, width=\textwidth]{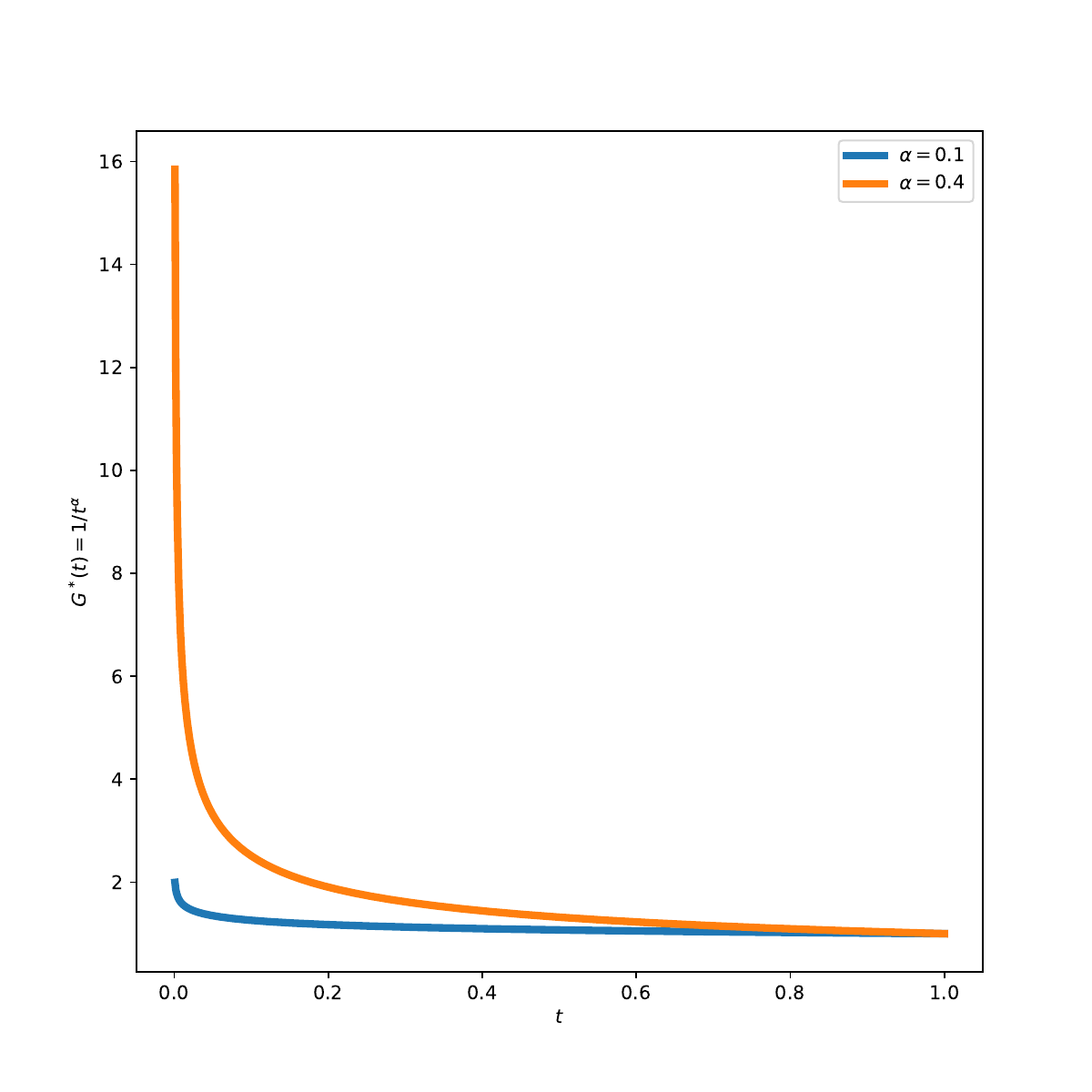}
\caption{True kernels} 
\label{fig:true_kernel}
\end{subfigure} 
\caption{Comparison between the true power law kernels $G^\star(t)=t^{-\alpha}$, with $\alpha =0.1$ (in blue) and $\alpha =0.4$ (in orange),  and the estimated kernels with different sample sizes $N$.}
\label{fig:power_law} 
\end{figure}

We further demonstrate the decay of the relative error $\|G^N-G^\star\|_{L^2([0,T])}/ \|G^\star\|_{L^2([0,T])}$
in Figure \ref{fig:error} for sample sizes  $N=2^n$ with $n \in \{10,11,\ldots, 16\}$. 
The estimator makes larger errors for 
$\alpha=0.4$
 compared to 
 $\alpha=0.1$,
 due to the more severe singularity of the true kernel. 
 A linear regression on the logarithms of relative errors and sample sizes reveals that the convergence rate of 
 $(G^N)_{N\in \sN}$   is of the order 
$\cO(N^{-0.22})$ for   $\alpha=0.1$
and 
$\cO(N^{-0.2})$ for   $\alpha=0.4$,
 which   are better than the theoretical upper bounds presented in Theorem \ref{thm:estimate_error_high_probability}.
 Note that this improved rate does not contradict the theorem, as the convergence rates in Theorem \ref{thm:estimate_error_high_probability} pertain to the worst-case scenario over all kernels 
$G^\star$  satisfying Assumption \ref{assume:kernel_assumption} and all realizations of the noises 
$M$ satisfying Assumption \ref{assum:concentration_M}. As already pointed out in \cite[Section 3.2]{engl1996regularization}, for particular realised sample trajectories, the error of the estimator \eqref{eq:lse_G} (or \eqref{eq:lse_G_H_K}) may be smaller than the bounds provided in Theorem \ref{thm:estimate_error_high_probability}.

\begin{figure}[H]
\centering
\includegraphics[trim=20 5 30 25, clip,  width=0.8\textwidth]{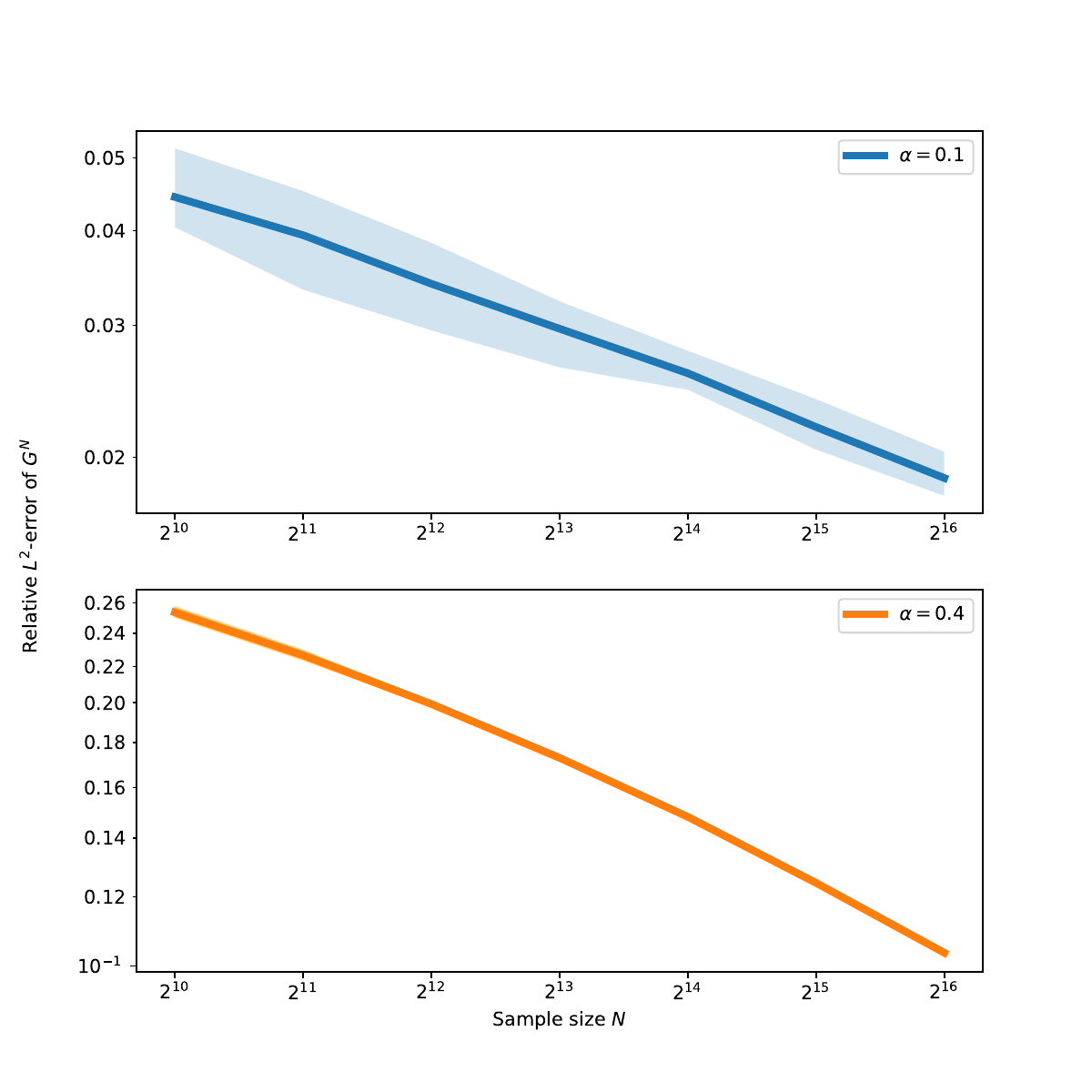}     
\caption{Mean relative errors of $G^N$ for different sample sizes plotted in solid lines and the intervals containing the errors denoted by the lighter regions (the plot is in a log-log  scale). The true power law kernels $G^\star(t)=t^{-\alpha}$, with $\alpha =0.1$ (upper panel) and $\alpha =0.4$ (lower panel).}
\label{fig:error}
\end{figure}

It is worth noting that in this experiment, we focus on the challenging task of estimating singular kernels, for which the proposed estimators are expected to produce larger estimation errors compared to more regular kernels. However, as shown in Theorem \ref{thm:gap}, the error of the greedy policy for an estimated model depends quadratically on the kernel estimation error. Therefore, a rough estimate of the kernel 
$G$ is often sufficient to design a nearly optimal trading strategy. In particular, as shown in Figure \ref{fig:error}, 
even for the most singular kernel with $\alpha=0.4$, the relative error of the resulting greedy policy is expected to be around 6\% with a sample size of 
$N=1024$.

} 

\section{Analytic solution to the control problem} \label{sec-results} 
In this section, we recall the explicit form on the optimiser of \eqref{def:objective-str} from \cite{AJ-N-2022}. Before stating this result we introduce some essential definitions of function spaces, integral operators and stochastic processes.   
\subsection{Function spaces, integral operators} 
Let $T>0$. We denote by $\langle \cdot, \cdot \rangle_{L^2}$ the inner product on $L^2([0,T], \R^2)$, that is 
\be \label{in-prod} 
\langle f, g\rangle_{L^2} = \int_0^T f(s)^{\top} g(s) ds, \quad f,g\in L^2\left([0,T],\mathbb R^2\right). 
\ee
We define $L^2\left([0,T]^2,\mathbb R^{2 \times 2}\right)$ to be the space of measurable kernels $\Sigma:[0,T]^2 \to \R^2$ such that 
\begin{align*}
\int_0^T \int_0^T |\Sigma(t,s)|^2 dt ds < \infty.
\end{align*}
The notation $|\cdot|$ stands for a matrix norm, and in particular we have, 
$$
\int_0^T \int_0^T |\Sigma_{i,j}(t,s)|^2 dt ds < \infty, \quad \textrm{for all } i,j =1,2. 
$$
For any $\Sigma, \Lambda \in  L^2\left([0,T]^2,\mathbb R^{2\times 2}\right)$ we define the $\star$-product as follows,
\begin{align}\label{eq:starproduct}
(\Sigma \star \Lambda)(s,u) = \int_0^T \Sigma(s,z) \Lambda(z,u)dz, \quad  (s,u) \in [0,T]^2,
\end{align}
which is a well-defined kernel in $L^2\left([0,T]^2,\mathbb R^{2\times 2}\right)$ due to Cauchy-Schwarz inequality.  For any  kernel $\Sigma \in L^2\left([0,T]^2,\mathbb R^{2\times 2}\right)$, we denote by {$\boldsymbol \Sigma$} the integral operator   induced by the kernel $\Sigma$ that is 
\begin{align} \label{sigma-def} 
({\boldsymbol \Sigma} g)(s)=\int_0^T \Sigma(s,u) g(u)du,\quad g \in L^2\left([0,T],\mathbb R^2\right).
\end{align}
$\boldsymbol \Sigma$ is a linear bounded operator from  $L^2\left([0,T],\mathbb R^2 \right)$ into itself. 
For $\boldsymbol{\Sigma}$ and $\boldsymbol{\Lambda}$ that are two integral operators induced by the kernels $\Sigma$ and $\Lambda$  in $L^2\left([0,T]^2,\mathbb R^{2\times 2}\right)$, we denote by $\boldsymbol{\Sigma}\boldsymbol{\Lambda}$ the integral operator induced by the kernel $\Sigma\star \Lambda$.

We denote by $\Sigma^*$ the adjoint kernel of $\Sigma$ for $\langle \cdot, \cdot \rangle_{L^2}$, that is 
\begin{align} \label{adj-def} 
\Sigma^*(s,u) &= \; \Sigma(u,s)^\top, \quad  (s,u) \in [0,T]^2,
\end{align}
and by $\boldsymbol{\Sigma}^*$ the corresponding adjoint integral operator. 

We recall that an operator $\boldsymbol{\Sigma}$ as above is said to be non-negative definite if $\langle \boldsymbol{\Sigma}f,f\rangle_{L^{2}} \geq0$ for all $f\in L^2\left([0,T],\mathbb R^2\right)$. It is said to be positive definite if $\langle\boldsymbol{\Sigma}f,f\rangle_{L^{2}}> 0$ for all $f\in L^2\left([0,T],\mathbb R^2\right)$ not identically zero. 

\subsection{Essential operators and processes} 
\paragraph{The $\Gamma^{-1}_t$ operator:}
We define  
\be \label{h-tilde} 
   \tilde G(t,s) = \left(2 \varrho  + G(t-s)\right) \mathds{1}_{\{s<t\}}, \quad 0\leq s, t \leq T. 
\ee
and $\boldsymbol{\tilde G}_t$ as the operator induced by the kernel 
\be \label{tilde-G} 
\tilde G_t(s,u) = \tilde G(s,u)\mathds 1_{\{u \geq t\}}. 
\ee
 We introduce 
\begin{align}\label{eq:schur}
    \boldsymbol{D}_t := 2\lambda \id +  (\boldsymbol{\tilde G}_t + \boldsymbol{\tilde G}^*_t) + 2\phi  \boldsymbol{1}^*_t \boldsymbol{1}_t,
\end{align}
where $\id$ is the idendity operator, i.e.~$(\id f)(t)=f(t)$, $\boldsymbol{1}_t$ is the integral operator induced by the kernel 
\be \label{op-one} 
\mathds 1_t(u,s) := \mathds 1_{\{u \geq  s\}} \mathds 1_{\{s \geq t\}} .
\ee
In Lemma 3.1 of \cite{AJ-N-2022} it was proved that $\boldsymbol{D}_t$ is invertible if $\lambda>0$ and $\varrho, \phi \geq 0$. This will be necessary for upcoming definitions.  We define the operator $\boldsymbol{\Gamma}_t^{-1}$ by 
 \be \label{op-gam-inv} 
\begin{aligned}
\boldsymbol {\Gamma}_t^{-1} =  \left(\begin{matrix}
 \boldsymbol{D}_t^{-1}  &-2 \phi \boldsymbol{D}_t^{-1} \boldsymbol{1}^*_t \\
-2\phi \boldsymbol{1}_t \boldsymbol{D}_t^{-1} & - {2\phi} \id + 4\phi^2  \boldsymbol{1}_t \boldsymbol{D}_t^{-1} \boldsymbol{1}^*_t
\end{matrix}\right).
\end{aligned}
\ee 
We recall that $\boldsymbol \Gamma^{-1}$ is associated with a solution to an operator Riccati equation (see Lemmas 6.1 and 6.2 in \cite{AJ-N-2022}).

\paragraph{The process $\Theta$:} 
For convenience we introduce the following notation, 
\be \label{ind-def} 
\mathds{1}_t(s)= \mathds{1}_{\{ s \geq t\}}. 
\ee
 For $A$ as in \eqref{eq:P_decomposition}, we define $\Theta= \{\Theta_{t}(s) : t\in [0,s],\, s\in [0,T]\}$ as follows,  
\be \label{th-sol} 
\Theta_t(s)=- \left(\boldsymbol {\Gamma}_t^{-1}  \mathds{1}_t \E\left[A_{\cdot}- A_T \Mid \mathcal F_t\right] e_1 \right)(s),
\quad \textnormal{with $e_1=(1,0)^\top$}.
\ee
Note that $\Theta$ solves the $L^2$-valued BSDE (see Proposition 6.3 in \cite{AJ-N-2022}). 
 
 \paragraph{The optimal control $u^\star$:} 
Proposition 4.5 in \cite{AJ-N-2022} states that the optimiser of \eqref{def:objective-str}, $u^\star$ is a solution to the following equation,  
\be \label{volt-u} 
 u^\star_t = a_t + \int_0^t B(t,s)u_s^\star ds,
 \ee
where the process $\{a_t\}_{t\in [0,T]}$ and the kernel $B$ which are given by
	\begin{equation} \label{eq:aB}
	\begin{aligned} 
a_t &= \frac 1{2\lambda} \left(\E\left[ A_t - A_T \mid \mathcal F_t\right]  + 2\varrho q + \langle \Theta_t , K_t \rangle_{L^2} +  \langle \boldsymbol {\Gamma}_t^{-1} K_t  , \mathds 1_t  (- 2\varrho q, q )^\top \rangle_{L^2} \right), \\
B(t,s) &=  \mathds 1_{\{s<t\}}\frac 1{2\lambda} \left( \langle \boldsymbol {\Gamma}_t^{-1} K_t  , \mathds 1_t  ( \tilde G(\cdot,s), -1 )^\top \rangle_{L^2}   - \tilde G(t,s) \right). 
	\end{aligned}  
	\end{equation}
Here the function $K_t$ is defined as follows 
\be \label{K-t}
 K_t(s) = (\tilde G(t,s), - \mathds{1}_{\{s\leq t\}})^\top.
 \ee

 \section{Bound on the performance gap} \label{sec-bnd} 
In this section we prove Theorem \ref{thm:gap}. Recall that the parameter space  $\Xi_{\eps} $ was defined in \eqref{eps-pos-def} and that  $u^{\theta^\star}$ is the maximiser of $J^{\theta^\star}$ in \eqref{def:objective-str}. In the following proposition we derive an upper bound on $J^{\theta^\star}(u^{\theta^\star}) -J^{\theta^\star}(u)$ for any admissible strategy $u$. 

For any admissible strategy $u$ as in \eqref{def:admissset}  we define 
\be \label{h-norm} 
\|u\|_{\mathcal H^2} = \left(\E\left[ \int_0^Tu_s^2 ds \right] \right)^{1/2}.
\ee
\begin{proposition} \label{lem-perfor} Let $\theta^\star\in \Xi_{\eps} $, then there exits a constant $C>0$ such that 
$$
 0\leq  J^{\theta^\star}(u^{\theta^\star}) -J^{\theta^\star}(u)   \leq    C  \|u^{\theta^\star}- u\|_{\mathcal H^2}^2, \quad \textrm{for all } u \in \mathcal A. 
$$
\end{proposition}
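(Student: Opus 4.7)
My plan hinges on the observation that, as a functional of $u$, the map $J^{\theta^\star}$ is purely affine-plus-quadratic; consequently the performance gap can be computed exactly. Substituting $S^u_t = P_t-\lambda^\star u_t-\int_0^t G^\star(t-s)u_s\,ds$ into \eqref{def:objective-str} and expanding $Q^u_t=q-\int_0^t u_s\,ds$, I would decompose $J^{\theta^\star}(u)=c+L^{\theta^\star}(u)+Q^{\theta^\star}(u)$, where $c\in\mathbb{R}$ is a constant, $L^{\theta^\star}$ is linear in $u$ (and carries all dependence on $P$), and $Q^{\theta^\star}$ is a deterministic quadratic form averaged over $\omega$. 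After symmetrising the transient-impact term via $\int_0^T u_t\int_0^t G^\star(t-s)u_s\,ds\,dt=\tfrac12\int_0^T\!\!\int_0^T G^\star(|t-s|)u_s u_t\,ds\,dt$, the quadratic part reads
\begin{equation*}
Q^{\theta^\star}(u)=-\mathbb{E}\!\left[\lambda^\star\!\int_0^T u_t^2\,dt+\tfrac12\!\int_0^T\!\!\int_0^T\! G^\star(|t-s|)u_s u_t\,ds\,dt+\phi\!\int_0^T\!\!\Big(\!\int_0^t u_s\,ds\Big)^{\!2}\!dt+\varrho\Big(\!\int_0^T u_s\,ds\Big)^{\!2}\right].
\end{equation*}

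Next, let $\delta\coloneqq u-u^{\theta^\star}\in\mathcal{A}$. By bilinearity of the symmetric form associated with $Q^{\theta^\star}$ and linearity of $L^{\theta^\star}$, one obtains
\begin{equation*}
J^{\theta^\star}(u)-J^{\theta^\star}(u^{\theta^\star})=DJ^{\theta^\star}(u^{\theta^\star})[\delta]+Q^{\theta^\star}(\delta).
\end{equation*}
Since $u^{\theta^\star}$ is the maximiser of $J^{\theta^\star}$ over the linear space $\mathcal{A}$ (this is precisely \cite[Proposition 4.5]{AJ-N-2022}, whose hypothesis under $\theta^\star\in\Xi_\eps$ I discuss below), the Gateaux derivative vanishes on every admissible perturbation, $DJ^{\theta^\star}(u^{\theta^\star})[\delta]=0$, and one is left with the exact identity
\begin{equation*}
J^{\theta^\star}(u^{\theta^\star})-J^{\theta^\star}(u)=-Q^{\theta^\star}(\delta).
\end{equation*}

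The proof then reduces to two-sided estimates on $-Q^{\theta^\star}(\delta)$. For the lower bound, using $\theta^\star\in\Xi_\eps$ I apply \eqref{eps-pos-def} path-wise (valid because $\delta_\cdot(\omega)\in L^2([0,T])$ almost surely by admissibility and $G^\star$ is deterministic) to get $\int_0^T\!\int_0^T G^\star(|t-s|)\delta_s\delta_t\,ds\,dt\geq-\eps\|\delta\|_{L^2([0,T])}^2$; combined with $\lambda^\star\geq L^{-1}$ and the non-negativity of the $\phi,\varrho$ terms, this yields $-Q^{\theta^\star}(\delta)\geq(L^{-1}-\eps/2)\|\delta\|_{\mathcal{H}^2}^2\geq\tfrac{3}{4L}\|\delta\|_{\mathcal{H}^2}^2\geq 0$, where $\eps<L^{-1}/2$ enters decisively. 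For the upper bound I estimate each term separately: $\lambda^\star\leq L$ handles the first; the convolution term is controlled via the operator norm of the kernel $(t,s)\mapsto G^\star(|t-s|)$ on $L^2([0,T])$, bounded by $\sqrt{2T}\|G^\star\|_{L^2([0,T])}\leq\sqrt{2T}\,L$; the $\phi$- and $\varrho$-terms are bounded by $\phi T^2\|\delta\|_{\mathcal{H}^2}^2$ and $\varrho T\|\delta\|_{\mathcal{H}^2}^2$ respectively via Cauchy--Schwarz. Summing produces the upper bound $-Q^{\theta^\star}(\delta)\leq C\|\delta\|_{\mathcal{H}^2}^2$ with $C=C(L,\eps,\phi,\varrho,T)$. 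The only conceptually delicate point is the appeal to first-order optimality when $\theta^\star$ is merely \emph{nearly} non-negative definite (as in $\Xi_\eps$) rather than non-negative definite as in \eqref{pos-def_convolution}; however, the strict coercivity $-Q^{\theta^\star}(\delta)\geq\tfrac{3}{4L}\|\delta\|_{\mathcal{H}^2}^2$ derived above already guarantees strict concavity of $J^{\theta^\star}$ on $\mathcal{A}$, so the maximiser exists, is unique, and coincides with the representation \eqref{eq:optimalcontrol_monotone_str}---exactly what is needed to justify $DJ^{\theta^\star}(u^{\theta^\star})[\delta]=0$.
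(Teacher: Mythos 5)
Your proposal is correct and follows essentially the same route as the paper: an exact second-order expansion of the linear-plus-quadratic functional about $u^{\theta^\star}$, cancellation of the first-order term via the vanishing G\^ateaux derivative at the maximiser, and Cauchy--Schwarz bounds on the quadratic remainder using $\lambda^\star\le L$ and $\|G^\star\|_{L^2([0,T])}\le L$. The only cosmetic difference is that you symmetrise the transient-impact term and make the coercivity $-Q^{\theta^\star}(\delta)\ge (L^{-1}-\eps/2)\|\delta\|_{\mathcal H^2}^2$ explicit (which also justifies strict concavity and the lower bound $0\le J^{\theta^\star}(u^{\theta^\star})-J^{\theta^\star}(u)$), whereas the paper carries out the same expansion term by term and leaves the non-negativity implicit in the optimality of $u^{\theta^\star}$.
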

 We start with characterising the optimal strategy by using a variational approach. Note that for any $u \in \mathcal{A}$ the map $u \mapsto J(u)$ in~\eqref{def:objective-str} is strictly concave, which can be easily shown by repeating the same lines as in the proof of Lemma 9.1 of \cite{N-V-2021} . Therefore, it admits a unique maximizer characterized by the critical point at which the G\^ateaux derivative
\begin{equation} \label{def:gateaux}
    \langle (J^{\theta^\star}(u)', \alpha \rangle \triangleq \lim_{\varepsilon \rightarrow 0} \frac{J^{\theta^\star}(u + \varepsilon \alpha) - J^{\theta^\star} (u)}{\varepsilon}
\end{equation}
of the functional $J^{\theta^\star}$ vanishes for any direction $\alpha = (\alpha_t)_{0 \leq t \leq T} \in \mathcal{A}$; see, e.g.,~\cite{EkelTem:99}. The G\^ateaux derivative in~\eqref{def:gateaux} can be derived along the same line as in \cite{NeumanVoss:20}. The result is given in the following lemma. 
\begin{lemma}  \label{lem:gateaux}
Let $  J^{\theta^\star}$ as in \eqref{def:objective-str}. Let $u \in \mathcal{A}$, then we have
\begin{equation}
\begin{aligned} \label{eq:gateaux}
   \langle J^{\theta^\star}(u)', \alpha \rangle  &= \mathbb{E} \Bigg[\int_0^T \alpha_s \left( P_s -   Z_s^{\theta^\star,u} -  \int_s^T G^\star(t-s)  u_t dt - 2 \lambda^{\star} u_s \right. \Bigg. \\
    & \hspace{75pt} \Bigg. \Bigg. + 2 \phi \int_s^T Q^u_t dt + 2 \varrho Q^u_T - P_T \Bigg) ds \Bigg],
\end{aligned}
\end{equation}
for any $\alpha \in \mathcal{A}$.
\end{lemma}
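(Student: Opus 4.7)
The plan is to compute the difference $J^{\theta^\star}(u+\varepsilon \alpha)-J^{\theta^\star}(u)$ term by term, identify the coefficient of $\varepsilon$, and argue that the $O(\varepsilon^2)$ remainders vanish in $L^1$ so that the limit in \eqref{def:gateaux} can be taken under the expectation. The strict concavity of $u\mapsto J^{\theta^\star}(u)$ (inherited from the quadratic structure in $u$) ensures that the Gâteaux derivative exists and characterises the maximiser; this only needs to be invoked to justify the limit exchange, not to derive the formula itself.

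The main calculation goes as follows. First I would expand the perturbed quantities: from \eqref{def:Q} and \eqref{def:S_str} one has $Q^{u+\varepsilon\alpha}_t=Q^u_t-\varepsilon \int_0^t \alpha_s\, ds$, $Z^{\theta^\star,u+\varepsilon\alpha}_t=Z^{\theta^\star,u}_t+\varepsilon \int_0^t G^\star(t-s)\alpha_s\, ds$, and $S^{u+\varepsilon\alpha}_t=S^u_t-\varepsilon\lambda^\star \alpha_t -\varepsilon \int_0^t G^\star(t-s)\alpha_s\, ds$. Plugging these into \eqref{def:objective-str} and collecting the $O(\varepsilon)$ contributions yields four pieces inside the outer expectation: (i) from $\int_0^T S^u_t u_t\, dt$ the linear part is $\int_0^T\big(P_t\alpha_t-2\lambda^\star u_t\alpha_t - Z^{\theta^\star,u}_t\alpha_t - (\int_0^t G^\star(t-s)\alpha_s\, ds)u_t\big)dt$; (ii) from $Q^u_T P_T$ the linear part is $-P_T\int_0^T\alpha_s\, ds$; (iii) from $-\phi\int_0^T (Q^u_t)^2\, dt$ the linear part is $2\phi\int_0^T Q^u_t\big(\int_0^t\alpha_s\, ds\big)dt$; and (iv) from $-\varrho (Q^u_T)^2$ the linear part is $2\varrho Q^u_T \int_0^T\alpha_s\, ds$.

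The second step is to re-index each non-local-in-$\alpha$ piece via Fubini so that the whole expression is an $s$-integral weighted by $\alpha_s$. Concretely, I would use the identities
\begin{equation*}
\int_0^T\!\Big(\!\int_0^t G^\star(t-s)\alpha_s\, ds\Big)u_t\, dt=\int_0^T\!\alpha_s\!\int_s^T G^\star(t-s)u_t\, dt\, ds,
\end{equation*}
\begin{equation*}
\int_0^T Q^u_t\Big(\int_0^t \alpha_s\, ds\Big)dt=\int_0^T \alpha_s \int_s^T Q^u_t\, dt\, ds,
\end{equation*}
both of which are justified by Cauchy--Schwarz using $G^\star\in L^2$ and $u,\alpha\in \mathcal{A}$. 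Rewriting $P_T\int_0^T\alpha_s\, ds=\int_0^T \alpha_s P_T\, ds$ and similarly for the $Q^u_T$ term, and gathering all four $\alpha_s$-weighted integrands, I arrive exactly at the right-hand side of \eqref{eq:gateaux}.

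Finally I would dispose of the remainder. The $O(\varepsilon^2)$ contributions come only from the three quadratic pieces $-\lambda^\star\int_0^T(u+\varepsilon\alpha)_t^2\, dt$, $-\varrho(Q^{u+\varepsilon\alpha}_T)^2$, $-\phi\int_0^T(Q^{u+\varepsilon\alpha}_t)^2\, dt$, and from the cross term $\int_0^T Z^{\theta^\star,\varepsilon\alpha}_t\varepsilon \alpha_t\, dt$; each is of the form $\varepsilon^2$ times an $\mathbb{E}$-integrable quantity (bounded above using Cauchy--Schwarz, $\|\alpha\|_{\mathcal{H}^2}<\infty$ from \eqref{def:admissset}, and $\mathbb{E}[\langle M\rangle_T]+\mathbb{E}[(\int_0^T|dA_s|)^2]<\infty$ from \eqref{ass:P}). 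Dividing by $\varepsilon$ and letting $\varepsilon\to 0$ therefore yields \eqref{eq:gateaux}. The only mildly delicate step is ensuring the Fubini swaps and the $L^1$ bound on the quadratic remainders are uniform, but since all bounds reduce to finiteness of $\|u\|_{\mathcal{H}^2}$, $\|\alpha\|_{\mathcal{H}^2}$, $\|G^\star\|_{L^2}$ and the moment assumption \eqref{ass:P}, no further hypothesis is needed.
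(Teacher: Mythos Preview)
Your proposal is correct and follows exactly the standard computation the paper alludes to (the paper does not give its own proof but simply refers to \cite{NeumanVoss:20}, where the same term-by-term expansion followed by Fubini is carried out). The identification of the four $O(\varepsilon)$ contributions, the two Fubini swaps, and the $O(\varepsilon^2)$ remainder bound via $\|u\|_{\mathcal H^2}$, $\|\alpha\|_{\mathcal H^2}$, $\|G^\star\|_{L^2}$ are all as expected.
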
 

Since we are maximizing the strictly concave functional $u \mapsto J^{\theta^\star}(u)$ over $\mathcal A$, a necessary and sufficient condition for the optimality of $u^{\theta^\star} \in \mathcal A$ is given by
\begin{equation} \label{gat-0}
     \langle J^{\theta^\star}( u^{\theta^\star})', \alpha \rangle = 0 \quad  \text{for all } \alpha \in \mathcal A;
\end{equation}
see e.g.,~\cite{EkelTem:99}.  Now we are ready to prove Proposition \ref{lem-perfor}. 

\begin{proof} [Proof of Proposition \ref{lem-perfor}]
From \eqref{def:S_str} and Fubini's theorem we get that  
 \begin{equation} \label{y-u-dif} 
\begin{aligned}
&\int_0^T \big( Z_t^{\theta^\star,u} u_t - Z^{\theta^\star, u^{\theta^\star}}_t u^{\theta^\star}_t  \big)dt \\
&= \int_0^T (u_t-u_t^{\theta^\star})(Z_t^{\theta^\star,u}-Z^{\theta^\star, u^{\theta^\star}}_t ) dt \\ 
&\quad + \int_0^T (u_t-u_t^{\theta^\star}) Z^{\theta^{\star},u^{\theta^\star}}_t dt + \int_0^Tu_t^{\theta^\star}(Z^{\theta^\star, u}_t-Z^{\theta^\star, u^{\theta^\star}}_t )dt. 
\\
&= \int_0^T (u_t-u_t^{\theta^\star})(Z^{\theta^\star, u}_t-Z^{{\theta^\star, u^{\theta^\star}}}_t ) dt +\int_0^T (u_t-u_t^{\theta^\star}) Z^{{u^{\theta^\star, \theta^\star}}}_t dt \\
&\quad  +  \int_0^T (u_s-u_s^{\theta^\star})\int_s^T  G^\star(t-s) u^{\theta^\star}_t dt ds. 
\end{aligned}
\end{equation}
 
From \eqref{def:S_str} and application of Cauchy-Schwarz inequality twice we get  
 \be  \label{y-diff} 
\begin{aligned}
 &\mathbb{E} \left [ \Big| \int_0^T \big(u_t - u^{\theta^\star}_t  \big) (Z^{\theta^\star, u}_t-Z^{\theta^\star, u^{\theta^\star}}_t )dt \Big| \right] \\
  &\leq \left(  \mathbb{E} \left [  \int_0^T \big( Z^{\theta^\star, u}_t-Z^{\theta^\star, u^{\theta^\star}}_t  \big)^2dt  \right] \right)^{1/2}  \left(\mathbb{E} \left [ \int_0^T \big(u_t - u^{\theta^\star}_t  \big)^2dt \right] \right)^{1/2} \\
 &\leq   \|u^{\theta^\star}- u\|_{\mathcal H^2} \left(\mathbb{E} \left [  \int_0^T \left( \int_0^t G^\star(t-s)  \big( u_s - u^{\theta^\star}_s \big)ds   \right)^2    dt\right] \right)^{1/2} \\
  &\leq C\left(  \|u^{\theta^\star}- u\|_{\mathcal H^2} \right)^2 \left(   \int_0^T(G^\star(t-s))^2  ds   \right)^{1/2}    \\
  &\leq  C \|u^{\theta^\star}- u\|_{\mathcal H^2}^2, 
  \end{aligned}
 \ee
where we used Definition \ref{assum1:parameters} in the last inequality.

Using \eqref{def:Q} and Jensen's inequality we get that  
\be \label{x-diff} 
\begin{aligned}
 \mathbb{E} \big[\big(Q_T^u-Q_T^{u^{\theta^\star}} \big)^2 \big] & \leq C \mathbb{E} \left[ \int_{0}^T (u_t^{\theta^\star}- u_t)^2dt \right] \\
  & \leq C\|u^{\theta^\star}- u\|_{\mathcal H^2}^2.
 \end{aligned}
 \ee
 Similarly we have 
 \be \label{q-diff} 
\E \left[ \int_0^T\big(Q_t^u-Q_t^{u^{\theta^\star}} \big)^2 dt \right] \leq \|u^{\theta^\star}- u\|_{\mathcal H^2}^2.
 \ee
From \eqref{def:objective-str} and \eqref{y-u-dif} we therefore get for any $u \in \mathcal{A}$ that 
\begin{equation} 
\begin{aligned}
    & J^{\theta^\star}(u) -J^{\theta^\star}(u^{\theta^\star}) \\
    & =   \mathbb{E} \Bigg[\int_0^T P_t (u_t-u_t^{\theta^\star}) dt - \int_0^T \big( Z_t^{\theta^\star, u} u_t - Z^{\theta^\star, u^{\theta^\star}}_t u^{\theta^\star}_t  \big)dt \\
    &\qquad  - \lambda \int_0^T\big( u_t^2 - (u^{\theta^\star}_t)^2 \big) dt   -\phi \int_0^T\big( (Q_t^u)^2-(Q_t^{u^{\theta^\star}})^2 \big) dt  +(Q_T^u-Q_T^{u^{\theta^\star}})P_T \\
    &\qquad  - \varrho \big((Q_T^u)^2-(Q_T^{u^{\theta^\star}})^2 \big)\Bigg] \\
         & =   \mathbb{E} \Bigg[  - \int_0^T \big( Z_t^{\theta^\star, u} - Z^{\theta^\star,u^{\theta^\star}}_t \big)\big(u_t - u^{\theta^\star}_t  \big)dt  - \lambda \int_0^T\big( u_t^{\theta^\star}- u_t\big)^2 dt \\
     &\qquad  -\phi \int_0^T\big(Q_t^u-Q_t^{u^{\theta^\star}} \big)^2 dt - \varrho \big(Q_T^u-Q_T^{u^{\theta^\star}} \big)^2\Bigg] \\
        & \qquad +  \mathbb{E} \Bigg[ \int_0^T P_t (u_t-u_t^{\theta^\star}) dt 
        -\int_0^T (u_t-u_t^{\theta^\star})Z^{{\theta^\star,u^{\theta^\star}}}_t  dt \\
        &\qquad \qquad -   \int_0^T (u_s-u_s^{\theta^\star})\int_s^TG^\star(t-s)  u^{\theta^\star}_t dt ds 
 -2 \lambda \int_0^Tu_t^{\theta^\star} \big( u_t - u_t^{\theta^\star}  \big) dt \\
&\qquad \qquad + (Q_T^{u^{\theta^\star}}- Q_T^u) P_T  +2\phi \int_0^TQ_t^{u^{\theta^\star}} \big( Q_t^u-Q_t^{u^{\theta^\star}} \big) dt \\
      &\qquad \qquad+ 2\varrho Q_T^{u^{\theta^\star}} \big(Q_T^u-Q_T^{u^{\theta^\star}} \big)\Bigg].
\end{aligned}
\end{equation}
Since $ u_t - u_t^{\theta^\star} \in \mathcal{A}$ it follows from \eqref{def:Q}, \eqref{eq:gateaux} and \eqref{gat-0} that 
\begin{equation} 
\begin{aligned}
    & J^{\theta^\star}(u) -J^{\theta^\star}(u^{\theta^\star}) \\
           & =   \mathbb{E} \Bigg[  - \int_0^T \big( Z_t^{\theta^\star, u} - Z^{\theta^\star, u^{\theta^\star}}_t \big)\big(u_t - u^{\theta^\star}_t  \big)dt  - \lambda \int_0^T\big( u_t^{\theta^\star}- u_t\big)^2 dt \\
     &\qquad   -\phi \int_0^T\big(Q_t^u-Q_t^{u^{\theta^\star}} \big)^2 dt - \varrho \big(Q_T^u-Q_T^{u^{\theta^\star}} \big)^2\Bigg]. 
 \end{aligned}
\end{equation}
Together with \eqref{y-diff} and \eqref{x-diff} we get that 
\begin{equation*} 
\begin{aligned}
  | J^{\theta^\star}(u^{\theta^\star}) -J^{\theta^\star}(u)| & \leq  C \|u^{\theta^\star}- u\|^2_{\mathcal H^2},
  \end{aligned}
  \end{equation*}
which completes the proof.   
\end{proof} 
The following proposition proves the  stability of $u^{\theta^\star}$ with respect to the parameter $\theta^\star$.  The proof of Proposition \ref{prop:u_lipschitz} is postponed to Section \ref{sec-pf-prop-u}.
 \begin{proposition}
\label{prop:u_lipschitz}
For each $\theta \in \Xi_{\eps} $ let $u^{\theta}$ be defined as in \eqref{eq:optimalcontrol_monotone_str}. Then, there exists a constant $C> 0$
such that 
$$
\|u^{\theta}-u^{\theta'}\|_{\mathcal H^2}
\le 
C\left(
|\lambda-\lambda'|  +\|G-G'\|_{L^2([0,T])}\right), \quad \textrm{for all } \theta,\theta' \in  \Xi_{\eps}. 
$$
 \end{proposition}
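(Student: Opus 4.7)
The plan is to derive Lipschitz bounds on each ingredient of the explicit representation \eqref{eq:aB}--\eqref{K-t} of $(a^\theta, B^\theta)$ and then propagate them through the Volterra equation \eqref{volt-u} via a Gronwall-type argument. The central technical step is a uniform coercivity of $\boldsymbol{D}_t^\theta$ on $\Xi_\eps$: from \eqref{h-tilde}--\eqref{eq:schur}, a direct symmetrisation gives
\begin{equation*}
\langle (\boldsymbol{\tilde G}_t + \boldsymbol{\tilde G}_t^*) f,f \rangle_{L^2} = 2\varrho\Big(\int_t^T f(s)\,ds\Big)^2 + \int_t^T\int_t^T G(|s-u|)\, f(s)f(u)\,ds\,du,
\end{equation*}
so applying \eqref{eps-pos-def} to $f\mathds{1}_{[t,T]}$, together with $\lambda\ge L^{-1}$ and $\eps< L^{-1}/2$, yields $\langle \boldsymbol{D}_t^\theta f,f\rangle_{L^2} \ge (2\lambda-\eps)\|f\|_{L^2}^2 \ge \tfrac{3}{2}L^{-1}\|f\|_{L^2}^2$ uniformly in $t\in[0,T]$ and $\theta\in \Xi_\eps$. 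Hence $\|(\boldsymbol{D}_t^\theta)^{-1}\|_{\mathrm{op}}$ is uniformly bounded and, via \eqref{op-gam-inv}, so is $\|\boldsymbol{\Gamma}_t^{-1,\theta}\|_{\mathrm{op}}$.

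With uniform invertibility secured, Lipschitz continuity of $\theta\mapsto \boldsymbol{\Gamma}_t^{-1,\theta}$ follows from standard perturbation arguments. Since $\boldsymbol{D}_t^\theta$ depends affinely on $(\lambda, G)$ and the Hilbert--Schmidt bound gives $\|\boldsymbol{\tilde G}_t^{\theta}-\boldsymbol{\tilde G}_t^{\theta'}\|_{\mathrm{op}}\le \sqrt{T}\|G-G'\|_{L^2}$, the resolvent identity
\begin{equation*}
(\boldsymbol{D}_t^\theta)^{-1}-(\boldsymbol{D}_t^{\theta'})^{-1}=(\boldsymbol{D}_t^\theta)^{-1}\bigl(\boldsymbol{D}_t^{\theta'}-\boldsymbol{D}_t^\theta\bigr)(\boldsymbol{D}_t^{\theta'})^{-1}
\end{equation*}
produces a uniform bound $\|(\boldsymbol{D}_t^\theta)^{-1}-(\boldsymbol{D}_t^{\theta'})^{-1}\|_{\mathrm{op}}\le C(|\lambda-\lambda'|+\|G-G'\|_{L^2})$, and the block formula \eqref{op-gam-inv} transfers this to $\boldsymbol{\Gamma}_t^{-1,\theta}$. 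The auxiliary kernel $K_t^\theta$ in \eqref{K-t} is affine in $G$, and substituting into \eqref{th-sol} shows that $\theta\mapsto \Theta^\theta$ is Lipschitz in the appropriate norm, using the second-moment bound on $A$ from \eqref{ass:P}.

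Assembling these ingredients in the explicit formulas \eqref{eq:aB} (expanding each difference one ingredient at a time and bounding the unperturbed factors uniformly by the preceding steps) yields
\begin{equation*}
\|a^\theta - a^{\theta'}\|_{\mathcal H^2}^2 + \sup_{t\le T}\int_0^t \bigl(B^\theta(t,s)-B^{\theta'}(t,s)\bigr)^2\,ds \le C\bigl(|\lambda-\lambda'|+\|G-G'\|_{L^2}\bigr)^2,
\end{equation*}
with $C$ depending only on $L,\eps,\phi,\varrho,T$ and on the data through \eqref{ass:P}. Taking $\theta'$ to be a fixed reference point in $\Xi_\eps$ in the same estimate provides uniform bounds on $\|a^\theta\|_{\mathcal H^2}$ and $\sup_{t\le T}\int_0^t B^\theta(t,s)^2 ds$ over $\Xi_\eps$, which in turn give a uniform $\mathcal H^2$-bound on $u^\theta$ via \eqref{volt-u}. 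Subtracting \eqref{volt-u} for $u^\theta$ and $u^{\theta'}$,
\begin{equation*}
 u^\theta_t - u^{\theta'}_t = (a^\theta_t - a^{\theta'}_t) + \int_0^t \bigl(B^\theta(t,s) - B^{\theta'}(t,s)\bigr) u^\theta_s\,ds + \int_0^t B^{\theta'}(t,s)\,(u^\theta_s - u^{\theta'}_s)\,ds,
\end{equation*}
squaring, taking expectations, applying Cauchy--Schwarz and invoking a Volterra Gronwall inequality in $L^2(\Omega\times[0,T])$ delivers the claim.

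The main obstacle is the uniform coercivity established in the first step: the running cost \eqref{def:objective-str} is not strongly concave in the control, so without the near-positive-definiteness condition in Definition \ref{assum1:parameters} together with $\lambda\ge L^{-1}$, there would be no uniform lower bound on $\boldsymbol{D}_t^\theta$ across $\Xi_\eps$, and the resolvent and Volterra arguments would both collapse. Once this coercivity is secured, the remaining work is a careful but routine bookkeeping of perturbations through the compositions in \eqref{op-gam-inv}, \eqref{th-sol} and \eqref{eq:aB}.
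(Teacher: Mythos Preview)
Your proposal is correct and follows essentially the same route as the paper: uniform coercivity of $\boldsymbol{D}_t^\theta$ on $\Xi_\eps$, then Lipschitz stability of $(\boldsymbol{\Gamma}_t^\theta)^{-1}$, $K_t^\theta$ and $\Theta_t^\theta$ via the resolvent identity and the block formula \eqref{op-gam-inv}, then of $(a^\theta,B^\theta)$ via \eqref{eq:aB}, and finally a Gronwall argument on the Volterra equation \eqref{volt-u}. Your coercivity step is in fact slightly more direct than the paper's (which introduces an auxiliary shifted operator $\boldsymbol{S}_t^\theta$ and argues through its spectral decomposition), and your lower bound $2\lambda-\eps\ge \tfrac{3}{2}L^{-1}$ is sharper than the paper's stated $L^{-1}-2\eps$, but the overall architecture of the argument is identical.
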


Now we are ready to prove Theorem \ref{thm:gap}. 
\begin{proof}[Proof of Theorem \ref{thm:gap}]
The proof of Theorem \ref{thm:gap} follows immediately from Propositions \ref{lem-perfor} and \ref{prop:u_lipschitz}. 
\end{proof}

\section{Analysis of  regularised least-squares estimators}  \label{sec-est} 

This section 
 quantifies   the convergence rate of  $(\theta^N)_{N\in \sN}$ 
 in \eqref{eq:lse_abbreviation} 
   in high probability,
   and hence proves Theorem \ref{thm:estimate_error_high_probability}.
To simplify the notation, for each $f\in H^1([0,T],\sR)$,   we denote by $\dot{f}$ the derivative of $f$.

 The following lemma proves the injectivity of    $\boldsymbol{u}^e   $
in \eqref{eq:operator_u},
and further characterises the range of the adjoint operator
$(\boldsymbol{u}^e )^*$. 

 \begin{lemma}
 \label{lemma:u_operator_property}
 Suppose   Assumption \ref{assum:learning} 
 holds. Then 
  $\boldsymbol{u}^e:   L^2([0,T], \sR)\to   L^2([0,T],\sR) $ is injective and compact.
Moreover,   the range of the adjoint operator $(\boldsymbol{u}^e)^*$ is given by 
  $\mathscr{R}((\boldsymbol{u}^e)^*) =\{f\in H^1([0,T],\sR)\mid f(T)=0\} $.
  \end{lemma}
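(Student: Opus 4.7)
The plan is to establish the three assertions in turn. For \textbf{compactness}, I would observe that $\boldsymbol{u}^e$ is a Volterra integral operator with kernel $k(t,s)=u^e(t-s)\mathds{1}_{\{s<t\}}$; since $u^e\in H^1([0,T],\sR)$ embeds continuously into $C([0,T],\sR)$, this kernel is bounded and hence lies in $L^2([0,T]^2,\sR)$, so $\boldsymbol{u}^e$ is Hilbert--Schmidt and in particular compact.

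For \textbf{injectivity}, suppose $\boldsymbol{u}^e f=0$ for some $f\in L^2([0,T],\sR)$. Since $u^e\in H^1$, the map $t\mapsto\int_0^t u^e(t-s)f(s)\,\d s$ is absolutely continuous, and a.e.~differentiation yields
\[
0=u^e(0)f(t)+\int_0^t \dot u^e(t-s) f(s)\,\d s,\qquad \text{a.e.~}t\in[0,T].
\]
Dividing by $u^e(0)\neq 0$ reduces this to a homogeneous Volterra equation of the second kind with square-integrable kernel, whose only $L^2$-solution is $f\equiv 0$.

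For the \textbf{range of the adjoint}, the adjoint acts as $((\boldsymbol{u}^e)^* g)(s)=\int_s^T u^e(t-s)g(t)\,\d t$. The inclusion $\subseteq$ is immediate from the Leibniz rule: for $g\in L^2([0,T],\sR)$, writing $f\coloneqq(\boldsymbol{u}^e)^* g$ one has $f(T)=0$ and
\[
\dot f(s)=-u^e(0)g(s)-\int_s^T \dot u^e(t-s) g(t)\,\d t\in L^2([0,T],\sR),
\]
so $f\in H^1$. For the reverse inclusion, given $f\in H^1$ with $f(T)=0$, I would construct a preimage $g$ by solving the Volterra equation of the second kind
\[
g(s)=-\frac{1}{u^e(0)}\Bigl(\dot f(s)+\int_s^T \dot u^e(t-s)g(t)\,\d t\Bigr)
\]
for $g\in L^2([0,T],\sR)$, and then verify that $F(s)\coloneqq\int_s^T u^e(t-s) g(t)\,\d t$ satisfies both $F(T)=0=f(T)$ and $\dot F=\dot f$, forcing $F=f$, i.e.~$f\in \mathscr{R}((\boldsymbol u^e)^*)$.

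The main obstacle I anticipate is the rigorous handling of the two Volterra second-kind equations above, whose kernel $\dot u^e$ is only square-integrable rather than bounded, together with the justification of differentiation under the integral for merely $L^2$ integrands. Both issues can be resolved by approximating $g$ (respectively $f$) by smooth functions and passing to the $L^2$-limit, exploiting the quasi-nilpotency of the convolution Volterra operator induced by an $L^2$-kernel on $L^2([0,T],\sR)$, which gives a convergent Neumann series for the resolvent.
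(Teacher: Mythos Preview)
Your proposal is correct and follows essentially the same route as the paper: Hilbert--Schmidt for compactness, differentiation of the convolution plus invertibility of the Volterra second-kind operator $u^e(0)\id+\boldsymbol{\dot u}^e$ for injectivity, and the same derivative identity on the adjoint side for the range characterisation. The only cosmetic difference is that the paper dispatches the two second-kind equations by citing \cite[Chapter 9, Corollary 3.16]{gripenberg1990volterra}, which packages exactly the quasi-nilpotency/Neumann-series argument you outline; your discussion of the Leibniz rule for $L^2$ integrands is in fact more careful than the paper's, which simply invokes it.
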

 
 \begin{proof}
 The compactness of  $\boldsymbol{u}^e   $ follows from the fact that 
 it is a  Hilbert-Schmidt integral operator.
  Next we show that $\boldsymbol{u}^e$ is injective.
   Let $f\in  L^2([0,T],\sR)$ such that 
   $(\boldsymbol{u}^e f)(t)= 0$ for a.e.~$t\in [0,T]$.
 By 
\eqref{eq:operator_u} and
  the Leibniz integral rule,
for   {a.e.~$t\in [0,T]$},
\begin{align*}
0&=\frac{d }{d t}\left(\int_0^t  u^e({t-s}) f(s)  d s\right)
=u^e(0)f(t)+\int_0^t  \dot{u}^e({t-s}) f(s)  d s
\\
&=\left((u^e(0)\id +\boldsymbol{\dot{u}}^e)f\right)(t), 
\end{align*}
where   $\boldsymbol{\dot{u}}^e$ is   the Volterra operator 
induced by the square-integrable   kernel
   $[0,T]^2\ni (t,s)\mapsto  \dot{u}^e({t-s})  \mathds  1_{\{s<t\}}\in \sR$.
This along with 
  \cite[Chapter 9, Corollary 3.16]{gripenberg1990volterra}
   and    $u^e(0)\not =0$
   yields 
 $f =   0$. This proves  
 the injectivity of  $\boldsymbol{u}^e$.

It remains to characterise the range of $(\boldsymbol{u}^e)^*$.
By \eqref{eq:operator_u} and Fubini's theorem,
 the adjoint 
of $\boldsymbol{u}^e$ satisfies 
for all $f\in L^2([0,T],\sR)$ and a.e.~$t\in [0,T]$, 
$\left((\boldsymbol{u}^e)^* f\right)(t) 
=\int_t^T u^e({s-t}) f(s) d s$,
and hence $\left((\boldsymbol{u}^e)^* f\right)(T)=0$.
By   Assumption \ref{assum:learning} 
and  the Leibniz integral rule,
for all   $f\in L^2([0,T],\sR)$ and 
for a.e.~$t\in [0,T]$,
\begin{align}
\label{eq:u^e_derivative}
\begin{split}
\frac{d }{d t}\left((\boldsymbol{u}^e)^* f\right)(t)
& =
 \frac{d }{d t}\left(\int_t^T u^e({s-t}) f(s) d s \right)
=- u^e(0)f(t) -\int_t^T   \dot{u}^e({s-t}) f(s)  d s
\\
&= -\left((u^e(0) \id+\boldsymbol{\dot{u}}^e)^*  f\right)(t),
\end{split}
\end{align}
where $(\boldsymbol{\dot{u}}^e)^*$ is the adjoint of   $\boldsymbol{\dot{u}}^e$. 
This along with the integrability of $f$ implies that 
$\frac{d }{d t}\left((\boldsymbol{u}^e)^* f\right)\in L^2([0,T],\sR)$
and hence
$\mathscr{R}\left((\boldsymbol{u}^e)^*\right)\subset \{f\in H^1([0,T],\sR)\mid f(T)=0\} $. 
On the other hand,
let $f\in H^1([0,T],\sR)$ be such that $f(T)=0$.
Then by \eqref{eq:u^e_derivative},  \cite[Chapter 9, Corollary 3.16]{gripenberg1990volterra}
   and    $u^e(0)\not =0$,
   there exists $g\in L^2([0,T],\sR)$ such that 
$\frac{d }{d t}\left((\boldsymbol{u}^e)^* g\right)= \dot{f}$,
which along with $f(T)=\left((\boldsymbol{u}^e)^* g\right)(T)=0$ implies that 
$f\equiv  (\boldsymbol{u}^e)^* g $.
This shows 
$\{f\in H^1([0,T],\sR)\mid f(T)=0\} \subset \mathscr{R}\left((\boldsymbol{u}^e)^*\right)$
and finishes the proof.
 \end{proof} 

To  analyse the convergence rate of \eqref{eq:lse_abbreviation},
we    
  interpret \eqref{eq:lse_G}  as a Tikhonov regularisation for \eqref{def:S_m_estimation}.
We first  adapt   
the   theoretical  framework of Tikhonov regularisation   for (deterministic) linear inverse problems to the present  setting. 
%
Let $(X,\|\cdot\|_X)$ 
be a Hilbert space,
and let $K:X\to X$ be an injective compact linear operator with a non-closed range $\mathscr{R}(K)$.
Let $x_0, y_0\in X$ be such that 
$y_0=Kx_0$, 
let 
    $y^\delta\in X$ be a noisy observation  
 of $y_0$,
 and   consider   approximations of 
 $y_0$ via the regularised Ritz approach.  
Specifically, 
let $(V_m)_{m\in \sN}\subset X$ be a  sequence of finite-dimensional subspaces  such that 
  $\ol{\bigcup_{m=1}^\infty V_m} =X$.
For each $\alpha>0$ and $m\in \sN$, 
let   $x^{\delta}_{\alpha,m}$ be the 
unique  minimiser of 
the following Tikhonov functional:
\begin{equation}\label{eq:tikhonov_projection}
x^{\delta}_{\alpha,m} =
\argmin_{x\in V_m}\left(
\|Kx-y^\delta\|^2_Y+\alpha \|x\|^2_X
\right)
=(K^*_mK_m+\alpha \id)^{-1}K^*_m y^\delta,
\end{equation}
where $K_m =KP_m$ and
  $P_m$ is the orthogonal projection 
of $X$ onto $V_m$.

The following lemma quantifies   $\|x^{\delta}_{\alpha,m}-x_0\|_X $ in terms of 
   $\alpha, m$ and  the   error 
$\|y^\delta -y_0\|_X$.
The proof essentially combines the results in    \cite{groetsch1984theory, hofmann2006approximate},
and is presented below for the reader's convenience. 


\begin{lemma}
\label{lemma:general_error}
For each  $  \alpha>0$ and $m\in \sN$,
let $x^{\delta}_{\alpha,m }\in X$ be defined by 
\eqref{eq:tikhonov_projection},
and 
for each $R>0$, let
 $\mathscr{D}(R)=\inf\{\|x_0-K^* v\|_X\mid v\in X, \|v\|_X\le R\}$.
 Then for all $ \alpha,R>0$ and $m\in \sN$,
\begin{align*}
\|x^{\delta}_{\alpha,m }-x_0\|_X
& \le 
\frac{\|y_0-y^\delta\|_X }{2 \sqrt{\alpha}}
+\gamma_m \sqrt{1+\frac{\gamma^2_m}{\alpha}}
\left(R+\frac{1}{2\sqrt{\alpha} }\mathscr{D}(R) \right)
+\mathscr{D}(R)+\frac{\sqrt{\alpha} }{2}R,
\end{align*}
where
   $\gamma_m=\|  (\id-P_m)K^* \|_{\rm op}$. \end{lemma}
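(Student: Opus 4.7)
The plan is to split the total error as $x^\delta_{\alpha,m}-x_0=(x^\delta_{\alpha,m}-x_{\alpha,m})+(x_{\alpha,m}-x_0)$ with $x_{\alpha,m}\coloneqq (K_m^*K_m+\alpha\id)^{-1}K_m^*y_0$ the projected Tikhonov minimiser corresponding to noiseless data, and to estimate the noise-propagation and deterministic-approximation pieces using spectral calculus for $(K_m^*K_m+\alpha\id)^{-1}$ together with the source representation hidden in $\mathscr{D}(R)$.

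The noise piece satisfies $x^\delta_{\alpha,m}-x_{\alpha,m}=(K_m^*K_m+\alpha\id)^{-1}K_m^*(y^\delta-y_0)$, so the spectral identity $\sup_{\sigma\ge 0}\sigma/(\sigma^2+\alpha)=1/(2\sqrt{\alpha})$ applied to the singular values of $K_m$ yields the contribution $\|y_0-y^\delta\|_X/(2\sqrt{\alpha})$. For the deterministic piece I would fix $\varepsilon>0$ and pick $v\in X$ with $\|v\|_X\le R$ such that $r\coloneqq x_0-K^*v$ satisfies $\|r\|_X\le \mathscr{D}(R)+\varepsilon$, as allowed by the infimum defining $\mathscr{D}(R)$; then the normal equation combined with the algebraic identity $K_m^*K-K_m^*K_m=K_m^*K(\id-P_m)$ and the substitution $x_0=K^*v+r$ produces
\begin{equation*}
x_{\alpha,m}-x_0=T_v+T_r,\qquad T_{w}\coloneqq (K_m^*K_m+\alpha\id)^{-1}\bigl[K_m^*K(\id-P_m)w-\alpha w\bigr],
\end{equation*}
for $w\in\{K^*v,r\}$.

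To handle $T_v$ I would further decompose $K^*v=K_m^*v+(\id-P_m)K^*v$ and use that $\alpha(K_m^*K_m+\alpha\id)^{-1}$ reduces to $\id$ on $V_m^\perp$, which yields $T_v=A_v+B_v^{V_m}+B_v^{V_m^\perp}$ with $A_v,B_v^{V_m}\in V_m$ and $B_v^{V_m^\perp}\coloneqq -(\id-P_m)K^*v\in V_m^\perp$. The spectral bounds $\|(K_m^*K_m+\alpha\id)^{-1}K_m^*\|_{\rm op}\le 1/(2\sqrt{\alpha})$ and $\|\alpha(K_m^*K_m+\alpha\id)^{-1}K_m^*\|_{\rm op}\le\sqrt{\alpha}/2$, together with the projection identities $\|(\id-P_m)K^*\|_{\rm op}=\gamma_m$ and $\|K(\id-P_m)K^*\|_{\rm op}=\gamma_m^2$ (the latter from the factorisation $K(\id-P_m)K^*=((\id-P_m)K^*)^*((\id-P_m)K^*)$, since $\id-P_m$ is a self-adjoint projection), give $\|A_v\|_X\le \gamma_m^2R/(2\sqrt{\alpha})$, $\|B_v^{V_m}\|_X\le\sqrt{\alpha}R/2$, and $\|B_v^{V_m^\perp}\|_X\le\gamma_m R$. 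Pythagoras on the orthogonal pair $A_v$ and $B_v^{V_m^\perp}$ yields $\|A_v+B_v^{V_m^\perp}\|_X\le\gamma_m R\sqrt{1+\gamma_m^2/(4\alpha)}\le\gamma_m R\sqrt{1+\gamma_m^2/\alpha}$, so that $\|T_v\|_X\le\gamma_m R\sqrt{1+\gamma_m^2/\alpha}+\sqrt{\alpha}R/2$. For $T_r$ the direct triangle inequality is sharp enough and yields $\|T_r\|_X\le\gamma_m(\mathscr{D}(R)+\varepsilon)/(2\sqrt{\alpha})+(\mathscr{D}(R)+\varepsilon)$.

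The main obstacle is producing the compact factor $\gamma_m\sqrt{1+\gamma_m^2/\alpha}$ instead of the looser additive bound $\gamma_m+\gamma_m^2/(2\sqrt{\alpha})$ that a naive triangle inequality on $T_v$ would give; the remedy is the Pythagorean pairing of the $V_m$-contribution from the $K(\id-P_m)K^*$ block with the $V_m^\perp$-contribution from $(\id-P_m)K^*$. Summing the estimates of $T_v$ and $T_r$, and letting $\varepsilon\downarrow 0$, produces the stated inequality, with the classical Tikhonov bias $\sqrt{\alpha}R/2+\mathscr{D}(R)$ emerging from the $B_v^{V_m}$ and $-\alpha(K_m^*K_m+\alpha\id)^{-1}r$ contributions.
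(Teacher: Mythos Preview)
Your argument is correct and in fact yields a slightly sharper bound than the one stated (your $\gamma_m\mathscr{D}(R)/(2\sqrt{\alpha})$ does not carry the factor $\sqrt{1+\gamma_m^2/\alpha}$), so the lemma follows a fortiori.

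The route, however, differs from the paper's. The paper uses a three-term decomposition
\[
x^\delta_{\alpha,m}-x_0=(x^\delta_{\alpha,m}-x_{\alpha,m})+(x_{\alpha,m}-x_\alpha)+(x_\alpha-x_0),
\]
introducing the \emph{unprojected} Tikhonov minimiser $x_\alpha=(K^*K+\alpha\id)^{-1}K^*y_0$, and then invokes two off-the-shelf results: the Ritz stability estimate $\|x_{\alpha,m}-x_\alpha\|\le\sqrt{1+\gamma_m^2/\alpha}\,\|(\id-P_m)x_\alpha\|$ from \cite[Lemma~4.2.8]{groetsch1984theory} and the distance-function bound $\|x_\alpha-x_0\|\le\mathscr{D}(R)+\tfrac{\sqrt{\alpha}}{2}R$ from \cite[Lemma~1]{hofmann2006approximate}; the term $\|(\id-P_m)x_\alpha\|$ is then estimated by writing $x_\alpha=K^*(KK^*+\alpha\id)^{-1}y_0$ and repeating the approximate-source argument. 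Your proof bypasses both external lemmas by working directly with the projected normal equation, exploiting the algebraic identity $K_m^*K-K_m^*K_m=K_m^*K(\id-P_m)$ and the block structure of $(K_m^*K_m+\alpha\id)^{-1}$ on $V_m\oplus V_m^\perp$; the Pythagorean step replaces the role of Groetsch's lemma. The paper's approach is more modular and makes the classical Tikhonov bias $\|x_\alpha-x_0\|$ visible as a separate object, while yours is fully self-contained and marginally tighter.
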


\begin{proof}
For each $m\in \sN$ and $\alpha>0$, 
let 
$x_{\alpha, m }=(K^*_mK_m+\alpha \id)^{-1}K^*_m y_0 $ 
and 
$x_{\alpha}=(K^*K+\alpha \id)^{-1}K^* y_0 $. 
Then for all   $m\in \sN$ and $ \alpha>0$, 
\be \label{eq-tyk} 
\begin{aligned}
\|x^{\delta}_{\alpha,m }-x_0\|_X
& \le 
\|x^{\delta}_{\alpha,m }-x_{\alpha, m }\|_X
+
\|x_{\alpha, m } -x_{\alpha }\|_X
+
\|x_{\alpha}-x_0\|_X
\\
&\le 
\frac{\|y_0-y^\delta\|_X }{2 \sqrt{\alpha}}
+\sqrt{1+\frac{\gamma^2_m}{\alpha}}
\|(\id-P_m)x_\alpha\|_X
+\mathscr{D}(R)+\frac{\sqrt{\alpha} }{2}R,
\end{aligned}
\ee
where for the second inequality in \eqref{eq-tyk} is derived as follows: the first 
term used the spectral inequality 
$\|(A^*A+\alpha \id)^{-1}A^*\|_{\rm op}
\le \sup_{\lambda \ge 0}\frac{\sqrt{\lambda}}{\lambda+\alpha}
\le
\frac{1}{2\sqrt{\alpha}}$ 
for any compact operator $A$ (e.g., \cite[ p.~45, equation (2.48)]{engl1996regularization}),
the   second term used 
  \cite[Lemma 4.2.8]{groetsch1984theory}, and the third term 
used    \cite[Lemma 1]{hofmann2006approximate}.

Finally, by 
$K^*(KK^*+\alpha \id)
= (K^*K+\alpha \id) K^*$,
 $x_\alpha 
= (K^*K+\alpha \id)^{-1}K^*y_0
= K^*(KK^*+\alpha \id)^{-1}y_0$. 
Thus by  the definition of $\gamma_m$,
\begin{align*}
\|(\id -P_m)x_\alpha\|_X
&=
\|(\id -P_m)K^*(KK^*+\alpha \id)^{-1}K x_0\|_X
\le \gamma_m \|(KK^*+\alpha \id)^{-1}Kx_0 \|_X.
\end{align*}
For each $R>0$, let $(v^R_n)_{n\in \sN}\subset   X$ such that 
$\lim_{n\to \infty}\|x_0-K^*v^R_n\|_X=\mathscr{D}(R)$ and $\|v^R_n\|_X\le R$ for all $n\in\sN$.  
Then  for all $n\in \sN$, by spectral inequalities (see \cite[ p.~45]{engl1996regularization}),
\begin{align*}
 \|(KK^*+\alpha \id)^{-1}Kx_0 \|_X
 &= 
  \|(KK^*+\alpha \id)^{-1}K(K^* v^R_n+x_0 -K^* v^R_n)\|_X
  \\
  &\le 
    \| v^R_n\|_X+\frac{1}{2\sqrt{\alpha} }\|x_0 -K^* v^R_n\|_X
    \le R+\frac{1}{2\sqrt{\alpha} }\|x_0 -K^* v^R_n\|_X,
\end{align*}
from which by passing $n\to\infty$ yields
$ \|(KK^*+\alpha \id)^{-1}Kx_0 \|_X
    \le R+\frac{1}{2\sqrt{\alpha} }\mathscr{D}(R)$.  
Combining the above estimate with \eqref{eq-tyk} leads to the desired result.  
\end{proof}

We now apply Lemma \ref{lemma:general_error} in order to analyse  
\eqref{eq:lse_abbreviation} under a general regularity assumption of the kernel $G^\star$.
To this end, 
let $(V_N)_{N\in\sN}\subset L^2([0,T],\sR)$ be a given family of    finite-dimensional subspaces
such that  $\ol{\bigcup_{N=1}^\infty V_N} =L^2([0,T],\sR)$,
and consider a slight generalisation of   \eqref{eq:lse_abbreviation},
where   $G^N$, $N\in \sN$, is defined as the minimiser  of    \eqref{eq:lse_G}
over   $V_N$.
Then, by the first-order condition to 
\eqref{eq:lse_G},  for all $N\in \sN$, 
\begin{equation}
\label{eq:G^N_analytic}
 {G}^N 
 = \left((\boldsymbol{u}^e_N)^*\boldsymbol{u}^e_N+\tau_N \id \right)^{-1} 
(\boldsymbol{u}^e_N)^* \left(-\frac{1}{N}\sum_{m=1}^N (S^m-A^m+\lambda^N u^e)\right),
\end{equation}
with $\boldsymbol{u}^e_N \coloneqq  \boldsymbol{u}^e P_N$, where $P_N$
  is the   projection 
of $L^2([0,T],\sR)$ onto $V_N$.
For each $N\in \sN$, let 
\be \label{gamma-n} 
\gamma_N=\|  (\id-P_N)(\boldsymbol{u}^e)^* \|_{\rm op}.
\ee 
Note that if $(V_N)_{N\in\sN}$ is the space of piecewise constant functions as in \eqref{eq:piecewise_constant},
then 
there exists  $C>0$, depending   on $u^e$,
such that 
$\gamma_N \le C|\pi_N|$ for all $N\in \sN$.
Indeed,  for all $N\in \sN$ and $f\in L^2([0,T],\sR)$,
 \begin{equation}
 \label{eq:piecewise_constant_error}
\|  (\id-P_N)(\boldsymbol{u}^e)^* f\|_{L^2([0,T])}
\le |\pi_N|\|\tfrac{d}{d t}((\boldsymbol{u}^e)^* f)\|_{L^2([0,T])}
\le C|\pi_N|\|f\|_{L^2([0,T])},
\end{equation}
where the first and second inequalities used 
\cite[Theorem 6.1]{schumaker2007spline}
and 
  \eqref{eq:u^e_derivative}, respectively.
We further 
introduce the following    function
 $\mathscr{D}:(0,\infty)\to [0,\infty)$
to  measure the  regularity    of   $G^\star$:
 for all $R>0$,
   \begin{equation}\label{eq:distance_function}
   \mathscr{D}(R)=\inf\{\|G^\star-(\boldsymbol{u}^e)^* v\|_{L^2([0,T])}\mid v\in L^2([0,T],\sR), \|v\|_{L^2([0,T])}\le R\}.
   \end{equation}
 Note that  $G^\star \in  \mathscr{R}((\boldsymbol{u}^e)^*)$ if and only if $\mathscr{D} (R)=0$ for all large $R>0$.
 As most commonly used kernels are not in  $\mathscr{R}((\boldsymbol{u}^e)^*)$ (see Remark \ref{rmk:G_regularity_example} and Lemma \ref{lemma:u_operator_property}), 
 the subsequent analysis     focuses  
on the case with
   $\mathscr{D}(R)>0$ for all  $R>0$.

   The following theorem presents a general version of Theorem \ref{thm:estimate_error_high_probability},
   and 
  quantifies the accuracy of 
 $(\theta^N)_{N\in \sN}$  
 under a power-type decay rate of the function $\mathscr{D}$.

  \begin{theorem}
  \label{thm:estimate_error_general}
  Suppose Assumptions \ref{assum:learning} 
  and \ref{assum:concentration_M}
  hold.
Assume further that  there exists    $\beta\in (0,1]$ such that 
  $\lim\sup_{R\to\infty}\mathscr{D}(R)R^{\beta } <\infty $.
  Let $ {C}\ge 1$. 
 Then 
 for all $\eta\in (0,1)$, 
by setting 
$(\tau_N)_{N\in \sN}\subset (0,\infty)$ and $(V_N)_{N\in \sN}\subset L^2([0,T],\sR)$
such that 
$$
\textstyle
\frac{1}{ {C}}
\left(\frac{  \log(  {\eta}^{-1})+\log N    }{\sqrt{N}}
\right)^{\frac{2(\beta+1)}{2\beta+1}}
\le
\tau_N \le  {C} 
\left(\frac{  \log(  {\eta}^{-1})+\log N    }{\sqrt{N}}
\right)^{\frac{2(\beta+1)}{2\beta+1}},
\quad 
\gamma_{N} \le   {C}  \tau_N^{1/2}, 
$$
it holds 
  with probability at least $1-\eta$ that,
  for all $N\in \sN\cap [2,\infty)$,
$$\textstyle
  |\lambda^N-\lambda^\star| 
  \le C'\left( \frac{  \log(  {\eta}^{-1})+\log N    }{\sqrt{N}} \right),
  \quad 
\|G^N-G^\star\|_{L^2([0,T])}
\le C'\left( \frac{  \log(  {\eta}^{-1})+\log N    }{\sqrt{N}} \right)^{\frac{\beta}{2\beta+1}}.
$$
for   some constant $C'>0$ independent of $\eta$ and $N$.

  \end{theorem}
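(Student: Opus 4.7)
The plan is to reduce the convergence analysis to a single application of Lemma \ref{lemma:general_error} on a high-probability event controlled by Assumption \ref{assum:concentration_M}, then pass to an anytime guarantee via a union bound over $N$.

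First I would derive closed forms for both errors. Substituting \eqref{def:S_m_estimation} at $t=0$ into \eqref{eq:lse_lambda} and using $(\boldsymbol u^e G^\star)(0) = 0$ yields
\begin{equation*}
\lambda^N - \lambda^\star = \frac{1}{N u^e(0)}\sum_{m=1}^N M^m_0,
\end{equation*}
so Assumption \ref{assum:concentration_M} directly gives $|\lambda^N - \lambda^\star| \le C(\log(\eta^{-1})+\log N)/\sqrt{N}$ after the union bound below. Similarly, substituting \eqref{def:S_m_estimation} into the data in \eqref{eq:G^N_analytic} identifies $G^N$ with the projected Tikhonov estimator \eqref{eq:tikhonov_projection} applied to the inverse problem with $X = L^2([0,T],\sR)$, $K = \boldsymbol u^e$, $x_0 = G^\star$, $y_0 = \boldsymbol u^e G^\star$, and noisy observation
\begin{equation*}
y^\delta = \boldsymbol u^e G^\star + (\lambda^\star - \lambda^N)\, u^e - \frac{1}{N}\sum_{m=1}^N M^m.
\end{equation*}
The triangle inequality combined with the $\lambda^N$-bound and Assumption \ref{assum:concentration_M} then yields $\|y^\delta - y_0\|_{L^2([0,T])} \le C(\log(\eta^{-1})+\log N)/\sqrt{N}$ on the same event.

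Next I would invoke Lemma \ref{lemma:general_error} with $\alpha = \tau_N$. The hypothesis $\gamma_N \le C\tau_N^{1/2}$ makes $\sqrt{1+\gamma_N^2/\tau_N}$ uniformly bounded and dominates the projection contributions $\gamma_N R$ and $\gamma_N \mathscr{D}(R)/\sqrt{\tau_N}$ by the bias terms $\sqrt{\tau_N} R$ and $\mathscr{D}(R)$. Using the hypothesis $\mathscr{D}(R) \le \hat C R^{-\beta}$ for large $R$ and choosing $R = \tau_N^{-1/(2(\beta+1))}$ equates $\mathscr{D}(R)$ with $\sqrt{\tau_N} R$, both of order $\tau_N^{\beta/(2(\beta+1))}$. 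Lemma \ref{lemma:general_error} then delivers
\begin{equation*}
\|G^N - G^\star\|_{L^2([0,T])} \le C\left(\frac{\|y^\delta - y_0\|_{L^2([0,T])}}{\sqrt{\tau_N}} + \tau_N^{\beta/(2(\beta+1))}\right).
\end{equation*}
Balancing the stochastic and deterministic contributions requires $\|y^\delta - y_0\|_{L^2} \sim \tau_N^{(2\beta+1)/(2(\beta+1))}$; since the noise scales like $(\log(\eta^{-1})+\log N)/\sqrt{N}$, this pins down $\tau_N \sim ((\log(\eta^{-1})+\log N)/\sqrt{N})^{2(\beta+1)/(2\beta+1)}$, matching the prescribed range and yielding the stated rate $((\log(\eta^{-1})+\log N)/\sqrt{N})^{\beta/(2\beta+1)}$ for $\|G^N - G^\star\|_{L^2}$.

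For the anytime guarantee I would apply Assumption \ref{assum:concentration_M} at each $N \ge 2$ with confidence $\eta_N = \eta/(N(N+1))$, so that $\sum_{N\ge 2}\eta_N \le \eta$ and $\log(\eta_N^{-1}) \le \log(\eta^{-1}) + 2\log N$, which produces the $\log N$ factor in the final bound. The main obstacle is ensuring that a single choice of $\tau_N$ simultaneously absorbs (i) the martingale noise from $\frac{1}{N}\sum M^m$, (ii) the coupled error $(\lambda^\star - \lambda^N)u^e$, and (iii) the deterministic bias measured by $\mathscr{D}$. Items (i) and (ii) are controllable because the single concentration event of Assumption \ref{assum:concentration_M} bounds $\frac{1}{N}\sum M^m_0$ and $\frac{1}{N}\sum M^m$ together; item (iii) requires careful tracking of the constants in Lemma \ref{lemma:general_error} to verify that they depend only on $\beta$, $\hat C$, and $u^e$ but not on $N$ or $\eta$, so that the $R$-balancing and the $\tau_N$-balancing can be carried out uniformly in $N$ on one and the same high-probability event.
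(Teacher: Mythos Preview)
Your proposal is correct and follows essentially the same approach as the paper: reduce $G^N$ to a projected Tikhonov estimate, apply Lemma \ref{lemma:general_error} with $R=\tau_N^{-1/(2(\beta+1))}$ and $\gamma_N\le C\sqrt{\tau_N}$, and obtain the anytime concentration by a union bound with per-$N$ confidence $\eta_N$ summing to at most $\eta$ (the paper uses $\eta/N^2$, you use $\eta/(N(N+1))$, which is equivalent). One cosmetic slip: the sign in your formula for $\lambda^N-\lambda^\star$ is flipped (it should be $-\frac{1}{Nu^e(0)}\sum_m M^m_0$), but this does not affect the argument.
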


\begin{proof}
   Throughout this proof,
let  $\eta\in (0,1)$ be fixed, 
and 
    let $C'$ be a generic constant, which is 
    independent of $  N$ and $\eta$, and may take a different value at each occurrence.  
  
  For each $N\in \sN$, by \eqref{def:S_m_estimation} and \eqref{eq:lse_lambda},    
 \begin{equation}
 \label{eq:lambda_difference}
  \lambda^N-\lambda^\star= 
 -\frac{1}{Nu^e(0)}\sum_{m=1}^N(M^m_0-\lambda^\star u^e(0))-\lambda^\star=-\frac{1}{Nu^e(0)}\sum_{m=1}^NM^m_0.
 \end{equation}
Let 
$y_0= - (\sE[S^1-A^1] +\lambda^\star u^e)$ and for each $N\in \sN$, let 
$y_N= -\frac{1}{N}\sum_{m=1}^N (S^m-A^m+\lambda^N u^e)$,
and let  $\delta_N=\|y_N-y_0\|_{L^2([0,T])}$. 
Then by \eqref{def:S_m_estimation} and $\sE[M^m_t]=0$ for all $t\in [0,T]$ and $m\in \sN$, 
\begin{align}
\label{eq:y_difference}
\begin{split}
y_0-y_N
&=
\frac{1}{N}\sum_{m=1}^N (S^m-A^m) -\sE[S^1-A^1]
+( \lambda^N-\lambda^\star) u^e
\\
&= 
\frac{1}{N}\sum_{m=1}^N M^m
+
( \lambda^N-\lambda^\star) u^e.
\end{split}
\end{align}

Observe that $\boldsymbol{u}^e G^\star=y_0$ and the choice of 
 $(\tau_N, V_N)_{N\in \sN}$ ensures that  
$\gamma_{N} \le C'\sqrt{\tau_N}$ for all $N\in \sN$.
Moreover, 
  $\lim\sup_{R\to\infty}\mathscr{D}(R)R^{\beta } <\infty $
  implies that there exists $C>0$ such that 
$\mathscr{D}(R)\le CR^{-\beta } $ for all sufficiently large $R>0$. 
  Then by 
  Lemma \ref{lemma:general_error}
  (cf.~\eqref{eq:tikhonov_projection}
  and \eqref{eq:G^N_analytic})
  and  the decay condition of $\mathscr{D}(R)$,
  for all $N\in \sN$ and $  R>0$, 
\begin{align*}
\|G^N-G^\star\|_{L^2([0,T])}
& \le 
\frac{\delta_N }{2 \sqrt{\tau_N}}
+C'\sqrt{\tau_N} 
\left(R+\frac{1}{ R^{\beta}\sqrt{\tau_N} } \right)
+\frac{1}{R^\beta}+\frac{R\sqrt{\tau_N} }{2}.
\end{align*}
from which  by setting  $R= \tau_N^{-1/2(\beta+1)}$,
it holds   for all $N\in \sN$, 
\begin{align}
\label{eq:G_error_delta_N}
\|G^N-G^\star\|_{L^2([0,T])}
& \le 
\frac{\delta_N }{2 \sqrt{\tau_N}}
+C' \tau_N^{\frac{\beta}{2(\beta+1)}}.
\end{align}

We now estimate  $(\delta_N)_{N\in \sN}$  in  high probability. 
For each $N\in \sN$, by applying  
Assumption \ref{assum:concentration_M} with $\widetilde{\eta}=\eta/N^2$, 
with probability at least  $1- {\eta}/N^2$, 
\begin{equation*}
\left|\frac{1}{N}\sum_{m=1}^N M_0^m\right|
+\left\|\frac{1}{N}\sum_{m=1}^N M^m\right\|_{L^2([0,T])} 
\le C'\left(\log(2 {\eta}^{-1})+\log N  \right) N^{-\frac{1}{2}},
\end{equation*}
which along with {$\sum_{N=2}^\infty\frac{1}{N^2}<1$} implies that 
with probability at least  $1-  \eta$, 
  \begin{equation}
\label{eq:concentration_eta_N}
\left|\frac{1}{N}\sum_{m=1}^N M_0^m\right|
+\left\|\frac{1}{N}\sum_{m=1}^N M^m\right\|_{L^2([0,T])} 
\le C'\frac{  \log(  {\eta}^{-1})+\log N   }{\sqrt{N}},
\quad \textrm{for all } N\in \sN\cap[2,\infty).
\end{equation}
Consider the event where \eqref{eq:concentration_eta_N} holds.
Then, by \eqref{eq:lambda_difference} and \eqref{eq:y_difference},
 \begin{equation}
\label{eq:lambda_error_probability}
| \lambda^N-\lambda^\star |
+ \| y_N-y_0 \|_{L^2([0,T])}
\le C' \frac{ \log( {\eta}^{-1})+\log N  }{\sqrt{N}},
\quad \textrm{for all } N\in \sN\cap[2,\infty).
\end{equation}
Substituting the   bound of
$\| y_N-y_0 \|_{L^2([0,T])}$ and the choice of $\tau_N$ 
 into 
\eqref{eq:G_error_delta_N} yields 
    the    estimate of  
$\|G^N-G^\star\|_{L^2([0,T])}$. 
  \end{proof}

Based on   Theorem \ref{thm:estimate_error_general},
it suffices  to establish the precise decay rate of $\mathscr{D}:(0,\infty)\to [0,\infty)$, in order to conclude 
  Theorem \ref{thm:estimate_error_high_probability}
  (recall the bounds of $(\gamma_N)_{N\in \sN}$ following from \eqref{gamma-n} and \eqref{eq:piecewise_constant_error}).

\begin{theorem}
\label{thm:convergece_specifici_decay}
  Suppose Assumption  \ref{assum:learning} 
  holds.
  \begin{enumerate}[(1)]
  \item 
  \label{item:decay_regular}
    If Assumption \ref{assume:kernel_assumption}\ref{label:regular_kernel} holds,
then  
$\lim\sup_{R\to \infty}\mathscr{D}(R) R<\infty$.
Consequently, Theorem \ref{thm:estimate_error_general} holds with $\beta =1$.
\item 
 \label{item:decay_singular}
 If Assumption \ref{assume:kernel_assumption}\ref{label:singular_kernel} holds,
then 
$\lim\sup_{R\to \infty}\mathscr{D}(R) R^{\frac{1-2\alpha}{1+2\alpha}}<\infty$.
Consequently, Theorem \ref{thm:estimate_error_general} holds with $\beta =\frac{1-2\alpha}{1+2\alpha}$.
\end{enumerate}
\end{theorem}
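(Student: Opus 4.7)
The plan is to construct, for each $R > 0$, an explicit function $h_R \in \mathscr{R}((\boldsymbol{u}^e)^*)$ approximating $G^\star$ in $L^2$, and then realise $h_R = (\boldsymbol{u}^e)^* v_R$ with quantitative control on $\|v_R\|_{L^2([0,T])}$. The key mechanism is already present in the proof of Lemma \ref{lemma:u_operator_property}: from \eqref{eq:u^e_derivative} one has $\dot h_R = -(u^e(0)\id + \boldsymbol{\dot u}^e)^* v_R$, and \cite[Chapter 9, Corollary 3.16]{gripenberg1990volterra} together with $u^e(0) \neq 0$ shows that $u^e(0)\id + \boldsymbol{\dot u}^e$ is boundedly invertible on $L^2([0,T],\sR)$. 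Consequently
\begin{equation*}
v_R = -\bigl((u^e(0)\id + \boldsymbol{\dot u}^e)^*\bigr)^{-1} \dot h_R, \qquad \|v_R\|_{L^2([0,T])} \le C_{u^e}\,\|\dot h_R\|_{L^2([0,T])},
\end{equation*}
and by Lemma \ref{lemma:u_operator_property} any $h_R \in H^1([0,T],\sR)$ with $h_R(T) = 0$ is admissible. Thus whenever $\|v_R\|_{L^2([0,T])} \le R$ the construction yields $\mathscr{D}(R) \le \|G^\star - h_R\|_{L^2([0,T])}$, and the proof reduces to balancing the approximation error of $h_R$ against the growth of $\|\dot h_R\|_{L^2([0,T])}$.

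For part \ref{item:decay_regular}, I would correct $G^\star$ only near $T$. Set $\phi_\delta(t) \coloneqq \bigl(1 - (T-t)/\delta\bigr)_+$ for $\delta \in (0,T)$ and define $h_R \coloneqq G^\star - G^\star(T)\phi_\delta$. Elementary computations give $\|\phi_\delta\|_{L^2([0,T])}^2 = \delta/3$ and $\|\dot\phi_\delta\|_{L^2([0,T])}^2 = 1/\delta$, so $h_R \in H^1([0,T],\sR)$, $h_R(T) = 0$, $\|G^\star - h_R\|_{L^2([0,T])} = \cO(\sqrt\delta)$, and $\|\dot h_R\|_{L^2([0,T])} \le \|\dot G^\star\|_{L^2([0,T])} + \cO(1/\sqrt\delta)$. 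Taking $\delta = R^{-2}$ gives $\|v_R\|_{L^2([0,T])} = \cO(R)$ and $\|G^\star - h_R\|_{L^2([0,T])} = \cO(R^{-1})$, i.e.\ $\beta = 1$.

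For part \ref{item:decay_singular}, two cutoffs are required. First smooth the singularity at zero via $\tilde G^\star_\epsilon(t) \coloneqq G^\star(t \vee \epsilon)$ for $\epsilon \in (0,t_0)$, then subtract the tent correction at $T$: $h_R \coloneqq \tilde G^\star_\epsilon - G^\star(T)\phi_\delta$. Using Assumption \ref{assume:kernel_assumption}\ref{label:singular_kernel},
\begin{align*}
\|\dot{\tilde G}^\star_\epsilon\|_{L^2([0,T])}^2 &\le \int_\epsilon^{t_0} C_0^2 t^{-2\alpha-2}\,dt + \|\dot G^\star\|_{L^2([t_0,T])}^2 \le C\,\epsilon^{-(2\alpha+1)}, \\
\|G^\star - \tilde G^\star_\epsilon\|_{L^2([0,T])}^2 &= \int_0^\epsilon |G^\star(t) - G^\star(\epsilon)|^2\,dt \le C\,\epsilon^{1-2\alpha},
\end{align*}
where the second bound follows from $|G^\star(t) - G^\star(\epsilon)| \le C_0 \int_t^\epsilon s^{-\alpha-1}\,ds \le (C_0/\alpha)\,t^{-\alpha}$ on $(0,\epsilon]$. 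Balancing $\epsilon^{-(2\alpha+1)/2} \asymp R \asymp 1/\sqrt\delta$ suggests $\epsilon = R^{-2/(2\alpha+1)}$ and $\delta = R^{-2}$, which produces $\|\dot h_R\|_{L^2([0,T])} = \cO(R)$ and
\begin{equation*}
\|G^\star - h_R\|_{L^2([0,T])} = \cO\bigl(\epsilon^{(1-2\alpha)/2} + R^{-1}\bigr) = \cO\bigl(R^{-(1-2\alpha)/(1+2\alpha)}\bigr),
\end{equation*}
since $(1-2\alpha)/(1+2\alpha) < 1$ for $\alpha \in (0,1/2)$, so the truncation error dominates. This gives $\beta = (1-2\alpha)/(1+2\alpha)$.

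The main obstacle is the singular case: one must extract an $L^2$ approximation rate for $\tilde G^\star_\epsilon$ from the one-sided pointwise derivative bound alone, and then find the right joint scaling of the two cutoff parameters so that both error and derivative growth saturate the $R$-budget. Once the intermediate bound $|G^\star(t) - G^\star(\epsilon)| \lesssim t^{-\alpha}$ is in hand, the exponents in the statement are dictated by the balance $\epsilon^{-(2\alpha+1)/2} = R$, and Theorem \ref{thm:estimate_error_general} with the computed $\beta$ yields Theorem \ref{thm:estimate_error_high_probability} in both regimes.
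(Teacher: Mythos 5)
Your proposal is correct and follows essentially the same route as the paper: realise an $H^1$ modification of $G^\star$ vanishing at $T$ as $(\boldsymbol{u}^e)^* v_R$ via Lemma \ref{lemma:u_operator_property} and the bounded invertibility of $(u^e(0)\id+\boldsymbol{\dot u}^e)^*$, bound $\|v_R\|_{L^2}$ by $\|\dot h_R\|_{L^2}$, and balance the cutoff scales $\delta=R^{-2}$ at $T$ and $\epsilon=R^{-2/(2\alpha+1)}$ at $0$. The only cosmetic difference is your endpoint correction (subtracting the tent $G^\star(T)\phi_\delta$ versus the paper's linear ramp onto $[T-R^{-2},T]$), which yields identical rates.
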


\begin{proof}
Throughout this proof $C,C',\wt C>0$ are generic constants that 
may take a different value at each occurrence,
 and are   independent of $R$. 
The proof relies on constructing specific sequence $(v^R)_{R>0}$ in $L^2([0,T],\sR)$ such that $\|v^R\|_{L^2([0,T])}\le CR$
for all large $R$,
and quantifying the decay rate of $\|G^\star-(\boldsymbol{u}^e)^* v^R\|_{L^2([0,T])}$ as $R\to\infty$.

To prove Item \ref{item:decay_regular},
let $R_0=T^{-1/2}>0$.
Then for   each $R>R_0$, define   $G_R\in  H^1([0,T],\sR)$
such that $G_R(t)=G^\star(t)$ for all $t\in [0,T-R^{-2}]$
and $G_R(t)= (T-t)R^2 G^\star(T-R^{-2})$  for all 
$t\in [T-R^{-2},T]$.
Lemma \ref{lemma:u_operator_property} implies that  $(G_R)_{R>R_0}\subset   \mathscr{R}((\boldsymbol{u}^e)^*)$.
For each $R>R_0$,  let $v^R\in L^2([0,T],\sR) $ be such that    $G_R = (\boldsymbol{u}^e)^* v^R$, which along with  \eqref{eq:u^e_derivative} implies that 
$\dot{G}_R= -(u^e(0) \id+\boldsymbol{\dot{u}}^e)^*  v^R$.
As $(u^e(0) \id+\boldsymbol{\dot{u}}^e)^* $ has a bounded inverse on $L^2([0,T],\sR)$
(see \cite[Corollary 9.3.16, p~238]{gripenberg1990volterra}), 
$  \|v^R\|_{L^2([0,T])}\le C
 \| \dot{G}_R\|_{L^2([0,T])}$ for all $R>R_0$.
Observe that  $\dot{G}_R(t)= \dot{G}^\star(t)$ for all $t\in [0,T-R^{-2}]$ and $\dot{G}_R(t) = -R^2 G^\star(T-R^{-2})$ 
for all 
$t\in [T-R^{-2},T]$.
This along with $G^\star \in H^1([0,T],\sR)$ implies that 
$ \| \dot{G}_R\|_{L^2([0,T])}\le CR $ for all $R>R_0$, and hence $\|v^R\|_{L^2([0,T])}\le CR$ for all $R>R_0$.
To estimate $\|G^\star-(\boldsymbol{u}^e)^* v^R\|_{L^2([0,T])}$,
note that by the definition of $G_R$
and the fact that $G^\star\in H^1([0,T],\sR)$ with $G^\star(T)\not =0$,
\begin{align*}
 \| G^\star - {G}_R\|^2_{L^2([0,T])} 
 &=\int_{T-R^{-2}}^T |G^\star(t)-(T-t)R^2 G^\star(T-R^{-2})|^2 d t
 \\
& \le \wt C\left( R^{-2} +R^4 \int_{T-R^{-2}}^T(T-t)^2 d t\right)\le \wt CR^{-2}.
  \end{align*}
 Recalling \eqref{eq:distance_function}, this shows that
there exists   $C,\wt C>0$ such that 
 $$\mathscr{D}(CR) \le  \| G^\star - {G}_R\|_{L^2([0,T])} \le \wt C R^{-1}, \quad \textrm{for all } R>R_0.$$
Rescaling the inequality yields Item \ref{item:decay_regular}.

Next, we prove Item \ref{item:decay_singular}.
As   $|\dot{G}^\star(t)|\le C_0 t^{-\alpha-1}$
for all $t< t_0$, we have
 for all $t< t_0$, 
 \begin{equation}
 \label{eq:G_singularity}
| {G}^\star(t)|= 
\left|{G}^\star(t_0)-\int_{t}^{t_0}\dot{G}^\star(s) d s \right|
 \le |{G}^\star(t_0)|+
  \frac{C}{\alpha }\left(t^{-\alpha}-t_0^{-\alpha}\right)
  \le Ct^{-\alpha}.
 \end{equation}
 Let $R_0>1$ be such that 
$R_0^{-2}<R_0^{-2/(2\alpha+1)}< \min(t_0,T-t_0)$.
 For each $R>R_0$, define  $G_R\in  H^1([0,T],\sR)$
such that 
$$
G_R(t)
=\begin{cases}
G^{\star}(R^{-2/(2\alpha+1)}),  & t\in \cI_1\coloneqq (0,R^{-2/(2\alpha+1)}), 
\\
G^\star(t), & t\in \cI_2\coloneqq [R^{-2/(2\alpha+1)},T-R^{-2}],
\\
(T-t)R^2 G^\star(T-R^{-2}), & t\in \cI_3\coloneqq  [T-R^{-2},T].
\end{cases}
$$
Lemma \ref{lemma:u_operator_property} implies that  $(G_R)_{R>R_0}\subset   \mathscr{R}((\boldsymbol{u}^e)^*)$.
For each $R>0$,  let $v^R\in L^2([0,T],\sR) $ be such that    $G_R = (\boldsymbol{u}^e)^* v^R$. Then by similar arguments as above,   $  \|v^R\|_{L^2([0,T])}\le C
 \| \dot{G}_R\|_{L^2([0,T])}$ for all $R>R_0$,   
 where for   each $R>R_0$,
 $$
\dot{G}_R(t)
 = 0,  \;   t\in \cI_1;
 \quad 
\dot{G}_R(t) = \dot{G}^\star(t), 
\;  t\in \cI_2;
\quad 
\dot{G}_R(t) =
-R^2 G^\star(T-R^{-2}), \;  t\in \cI_3.
$$
Then by $G^\star \in H^1([t_0,T],\sR)$, 
$|G^\star(t)|\le C$ for all  $t\in [T-R_0^{-2},T]$.
Hence by \eqref{eq:G_singularity}, for all $R>R_0$, 
 \begin{align*}
 \| \dot{G}_R\|^2_{L^2([0,T])}
&  \le     \int_{R^{-\frac{2}{2\alpha+1}}}^{t_0}
 (\dot{G}^{\star}(t))^2 d t
+  
 \int_{t_0}^{T-R^{-2}}
( \dot{G}^{\star}(t))^2 d t
 +R^4\int_{T-R^{-2}}^T (G^\star(T-R^{-2}))^2 d t 
\\
&
\le  
C\left(  R^2 -t_0^{-2\alpha-1}
+ \| \dot{G}^\star \|^2_{L^2([t_0,T])}
+R^2
\right)
\le CR^2.
  \end{align*}
 This implies that   $\|v^R\|_{L^2([0,T])}\le CR$ for all $R> R_0$.
 To estimate $\|G^\star-(\boldsymbol{u}^e)^* v^R\|_{L^2([0,T])}$,
 for all $R>R_0$, 
\begin{align*}
& \| G^\star - {G}_R\|^2_{L^2([0,T])}  
 \le C\left( \| G^\star  \|^2_{L^2(\cI_1)}
+ \| G^\star  \|^2_{L^2(\cI_3)}
 +\| G_R  \|^2_{L^2(\cI_1)}
 +\| G_R  \|^2_{L^2(\cI_3 )}
\right)
  \\
  & 
  \le C\bigg(   \int_0^{R^{-\frac{2}{2\alpha+1}}}
 (G^{\star}(t))^2 d t
+  R^{-2}
 +
 \int_0^{R^{-\frac{2}{2\alpha+1}}}
 \big(G^{\star}\big( R^{-\frac{2}{2\alpha+1}}\big)\big)^2 d t
 +R^4\int_{T-R^{-2}}^T (T-t)^2 d t 
\bigg)
\\
& 
\le  
C  R^{-\frac{2(1-2\alpha)}{2\alpha+1}},
  \end{align*}
  where the last inequality used 
\eqref{eq:G_singularity}
 and
 $R^{-\frac{2(1-2\alpha)}{2\alpha+1}}\ge R^{-2}$
 due to 
  $\alpha\in (0,1/2)$.
 This shows that
there exists   $C,\wt C>0$ such that 
   $\mathscr{D}(CR) \le  \| G^\star - {G}_R\|_{L^2([0,T])}\le  \wt C R^{-\frac{ 1-2\alpha}{2\alpha+1}}$ for all sufficiently large $R>0$.  
       Rescaling the inequality yields Item \ref{item:decay_singular}.
\end{proof}

Now we are ready to prove Theorem \ref{thm:estimate_error_high_probability}. 

\begin{proof}[Proof of Theorem \ref{thm:estimate_error_high_probability}]
By combining the result of Theorem \ref{thm:estimate_error_general} with the decay rate of $\mathscr{D}:(0,\infty)\to [0,\infty)$ in Theorem \ref{thm:convergece_specifici_decay} and the bounds on $(\gamma_N)_{N\in \sN}$ following from \eqref{gamma-n} and \eqref{eq:piecewise_constant_error} we get the result. 
\end{proof}

  The following proposition shows that   the    decay  rates
  of $\mathscr{D} $
   in Theorem \ref{thm:convergece_specifici_decay} are optimal, i.e., 
they   are   the maximal power-type decay rates under 
Assumption \ref{assume:kernel_assumption}. Specifically, 
 it  considers      the power law kernel $G^\star(t)= t^{-\alpha}$,
  which satisfies    Assumption \ref{assume:kernel_assumption}\ref{label:regular_kernel} if $\alpha=0$,
  and       Assumption \ref{assume:kernel_assumption}\ref{label:singular_kernel} if $\alpha\in (0,1/2)$.
  \begin{proposition} \label{prop-opt} 
  Let $\alpha\in [0,1/2)$,  
  let 
  $G^\star\in L^2([0,T],\sR)$ be such that 
  $G^\star(t)= t^{-\alpha}$
   for $t\in (0,T]$,
   and 
let $u^e \in H^1([0,T],\sR)$ be such that 
      $u^e(t)=1$ for all $t\in [0,T]$. 
      Then $\lim\sup_{R\to \infty}\mathscr{D}(R) R^{\frac{1-2\alpha}{1+2\alpha}} <\infty$ 
and 
for all $\eps>\frac{1-2\alpha}{1+2\alpha}$,
 $\lim\sup_{R\to \infty}\mathscr{D}(R) R^{\eps} =\infty$.
   \end{proposition}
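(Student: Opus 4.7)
The plan is as follows. The upper bound $\limsup_{R\to\infty}\mathscr{D}(R)R^{(1-2\alpha)/(1+2\alpha)}<\infty$ is immediate from Theorem \ref{thm:convergece_specifici_decay}: for $\alpha=0$, $G^\star\equiv 1$ satisfies Assumption \ref{assume:kernel_assumption}\ref{label:regular_kernel} since $G^\star\in H^1([0,T],\sR)$ and $G^\star(T)=1\neq 0$, giving exponent $\beta=1=(1-2\alpha)/(1+2\alpha)$; for $\alpha\in(0,1/2)$, $|\dot G^\star(t)|=\alpha t^{-\alpha-1}$ on $(0,T)$, so Assumption \ref{assume:kernel_assumption}\ref{label:singular_kernel} applies and yields $\beta=(1-2\alpha)/(1+2\alpha)$.

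The heart of the proof is the matching lower bound
\[
\mathscr{D}(R)\geq c\,R^{-(1-2\alpha)/(1+2\alpha)}\quad\text{for all sufficiently large }R,
\]
since this implies $\mathscr{D}(R)R^\varepsilon\geq cR^{\varepsilon-(1-2\alpha)/(1+2\alpha)}\to\infty$ whenever $\varepsilon>(1-2\alpha)/(1+2\alpha)$. The strategy is to linearise the distance by Cauchy--Schwarz duality: for any $g\in L^2([0,T],\sR)$ with $\|g\|_{L^2}\leq 1$ and any $v\in L^2([0,T],\sR)$ with $\|v\|_{L^2}\leq R$,
\[
\|G^\star-(\boldsymbol{u}^e)^*v\|_{L^2}\geq\langle G^\star,g\rangle_{L^2}-\langle v,\boldsymbol{u}^e g\rangle_{L^2}\geq\langle G^\star,g\rangle_{L^2}-R\|\boldsymbol{u}^e g\|_{L^2},
\]
so $\mathscr{D}(R)\geq\sup_{\|g\|_{L^2}\leq 1}\bigl(\langle G^\star,g\rangle_{L^2}-R\|\boldsymbol{u}^e g\|_{L^2}\bigr)$. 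It now suffices to exhibit a one-parameter family $(g_\delta)_{\delta>0}$ of unit-norm test functions whose pairings with $G^\star$ and $\boldsymbol{u}^e g_\delta$ scale in a way that, after optimisation in $\delta$, reproduces the claimed rate.

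For $\alpha\in(0,1/2)$ I would fix once and for all a profile $h\in L^2([0,1],\sR)$ with $\|h\|_{L^2([0,1])}=1$, $\int_0^1 h(x)\,dx=0$ and $c_h:=\int_0^1 x^{-\alpha}h(x)\,dx\neq 0$, which is possible: a suitable rescaling of $h(x)=x-1/2$ gives $c_h=\tfrac{1}{2-\alpha}-\tfrac{1}{2(1-\alpha)}=\tfrac{-\alpha}{(2-\alpha)(2-2\alpha)}\neq 0$. Setting $g_\delta(t)=\delta^{-1/2}h(t/\delta)\mathds 1_{[0,\delta]}(t)$ and using $u^e\equiv 1$, a direct change of variables yields $\|g_\delta\|_{L^2}=1$, $\langle G^\star,g_\delta\rangle_{L^2}=c_h\delta^{1/2-\alpha}$, and (using $\int_0^1 h=0$ to confine $\boldsymbol{u}^e g_\delta$ to $[0,\delta]$) $\|\boldsymbol{u}^e g_\delta\|_{L^2}=\|H\|_{L^2([0,1])}\delta$ with $H(x)=\int_0^x h$. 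Choosing $\delta$ so that the two terms balance, namely $\delta\asymp R^{-2/(1+2\alpha)}$, delivers $\mathscr{D}(R)\gtrsim R^{-(1-2\alpha)/(1+2\alpha)}$. The edge case $\alpha=0$ is handled separately since $c_h$ then degenerates: I instead use the boundary test $g_\delta(t)=\delta^{-1/2}\mathds 1_{[T-\delta,T]}(t)$, for which $\langle 1,g_\delta\rangle_{L^2}=\sqrt{\delta}$ and $\|\boldsymbol{u}^e g_\delta\|_{L^2}=\delta/\sqrt{3}$; optimising in $\delta\asymp R^{-2}$ yields $\mathscr{D}(R)\gtrsim R^{-1}$, matching $(1-2\alpha)/(1+2\alpha)=1$.

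The main subtlety lies in the cancellation $\int_0^1 h=0$ used for $\alpha>0$: without it, $\boldsymbol{u}^e g_\delta$ would carry a plateau of height $\sim\sqrt{\delta}$ on $[\delta,T]$, inflating $\|\boldsymbol{u}^e g_\delta\|_{L^2}$ from order $\delta$ to order $\sqrt{\delta}$ and producing only the strictly weaker lower bound $R^{-1}$. Exploiting the mean-zero cancellation jointly with a non-trivial moment of the singular weight $x^{-\alpha}$ is precisely what converts the dual bound into the sharp exponent $(1-2\alpha)/(1+2\alpha)$; this coupling of the oscillation scale of the test function with the Hölder singularity of the kernel is what I expect to be the technically delicate step.
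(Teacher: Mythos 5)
Your proof is correct, but it takes a genuinely different route from the paper's. For the lower bound (the substantive part), the paper works through the singular system of the Volterra integration operator: it invokes the characterisation from \cite[Remark 1]{hofmann2006approximate} of power-type decay of $\mathscr{D}$ in terms of the summability of $\sum_n \sigma_n^{-2\kappa}\langle G^\star,\mathfrak{u}_n\rangle^2_{L^2([0,T])}$, computes the explicit eigenpairs $\sigma_n=\tfrac{T}{\pi(n-1/2)}$ and $\mathfrak{u}_n(t)=\sqrt{2/T}\cos((n-\tfrac12)\pi t/T)$, and estimates $\langle G^\star,\mathfrak{u}_n\rangle\asymp n^{\alpha-1}$ via a contour-integral argument showing $\lim_{R\to\infty}\int_0^R t^{-\alpha}\cos(\pi t)\,dt\in(0,\infty)$; this yields the critical index $\kappa_0=\tfrac12-\alpha$ and hence the exponent $\kappa_0/(1-\kappa_0)=\tfrac{1-2\alpha}{1+2\alpha}$. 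You instead derive the lower bound directly from the Cauchy--Schwarz duality $\mathscr{D}(R)\ge\sup_{\|g\|_{L^2}\le 1}\bigl(\langle G^\star,g\rangle_{L^2}-R\|\boldsymbol{u}^e g\|_{L^2}\bigr)$, tested against rescaled mean-zero bumps at the singularity (and a boundary bump at $t=T$ when $\alpha=0$, where the range of $(\boldsymbol{u}^e)^*$ vanishes but $G^\star$ does not). Your scaling computations check out: $\|g_\delta\|_{L^2}=1$, $\langle G^\star,g_\delta\rangle_{L^2}=c_h\delta^{1/2-\alpha}$ with $c_h\neq 0$ (the sign issue with $c_h<0$ is harmless, replace $h$ by $-h$), the mean-zero condition confines $\boldsymbol{u}^e g_\delta$ to $[0,\delta]$ so that $\|\boldsymbol{u}^e g_\delta\|_{L^2}\asymp\delta$, and optimising $\delta\asymp R^{-2/(1+2\alpha)}$ gives the matching rate. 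Your argument is elementary and self-contained --- it needs neither the explicit SVD of $\boldsymbol{u}^e$ nor the external equivalence from the inverse-problems literature --- and it actually delivers the slightly stronger pointwise conclusion $\liminf_{R\to\infty}\mathscr{D}(R)R^{\frac{1-2\alpha}{1+2\alpha}}>0$. What the paper's spectral route buys in exchange is the exact ``if and only if'' characterisation of all admissible power decays via the source-condition index $\kappa_0$, which places the result squarely within the standard Tikhonov-regularisation framework.
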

  
  \begin{proof}
In the sequel, we focus on the case with $\alpha\in (0,1/2)$,
as the result for $\alpha=0$ has been proved in \cite{hofmann2006approximate}. 
By \cite[Remark 1]{hofmann2006approximate}, 
the decay rate of $\mathscr{D}$ can be characterised in terms of the singular system of $\boldsymbol{u}^e$.
More precisely, let  
$\sigma_1\ge \sigma_2\ge \ldots >0$  be  the ordered singular values  of   $\boldsymbol{u}^e $
with $\lim_{n\to \infty}\sigma_n =0$,
and $(\mathfrak{u}_n)_{n\in \sN} $ be the orthonormal eigensystems of   
$(\boldsymbol{u}^e)^*\boldsymbol{u}^e$. 
Then
 for any $G^\star\in L^2([0,T],\sR)$ with
 \begin{equation}
\label{eq:max_decay_rate}
 \kappa_{0 }\coloneqq  {\sup} \left\{ \kappa>0
\,\bigg\vert\,
\sum_{n=1}^\infty \frac{1}{\sigma_n^{2 \kappa }}\langle G^\star,\mathfrak{u}_n \rangle^2_{L^2([0,T])} <\infty
 \right\}\in (0,1),
\end{equation}
and for any $\kappa\in (0,1)$, $\lim\sup_{R\to \infty} \mathscr{D}(R)R^{\frac{\kappa}{1-\kappa}}<\infty$
if and only if $0<\kappa\le \kappa_0$. 
In the sequel, we compute the value  $\kappa_0$ for $G^\star(t)=t^{-\alpha}$
and $u^e(t)=1$, $t\in [0,T]$.

The fact that  $u^e\equiv 1$ implies that $(\boldsymbol{u}^e f)(t)=\int_0^t f(s)ds$ for all $f\in L^2([0,T],\sR)$ and $t\in [0,T]$. 
Let  $\alpha\in (0,1/2)$ and $G^\star(t)=t^{-\alpha}$ for all $t\in (0,T]$. 
A direct computation shows that for all $n\in \sN$,   $\sigma_n= \frac{ T}{\pi (n-\frac{1}{2})}$,
$\mathfrak{u}_n (t)= \sqrt{\frac{2}{T}}\cos((n-\frac{1}{2})\pi \frac{t}{T})$ for all $t\in [0,T]$,
and
\begin{align}
\begin{split}
\label{eq:fourier_coefficient_G^star}
 \langle G^\star,\mathfrak{u}_n \rangle_{L^2([0,T])}
&= \sqrt{\frac{2}{T}} \int_0^T t^{-\alpha} \cos\left(  \tfrac{n-\frac{1}{2}}{T}\pi t \right) d t
=
\sqrt{\frac{2}{T}}\left( \tfrac{n-\frac{1}{2}}{T}\right)^{\alpha-1} 
 \int_0^{n-\frac{1}{2}} t^{-\alpha} \cos ({ \pi t}) d t.
\end{split}
\end{align}

To estimate $\langle G^\star,\mathfrak{u}_n \rangle_{L^2([0,T])}$ for large $n$, 
we   prove that  
$\lim_{R\to \infty}\int_0^{R} t^{-\alpha} \cos ({ \pi t}) d t
\in (0,\infty)$.
 Indeed,  for each $0<R_1<R_2<\infty$,
consider the    closed contour 
$\gamma=(\gamma_1,\gamma_2,\gamma_3,\gamma_4)\subset \sC$,
where
$\gamma_1(\theta)=R_1e^{i  (\frac{\pi}{2}- \theta)}$ for
$ \theta\in [0, \frac{\pi}{2}]$,
$\gamma_2(r)=r  $ for
$ r\in [R_1, R_2]$,
$\gamma_3(\theta)=R_2e^{i   \theta }$ for
$ \theta\in [0, \frac{\pi}{2}]$,
and 
$\gamma_4 (r) = i(R_1+R_2 -r) $
for
$ r\in [R_1, R_2]$.
As $\alpha>0$,  
the function 
$\sC\setminus \{0\}\ni z\mapsto 
z^{-\alpha} e^{i\pi z} \in \sC $  is analytic.
Hence by 
  the Cauchy-Goursat theorem
  and the definition of   contour integral,
  for each $0<R_1<R_2<\infty$,  
\begin{align}
\label{eq:contour_integral}
0=\int_{\gamma }z^{-\alpha} e^{i\pi z} d z 
= 
I_{R_1}+
\int_{R_1 }^{R_2}
r ^{-\alpha} e^{i\pi r} d r
+
I_{R_2}
+
 \int_{R_2 }^{R_1}
(ir)^{-\alpha }
  e^{ i \pi (ir) }   i d r,
\end{align}
where $I_{R_1}$
and $I_{R_2}$
 are  the integrals along the curves $\gamma_1$ and 
 $\gamma_3$, respectively:
\begin{align*}
 I_{R_1}
 & =\int_{0 }^{\frac{\pi}{2}}
  \gamma_1(\theta)^{-\alpha}  
 e^{i\pi \gamma_1(\theta)} 
 R_1e^{i  (\frac{\pi}{2}- \theta)} (-i) d \theta,
 \quad  
 I_{R_2}   =\int_{0 }^{\frac{\pi}{2}}
 \gamma_3(\theta)^{-\alpha}  
 e^{i\pi \gamma_3(\theta)}
   R_2e^{i   \theta } i d \theta.
\end{align*}
By the definition of $\gamma_1$,  
for all $\theta\in (0,\pi/2)$
and $R_1>0$,
$|e^{i\pi \gamma_1(\theta)} |= 
 e^{-\pi R_1\sin(\frac{\pi}{2}-\theta)}\le 1
 $.
As
$|\int_{0 }^{\frac{\pi}{2}} f(\theta) d \theta|\le 
\int_{0 }^{\frac{\pi}{2}} |f(\theta)| d \theta$
for any integrable  $f: [0, \frac{\pi}{2}]\to \sC$,
$$
\lim_{R_1\to 0^+} | I_{R_1}|\le
\lim_{R_1\to 0^+} \int_{0 }^{\frac{\pi}{2}}
   R_1^{1-\alpha}  d \theta 
   \le  \lim_{R_1\to 0^+}  \frac{\pi}{2} R_1^{1-\alpha}=0,
$$
where the last identity used 
  $\alpha \in (0,1/2)$.
   Similarly,   
by the definition of   $\gamma_3$,  
for all $\theta\in (0,\pi/2)$
and $R_2>0$,
 $|e^{i\pi \gamma_3(\theta)} | = 
 e^{-\pi R_2\sin(\theta)}  $ 
 and  
$  |I_{R_2}|  \le
R_2^{1-\alpha}   
\int_{0 }^{\frac{\pi}{2}}
 e^{-\pi R_2\sin(\theta)}
  d \theta$.
 As $\sin\theta\ge \frac{2}{\pi }\theta$ for all    $\theta\in (0,\pi/2)$, 
$$
\lim_{R_2\to \infty}   |I_{R_2}|  \le
\lim_{R_2\to \infty}
\left(
R_2^{1-\alpha}   
\int_{0 }^{\frac{\pi}{2}}
 e^{-2 R_2 \theta } d \theta
     \right)
     =
   \lim_{R_2\to \infty}
\left(
 \frac{1}{2}R_2^{-\alpha} (1-e^{- R_2 \pi})\right)
  =0.
     $$
  Thus,   letting $R_1\to 0^+$ and $R_2\to \infty$ in  \eqref{eq:contour_integral} yields 
$$
\lim_{R_2 \to \infty}\int_{0 }^{R_2}
r ^{-\alpha} e^{i\pi r} d r
= 
i^{1-\alpha}
\lim_{R_2\to \infty}\int_{0 }^{R_2}
r^{-\alpha }
 e^{-\pi r }  d r.
$$
The   upper bound
$\lim_{R\to \infty}\int_0^{R} t^{-\alpha} \cos ({ \pi t}) d t
<\infty$ 
  follows from 
$\lim_{R_2\to \infty}
\int_{0 }^{R_2}
r^{-\alpha }
 e^{-\pi r } d r<\infty$
 and 
$
 \int_0^{R} t^{-\alpha} \cos( \pi t) d t$ is 
 the real part of 
$\int_0^{R} t^{-\alpha} e^{i \pi t} d t$
for all $R>0$.
 For   the lower bound, consider the sequence 
$
\left(a_n
 \right)_{n\in \sN}
 $ with 
$a_n\coloneqq \int_{0 }^{2n}
t^{-\alpha} \cos ({ \pi t})  d t$ for all $n\in \sN$.
As $t\mapsto t^{-\alpha}$ is decreasing on $(0,\infty)$,
$a_{n+1}\ge a_n>0$ for all $n\in \sN$,
and hence 
$\lim_{R\to \infty}\int_0^{R} t^{-\alpha} \cos ({ \pi t})  d t= \lim_{n\to \infty}\int_0^{2n} t^{-\alpha} \cos ({ \pi t}) d t
\ge a_1>0$.

  Therefore, 
by \eqref{eq:fourier_coefficient_G^star},
 for each $\alpha\in (0,1/2)$, 
 there exists $C>0$, depending   on $\alpha$ and $T$, such that 
 $
 \frac{1}{C}n^{2\alpha -2 }
 \le 
  \langle G^\star,\mathfrak{u}_n \rangle^2_{L^2([0,T])}
\le Cn^{2\alpha -2 }$ for all $n\in \sN$,
and for all $\kappa>0$, 
\begin{equation}
\label{eq:singular_sum}
\frac{1}{C} \sum_{n=1}^\infty \frac{1}{n^{2-2\kappa  -2\alpha }}
\le 
\sum_{n=1}^\infty \frac{1}{\sigma_n^{2\kappa  }}\langle G^\star,\mathfrak{u}_n \rangle^2_{L^2([0,T])} 
\le C \sum_{n=1}^\infty \frac{1}{n^{2-2\kappa  -2\alpha }}.
\end{equation}
Hence 
$\sum_{n=1}^\infty \frac{1}{\sigma_n^{2\kappa }}\langle G^\star,\mathfrak{u}_n \rangle^2_{L^2([0,T])} <\infty$ if and only if $\kappa\in (0, \frac{1}{2}-\alpha)$.  
This implies that  $\kappa_0=   \frac{1}{2}-\alpha\in (0,1)$ in \eqref{eq:max_decay_rate}, and subsequently proves that
 for all $\kappa\in ( \frac{1}{2}-\alpha,1)$, $\lim\sup_{R\to \infty} \mathscr{D}(R)R^{\frac{\kappa}{1-\kappa}}=\infty$. 
The  desired statement follows from 
the fact that $\left\{\frac{\kappa}{1-\kappa} \mid \kappa\in \left( \frac{1}{2}-\alpha,1\right)\right\}= \left(\frac{1-2\alpha }{1+2\alpha},\infty \right)$. 
 \end{proof}

 \section{Proof of Theorem \ref{thm:regret_general_rate}} \label{sec-pf-rate} 
 
 Throughout this section, 
%
   we denote by $C>0$ a generic constant,
    which is independent of $\eta\in (0,1)$ and $N\in \sN$, 
     and may take a different value at each occurrence.
  
  The following lemma proves that 
  for sufficiently large  $\mathfrak{m}^e_0 \in \sN $,
  the estimators $(\theta^k)_{k\ge 0 }$ from Algorithm   \ref{alg:phased}
  are in the admissible parameter set $\Xi_{\eps}$ from Definition \ref{assum1:parameters}.
  
  \begin{lemma}
  \label{lemma:initial_exploration}
  Assume that condition \eqref{eq:lse_error_regret} holds.
  Let $(\theta^k)_{k\ge 0 }$ be generated from Algorithm \ref{alg:phased}.
  Then there exists $C_0>0$  
  such that for all $\eta\in (0,1)$,
  and $\mathfrak{m}^e_0 \ge C_0( \log(  {\eta}^{-1})^2+1)$, the following hold with probability at least $1-\eta$.
  \begin{enumerate}[(1)]
  \item 
\label{item:convergence_lse_admissible}
  $\theta^k=(\lambda^k,G^k) \in \Xi_{\eps}$   for all  $  k\in \sN\cup\{0\}$,
  \item
  \label{item:convergence_lse_m0}
  There  exists   $C>0$, independent of 
  $\eta$, 
  such that for all 
  $k\in \sN\cup\{0\}$, 
$$
 |\lambda^k-\lambda^\star| 
+\|G^k-G^\star\|_{L^2([0,T])}
\le C\left( \frac{  \log(  {\eta}^{-1})+\log (  \mathfrak{m}^e_0 +k)  }{\sqrt{ \mathfrak{m}^e_0 +k}} \right)^{\kappa}.
$$
\end{enumerate}
  \end{lemma}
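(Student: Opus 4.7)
The plan is to apply the uniform estimation error bound \eqref{eq:lse_error_regret} directly to the estimators $(\theta^k)_{k\ge 0}$ produced by Algorithm \ref{alg:phased}. By construction, every exploration episode uses the same deterministic strategy $u^e$ on an independent copy $(A^m,M^m)$ of $(A,M)$; hence after relabelling the indices in $\cE_k=\{1,\ldots,\mathfrak{m}^e_0\}\cup\{\mathfrak{m}^e_0+\sum_{i=1}^j\mathfrak{n}(i)+j\mid 1\le j\le k\}$, the estimator $\theta^k$ is precisely the regularised least-squares estimator built from $N_k\coloneqq |\cE_k|=\mathfrak{m}^e_0+k$ i.i.d.\ exploration observations. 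Applying \eqref{eq:lse_error_regret} at confidence level $\eta$ to this single i.i.d.\ stream produces an event $\Omega_\eta$ with $\sP(\Omega_\eta)\ge 1-\eta$ on which
\begin{equation*}
|\lambda^k-\lambda^\star|+\|G^k-G^\star\|_{L^2([0,T])}
\le 2\tilde C\left(\frac{\log(\eta^{-1})+\log N_k}{\sqrt{N_k}}\right)^\kappa
\end{equation*}
holds simultaneously for all $k\ge 0$ with $N_k\ge 2$; the latter is secured once $C_0\ge 2$, and the above inequality is then precisely item \ref{item:convergence_lse_m0}.

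For item \ref{item:convergence_lse_admissible}, it remains to verify on $\Omega_\eta$ the three conditions defining $\Xi_{\eps}$ (Definition \ref{assum1:parameters}). The interior bounds $\lambda^k\in[L^{-1},L]$ and $\|G^k\|_{L^2}\le L$ follow from the triangle inequality as soon as the error is smaller than the strictly positive margin $\delta_0\coloneqq \min(L-\lambda^\star,\,\lambda^\star-L^{-1},\,L-\|G^\star\|_{L^2([0,T])})$ provided by Assumption \ref{assumption:theta_bound}. The delicate point is the near-non-negative-definiteness \eqref{eps-pos-def}: since $G^\star$ already satisfies \eqref{pos-def_convolution}, it suffices to control the perturbation produced by $h\coloneqq G^k-G^\star$. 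Applying Cauchy--Schwarz successively in $s$ and in $t$ yields
\begin{equation*}
\left(\int_0^T\!\!\int_0^T h(|t-s|) f(s)f(t)\,ds\,dt\right)^2
\le \|f\|_{L^2}^4\int_0^T\!\!\int_0^T h(|t-s|)^2\,ds\,dt
= 2\|f\|_{L^2}^4\int_0^T (T-u)h(u)^2\,du,
\end{equation*}
so the perturbation is bounded by $\sqrt{2T}\,\|G^k-G^\star\|_{L^2([0,T])}\|f\|_{L^2}^2$, and \eqref{eps-pos-def} is met as soon as $\|G^k-G^\star\|_{L^2([0,T])}\le \eps/\sqrt{2T}$.

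The last step is to calibrate $C_0$. Since the map $N\mapsto (\log(\eta^{-1})+\log N)/\sqrt{N}$ is eventually decreasing (for $N\ge e^2$), the worst-case value over $k\ge 0$ of the right-hand side of the item \ref{item:convergence_lse_m0} bound is attained at $k=0$, i.e.\ at $N_0=\mathfrak{m}^e_0$. Using $\mathfrak{m}^e_0\ge C_0(\log(\eta^{-1})^2+1)$, one obtains $\sqrt{\mathfrak{m}^e_0}\ge \sqrt{C_0}\max(\log(\eta^{-1}),1)$ while $\log\mathfrak{m}^e_0=O(\log\log(\eta^{-1})+\log C_0)$ as $\eta\to 0^+$, so the worst-case error is of order $C_0^{-\kappa/2}$ up to logarithmic factors. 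Choosing $C_0$ sufficiently large so that this quantity falls below $\min(\delta_0,\eps/\sqrt{2T})/(2\tilde C)$ simultaneously enforces all three admissibility conditions for every $k\ge 0$ and finishes the proof. The main obstacle is the near-positive-definiteness check, where one needs to upgrade $L^2$-closeness of $G^k$ to $G^\star$ into a uniform control on the associated convolutional bilinear form; the Cauchy--Schwarz/Fubini computation sketched above is what makes this passage work.
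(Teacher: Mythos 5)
Your proposal is correct and follows essentially the same route as the paper's proof: item (2) is read off directly from the uniform-in-$N$ event in \eqref{eq:lse_error_regret} with $N=\mathfrak{m}^e_0+k$, admissibility is reduced to showing an $L^2$-ball around $\theta^\star$ sits inside $\Xi_\eps$ via the same Cauchy--Schwarz bound $\big|\int_0^T\int_0^T h(|t-s|)f(s)f(t)\,ds\,dt\big|\le C\sqrt{T}\,\|h\|_{L^2}\|f\|_{L^2}^2$ on the perturbed bilinear form, and $C_0$ is calibrated exactly as in the paper by using that $y\mapsto(x+\log y)/\sqrt{y}$ decreases on $[e^2,\infty)$ so the worst case is $k=0$. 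The only cosmetic difference is that the paper folds the margins for $\lambda$, $\|G\|_{L^2}$ and \eqref{eps-pos-def} into a single radius $\eps'$ with $\sB(\theta^\star,\eps')\subset\Xi_\eps$, whereas you track them separately.
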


  \begin{proof}
  Observe that 
  for all $G, f\in L^2([0,T],\sR)$,
    by the Cauchy-Schwarz inequality, 
  \begin{align*}
&\int_{0}^T\int_{0}^T G(|t-s|)  f(s)  f(t) dsdt 
= 2  \int_{0}^T\int_{0}^T G(t-s ) \mathds{1}_{\{s\le t\}} f(s)  f(t) dsdt 
\\
&\quad \le 2\|f\|_{L^2([0,T])}  \int_{0}^T\left(\int_{0}^T (G(t-s ) )^2\mathds{1}_{\{s\le t\}} ds \right)^{1/2}  f(t) dt  
\\
&\quad
 \le 2\|f\|^2_{L^2([0,T])} \left( \int_{0}^T \int_{0}^T (G(t-s ) )^2\mathds{1}_{\{s\le t\}} ds   dt  \right)^{1/2} \\
&\quad \le  2\sqrt{T}\|f\|^2_{L^2([0,T])} \|G\|_{L^2([0,T])}.
  \end{align*}
  Consequently, by \eqref{pos-def_convolution} and Definition \ref{assum1:parameters}
  there exists $\eps'>0$ 
  such that 
  $\sB(\theta^\star, \eps') \coloneqq  \{(\lambda,G) \in \sR \times L^2([0,T], \sR)\mid 
|\lambda-\lambda^\star|+\|G-G^\star\|_{L^2([0,T])}\le \eps'\}\subset  \Xi_{\eps}$.  

Let $\tilde  C>0$ be the constant in  the condition \eqref{eq:lse_error_regret},
and assume without  loss of generality that $\eps'/ \tilde C<1$.
Observe that for all $x,y>0$,
$\frac{\partial }{\partial y}\left(
\frac{x+\log y}{\sqrt{y}}
\right)=
\frac{2-\log y-x }{2y^{3/2}}$,
and hence 
$y\mapsto \frac{x+\log y}{\sqrt{y}}$ decreases on $[e^2,\infty)$ for all $x>0$.
Let  $C_0\ge e^2$ be a constant such that
$\frac{x+\log(C_0(x^2+1))}{\sqrt{C_0(x^2+1)}}
\le \left(\frac{\eps'}{\tilde C}\right)^{{1}/{\kappa}}$
 for all $x>0$.
Then  for any $\eta\in (0,1)$ and $N \ge C_0( \log(  {\eta}^{-1})^2+1)$, 
  \begin{equation*}
  \label{eq:log_eta_N_kappa}
  \frac{  \log(  {\eta}^{-1})+\log N    }{\sqrt{N}}\le 
  \left( \frac{  \log(  {\eta}^{-1})+\log N    }{\sqrt{N}} \right)^{\kappa}\le \frac{\eps'}{\tilde C},
  \end{equation*}
 where the first inequality used $\kappa \in (0,1)$.
  Now let 
  $\mathfrak{m}^e_0\ge C_0( \log(  {\eta}^{-1})^2+1)$.
 By the definitions of $(\theta^k)_{k\ge 0 }$ in
  \eqref{eq:lse_initial} and \eqref{eq:lse_k_cycle},
  for each $k\in \sN\cup \{0\}$,
 the estimator $ \theta^k$ is computed using $k+\mathfrak{m}^e_0$ samples,
 which along with 
    \eqref{eq:lse_error_regret}
    shows that 
    $\theta^k\in \sB(\theta^\star, \eps') $ for all 
    $k\ge 0$ in Item \ref{item:convergence_lse_admissible}.
    The convergence rate of $(\theta^k)_{k\ge 0 }$ 
    in Item \ref{item:convergence_lse_m0}
follows from  \eqref{eq:lse_error_regret}
with $N=k+\mathfrak{m}^e_0$.
  \end{proof}

We are now ready to prove Theorem \ref{thm:regret_general_rate}.

\begin{proof}[Proof of Theorem \ref{thm:regret_general_rate}]
Throughout this proof, let 
 $\cE=\{1,\ldots, \mathfrak{m}^e_0\}\cup \{ 
 \mathfrak{m}^e_0 +\sum_{j=1}^k  \mathfrak{n}(j)+k \mid k\in \sN\} $ 
 be the indices of exploration episodes,
 let  $\mathfrak{L}(k) =\mathfrak{m}^e_0 +\sum_{j=1}^k  \mathfrak{n}(j)+k$, $k\in \sN\cup\{0\}$ be the index of the last episode in the $k$-th cycle
 (cf.~Algorithm  \ref{alg:phased}), 
  and 
  let $\mathfrak{c}(m)$, $m\in \sN$, be the corresponding cycle for the $m$-th episode,
  i.e., $\mathfrak{c}(m)=\min\{k\in \sN\cup\{0\}\mid   \mathfrak{L}(k)\ge m\}$.
  Let 
  $\mathfrak{n}(k)=\lfloor k^\delta\rfloor$, $k\in \sN$, with some   $\delta\in (0,1)$
  to be determined later.
 Observe that 
for all $m> \mathfrak{m}^e_0$, 
$\mathfrak{L}(\mathfrak{c}(m)-1)<m\le \mathfrak{L}(\mathfrak{c}(m))$,
  which along with 
the inequality that 
$ 
k^\delta-1\le \lfloor k^\delta\rfloor
\le k^\delta
$ for all $k\in \sN$
implies that 
$$
\sum_{j=1}^{\mathfrak{c}(m)-1}  (j ^\delta-1)+\mathfrak{c}(m)-1
<m-\mathfrak{m}^e_0
\le 
\sum_{j=1}^{\mathfrak{c}(m)}   j^\delta+\mathfrak{c}(m).
$$
Hence   there exists $\overline{c},\underline{c}>0$, depending on $\delta$, such that  for all $m> \mathfrak{m}^e_0$, 
\be \label{cyc-bnd} 
\underline{c} (\mathfrak{c}(m)-1)^{\delta+1} \le  m-\mathfrak{m}^e_0\le \overline{c} (\mathfrak{c}(m))^{\delta+1}
\ee
In the following, we optimise the growth rate of  $R(N)$ over $\delta$.

Let $\eta\in (0,1)$ be fixed,
and let 
$\mathfrak{m}^e_0 =\lceil C( \log(  {\eta}^{-1})^2+1)\rceil$
  with $C\ge C_0$ in Lemma \ref{lemma:initial_exploration},
  and consider the event (which  holds   with probability at least $1-\eta$)
such that  $\theta^k \in \Xi_{\eps}$ for all $k\in \sN\cup\{0\}$.
For each $N\in \sN$, by \eqref{eq:regret_def}
and Algorithm  \ref{alg:phased},
\begin{align}\label{eq:regret_decomposiition}
\begin{split}
 R(N) &= \sum_{m\in [1,N]\cap \cE} \big(   
 J^{\theta^\star}(u^{\theta^\star})
 -
 J^{\theta^\star}(u^m)\big)
 +\sum_{m\in [1,N]\setminus \cE} \big(   J^{\theta^\star}(u^{\theta^\star}) -
 J^{\theta^\star}(u^m)\big)
 \\
 &=\sum_{m\in [1,N]\cap \cE} \big(   J^{\theta^\star}(u^{\theta^\star}) -J^{\theta^\star}(u^e)\big)
 +\sum_{m\in [1,N]\setminus \cE} \big(   J^{\theta^\star}(u^{\theta^\star}) -J^{\theta^\star}(u^{\theta^{\mathfrak{c}(m)-1}})\big).
 \end{split}
 \end{align}
 If $N\le  \mathfrak{m}^e_0 $, 
    the fact that  $ J^{\theta^\star}(u^{\theta^\star}), J^{\theta^\star}(u^e)<\infty$ implies that 
  $R(N)\le  \mathfrak{m}^e_0 |    J^{\theta^\star}(u^{\theta^\star}) -J^{\theta^\star}(u^e)|\le C  ( \log(  {\eta}^{-1})^2+1) $.
Now consider    $N>  \mathfrak{m}^e_0 $.
The   number of exploration episodes up to the $N$-th episdoe is bounded by 
$ \mathfrak{m}^e_0+\mathfrak{c}(N)$.  
Moreover, by Lemma \ref{lemma:initial_exploration}
and Theorem \ref{thm:gap}, 
$|J^{\theta^\star}(u^{\theta^\star}) -J^{\theta^\star}(u^{\theta^{\mathfrak{c}(m)-1}})|
\le C(|\lambda^{\mathfrak{c}(m)-1}-\lambda^\star|^2+\|G^{\mathfrak{c}(m)-1}-G^\star\|^2_{L^2([0,T])})$ 
for some constant $C>0$. 
Hence,
 by Lemma \ref{lemma:initial_exploration} and \eqref{eq:regret_decomposiition},
  \begin{align*}
\begin{split}
 R(N) &\le C
 ( \mathfrak{m}^e_0+\mathfrak{c}(N))
 +
C  \sum_{k=1}^{\mathfrak{c}(N)} \mathfrak{n}(k)\left(|\lambda^k-\lambda^\star|^2+\|G^k-G^\star\|^2_{L^2([0,T])} \right)
\\
&\le 
C\left(\log(  {\eta}^{-1})^2 +1+ \mathfrak{c}(N)\right)
 +
C  \sum_{k=1}^{\mathfrak{c}(N)} \mathfrak{n}(k) 
\left( \frac{  \log(  {\eta}^{-1})+\log (  \mathfrak{m}^e_0 +k)  }{\sqrt{ \mathfrak{m}^e_0 +k}} \right)^{2\kappa}
\\
&\le 
C\left(\log(  {\eta}^{-1})^2 +1+ \mathfrak{c}(N)\right)
 +
C  \sum_{k=1}^{\mathfrak{c}(N)} k^{\delta -\kappa} 
\left(  {  \log(  {\eta}^{-1})+\log N  }  \right)^{2\kappa}.
\\
&\le 
C\left(\log(  {\eta}^{-1})^2 +1+ \mathfrak{c}(N)\right)
 +
C \frac{ \mathfrak{c}(N)^{\delta -\kappa+1}}{\delta -\kappa+1} 
\left(  {  \log(  {\eta}^{-1})+\log N  }  \right)^{2\kappa}.
 \end{split}
 \end{align*}
Then from \eqref{cyc-bnd} it follows that $\mathfrak{c}(N)\le C(N-\mathfrak{m}^e_0)^{1/(1+\delta)}$, which implies that 
 for all $N>\mathfrak{m}^e_0$, 
$$
R(N)\le C \left(\log(  {\eta}^{-1})^2 +N^{\frac{1}{1+\delta} }+
 \frac{ 1}{\delta -\kappa+1} N^{\frac{\delta -\kappa+1}{1+\delta}  }  \left(  {  \log(  {\eta}^{-1})+\log N  }  \right)^{2\kappa}
\right).
$$
Hence it is clear that the growth rate of $(R(N))_{N\in \sN}$ is optimised at $\delta =\kappa$. 
This proves  the desired estimate. 
\end{proof}

\section{Proof of Proposition \ref{prop:u_lipschitz}}  \label{sec-pf-prop-u} 
The proof of Proposition \ref{prop:u_lipschitz} will follow by proving the stability of the coefficients $a^{\theta}$ and $B^\theta$ in \eqref{eq:optimalcontrol_monotone_str} with respect to $\theta$. Note that by \eqref{eq:aB} the stability of these coefficients will depend on the stability of the operator ${\Gamma}_t^{-1}$ in \eqref{op-gam-inv} and of $\{\Theta_t(s)\}_{\{t\in [0,s], \, s\in [0,T]\}}$ in \eqref{th-sol} with respect to $\theta$. We emphasise the dependence of the ingredients of $a^{\theta}$ and $B^\theta$ in $\theta$ in the following. Throughout this section we assume that $\phi >0$ in \eqref{def:objective-str}, where the proof of Proposition \ref{prop:u_lipschitz} for the case where $\phi =0$ follows the same lines but is much simpler as $a^{\theta}$ and $B^\theta$ considerably simplify. 
 
We recall that for any operator $\boldsymbol{G}$ from $ L^2\left([0,T],{\R^2}\right)$ to itself we define the operator norm, 
\be \label{op-norm} 
 \|\boldsymbol{G}\|_{\rm {op}}= \sup_{f \in L^2([0,T],\R^2), f\not = 0} \frac{\|\mathbf G f\|_{L^2([0,T])}}{\|f\|_{L^2([0,T])}}.
\ee

The following Lemma and Propositions are essential ingredients for the proof of Proposition \ref{prop:u_lipschitz}.  Recall that  $\Xi_{\eps}$ was defined in Definition \ref{assum1:parameters}.  
 \begin{lemma} \label{lemma-bnd-g-inv} 
 Let $(\boldsymbol {\Gamma}_t^\theta)^{-1}$ be defined as in \eqref{op-gam-inv}, then we have,
 $$
  \sup_{\theta \in \Xi_{\eps} } \sup_{t \in[0,T]}  \|( \boldsymbol {\Gamma}_t^\theta)^{-1} \|_{\rm {op}} < \infty. 
 $$ 
 \end{lemma}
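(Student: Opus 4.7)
The plan is to reduce the bound on $(\boldsymbol{\Gamma}_t^\theta)^{-1}$ to a uniform bound on $\boldsymbol{D}_t^{-1}$ via the $2\times 2$ block structure \eqref{op-gam-inv}, and then establish uniform coercivity of $\boldsymbol{D}_t$ over $\theta \in \Xi_\eps$ and $t \in [0,T]$.

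\textbf{Step 1 (uniform coercivity of $\boldsymbol D_t$).} The operator $\boldsymbol D_t = 2\lambda \id + (\boldsymbol{\tilde G}_t + \boldsymbol{\tilde G}_t^*) + 2\phi \boldsymbol 1_t^* \boldsymbol 1_t$ is self-adjoint on $L^2([0,T],\R)$. I will compute $\langle \boldsymbol D_t f, f\rangle_{L^2}$ using the explicit kernel $\tilde G_t(s,u) = (2\varrho + G(s-u))\mathds 1_{\{t \leq u < s\}}$ from \eqref{h-tilde}--\eqref{tilde-G}. A direct symmetrization gives
\[
\langle (\boldsymbol{\tilde G}_t + \boldsymbol{\tilde G}_t^*) f, f\rangle = 2\varrho \Big(\int_t^T f(s) ds\Big)^2 + \int_t^T \int_t^T G(|s-u|) f(u) f(s)  du  ds.
\]
The first term is non-negative, and applying \eqref{eps-pos-def} to the test function $f\mathds 1_{[t,T]}\in L^2([0,T],\R)$ bounds the second from below by $-\eps \|f\|^2_{L^2([0,T])}$. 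Combined with $2\lambda \geq 2L^{-1}$ from Definition \ref{assum1:parameters}, $2\phi \|\boldsymbol 1_t f\|^2_{L^2([0,T])} \geq 0$, and $\eps < L^{-1}/2$, I obtain
\[
\langle \boldsymbol D_t f, f\rangle \geq (2L^{-1} - \eps)\|f\|^2_{L^2([0,T])} \geq \tfrac{3}{2L}\|f\|^2_{L^2([0,T])}
\]
uniformly in $\theta \in \Xi_\eps$ and $t \in [0,T]$. Since $\boldsymbol D_t$ is a bounded self-adjoint operator on a Hilbert space, this strict coercivity yields invertibility with $\|\boldsymbol D_t^{-1}\|_{\rm op} \leq 2L/3$.

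\textbf{Step 2 (block operator bound).} A Cauchy--Schwarz estimate on the kernel \eqref{op-one} shows $\|\boldsymbol 1_t\|_{\rm op}, \|\boldsymbol 1_t^*\|_{\rm op} \leq T$. Inserting these bounds together with the bound on $\|\boldsymbol D_t^{-1}\|_{\rm op}$ from Step 1 into each of the four block entries of $(\boldsymbol \Gamma_t^\theta)^{-1}$ in \eqref{op-gam-inv}, I see that each entry has operator norm bounded by a constant depending only on $L, T, \phi, \varrho, \eps$. The operator norm of the full $2\times 2$ block operator on $L^2([0,T],\R^2)$ is controlled (up to a fixed constant) by the maximum of the four entry norms, which yields the desired uniform bound.

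\textbf{Main obstacle.} The only nontrivial ingredient is Step 1: converting the integral-positivity condition \eqref{eps-pos-def}, which is stated on the full interval $[0,T]$, into a coercivity estimate for $\boldsymbol{\tilde G}_t$ whose kernel is supported on $[t,T]^2$. The key observation is that $f\mathds 1_{[t,T]}$ still belongs to $L^2([0,T],\R)$, so \eqref{eps-pos-def} applies directly with this truncated test function, and the bound is preserved because $\|f\mathds 1_{[t,T]}\|^2_{L^2([0,T])} \leq \|f\|^2_{L^2([0,T])}$. The strict inequality $\eps < L^{-1}/2$ in Definition \ref{assum1:parameters} is then exactly what is needed to dominate the possibly-negative contribution of $G$ by the $2\lambda \id$ term and obtain a $t$-independent coercivity constant.
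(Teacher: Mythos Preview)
Your proof is correct and follows the same overall structure as the paper: reduce the bound on $(\boldsymbol{\Gamma}_t^\theta)^{-1}$ to a uniform bound on $(\boldsymbol{D}_t^\theta)^{-1}$ via the block formula \eqref{op-gam-inv}, and establish the latter through uniform coercivity of $\boldsymbol{D}_t^\theta$ over $\theta\in\Xi_\eps$ and $t\in[0,T]$. Your computation in Step~1 is accurate (in fact you obtain the coercivity constant $2L^{-1}-\eps$, which is sharper than the paper's $L^{-1}-2\eps$; the paper picks up an extra factor of~$2$ in its symmetrisation), and your Step~2 is the same as the paper's reduction.

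The one genuine difference is how the bound $\|(\boldsymbol D_t^\theta)^{-1}\|_{\rm op}\le 1/c$ is extracted from the coercivity estimate $\langle \boldsymbol D_t^\theta f,f\rangle\ge c\|f\|^2$. You invoke the standard fact that a bounded self-adjoint operator with spectrum bounded below by $c>0$ has inverse of norm at most $1/c$. The paper instead writes $\boldsymbol D_t^\theta=\delta\,\id+\boldsymbol S_t^\theta$ with $\boldsymbol S_t^\theta$ positive, asserts that $\boldsymbol S_t^\theta$ is compact, expands in an eigenbasis, and inverts term by term. Your route is shorter and more robust: it avoids the compactness claim (which is problematic as stated, since $\boldsymbol S_t^\theta$ contains the term $(2\lambda-\delta)\,\id$) and uses only the self-adjointness and coercivity, which is all that is actually needed.
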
 
The proof of Lemma \ref{lemma-bnd-g-inv} is postponed to Section \ref{sec-pf-thm-lips}.

\begin{proposition}[Stability of $(\boldsymbol{\Gamma}_t^{\theta})^{-1}$] \label{thm:Psi_lipschitz1}
There exists a constant $C\ge 0$ such that
$$
\sup_{ t\in [0,T]}
\|\boldsymbol {(\Gamma}_t^{\theta})^{-1}-(\boldsymbol {\Gamma}_t^{\theta'})^{-1}\|_{\rm op}\le C\left(|\lambda-\lambda'| +\|G-G'\|_{L^2([0,T])}\right), \quad \textrm{for all } \theta,\theta'\in \Xi_{\eps}. 
$$
 \end{proposition}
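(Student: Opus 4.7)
The plan is to reduce the stability of the block operator $(\boldsymbol{\Gamma}_t^{\theta})^{-1}$ to that of the single operator $\boldsymbol{D}_t^{-1}$, and then to prove the latter by a resolvent identity combined with Lemma \ref{lemma-bnd-g-inv}.

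First, I will exploit the explicit block form of $(\boldsymbol{\Gamma}_t^{\theta})^{-1}$ in \eqref{op-gam-inv}. Taking the difference of the block matrices for $\theta$ and $\theta'$, the $(1,1),(1,2),(2,1)$ blocks are respectively
$\boldsymbol{D}_t^{-1}-(\boldsymbol{D}'_t)^{-1}$, $-2\phi(\boldsymbol{D}_t^{-1}-(\boldsymbol{D}'_t)^{-1})\boldsymbol{1}^*_t$, $-2\phi \boldsymbol{1}_t(\boldsymbol{D}_t^{-1}-(\boldsymbol{D}'_t)^{-1})$, while the $(2,2)$ block equals $4\phi^2 \boldsymbol{1}_t(\boldsymbol{D}_t^{-1}-(\boldsymbol{D}'_t)^{-1})\boldsymbol{1}^*_t$ because the $-2\phi\,\id$ part does not depend on $\theta$. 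Since $\boldsymbol{1}_t$ is the integral operator whose kernel is $\mathds{1}_{\{u\ge s\ge t\}}$, its Hilbert-Schmidt norm is bounded uniformly in $t\in[0,T]$ by a constant depending only on $T$. By the trivial bound on the operator norm of a $2\times 2$ block operator by the sum of the norms of its blocks, it will therefore suffice to establish
\begin{equation}\label{eq:D_lipschitz_plan}
\sup_{t\in [0,T]} \|\boldsymbol{D}_t^{-1}-(\boldsymbol{D}'_t)^{-1}\|_{\rm op} \le C\bigl(|\lambda-\lambda'|+\|G-G'\|_{L^2([0,T])}\bigr),\quad \theta,\theta'\in \Xi_\eps.
\end{equation}

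Second, I will prove \eqref{eq:D_lipschitz_plan} via the standard resolvent identity
$\boldsymbol{D}_t^{-1}-(\boldsymbol{D}'_t)^{-1}=\boldsymbol{D}_t^{-1}(\boldsymbol{D}'_t-\boldsymbol{D}_t)(\boldsymbol{D}'_t)^{-1}$, which reduces the task to two ingredients. (a) A uniform bound $\sup_{\theta\in\Xi_\eps,\,t\in[0,T]}\|\boldsymbol{D}_t^{-1}\|_{\rm op}<\infty$. This follows from Lemma \ref{lemma-bnd-g-inv}, because $\boldsymbol{D}_t^{-1}$ is precisely the $(1,1)$ block of $(\boldsymbol{\Gamma}_t^{\theta})^{-1}$, hence its norm is controlled by $\|(\boldsymbol{\Gamma}_t^\theta)^{-1}\|_{\rm op}$. (b) A Lipschitz estimate for $\boldsymbol{D}_t$: from $\tilde G(t,s)=(2\varrho+G(t-s))\mathds{1}_{\{s<t\}}$ and \eqref{eq:schur}, one has $\boldsymbol{D}_t-\boldsymbol{D}'_t=2(\lambda-\lambda')\id+(\boldsymbol{\tilde G}_t-\boldsymbol{\tilde G}'_t)+(\boldsymbol{\tilde G}_t-\boldsymbol{\tilde G}'_t)^*$, and the operator $\boldsymbol{\tilde G}_t-\boldsymbol{\tilde G}'_t$ has kernel $(G-G')(s-u)\mathds{1}_{\{t\le u<s\}}$, whose Hilbert--Schmidt norm is bounded by $\sqrt{T}\,\|G-G'\|_{L^2([0,T])}$ uniformly in $t$. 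Combining these gives $\|\boldsymbol{D}_t-\boldsymbol{D}'_t\|_{\rm op}\le C(|\lambda-\lambda'|+\|G-G'\|_{L^2([0,T])})$, and substitution into the resolvent identity yields \eqref{eq:D_lipschitz_plan}.

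The main obstacle is therefore concentrated entirely in Lemma \ref{lemma-bnd-g-inv}, where one must establish $\sup_{\theta\in\Xi_\eps,\,t\in[0,T]}\|(\boldsymbol{\Gamma}_t^\theta)^{-1}\|_{\rm op}<\infty$; everything in the present proposition is then algebraic manipulation together with the resolvent identity. The delicate point is that for $\theta\in\Xi_\eps$ the kernel $G$ need not be non-negative definite but only satisfies the weaker condition \eqref{eps-pos-def}, so the invertibility of $\boldsymbol{D}_t=2\lambda\,\id+(\boldsymbol{\tilde G}_t+\boldsymbol{\tilde G}_t^*)+2\phi\,\boldsymbol{1}^*_t\boldsymbol{1}_t$ and a uniform lower bound on its spectrum will rely on the assumption $\eps<L^{-1}/2\le \lambda/2$, so that the ``$-\eps\|\cdot\|^2$'' deficit from \eqref{eps-pos-def} is absorbed into the $2\lambda\,\id$ term to yield a uniform coercivity estimate independent of $\theta\in\Xi_\eps$ and $t\in[0,T]$. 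Once that is settled in Lemma \ref{lemma-bnd-g-inv}, the stability statement of the proposition follows immediately along the lines sketched above.
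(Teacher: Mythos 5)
Your proposal is correct and uses essentially the same mechanism as the paper: a resolvent identity, the uniform bound on the inverse from Lemma \ref{lemma-bnd-g-inv} (which in the paper itself reduces to the bound \eqref{d-inv-bnd} on $(\boldsymbol{D}_t^\theta)^{-1}$), and the Lipschitz bound $\|\boldsymbol{D}_t^\theta-\boldsymbol{D}_t^{\theta'}\|_{\rm op}\le C(|\lambda-\lambda'|+\|G-G'\|_{L^2([0,T])})$, which is exactly the content of Lemma \ref{lemma-lip-gam}. The only (cosmetic) difference is that you apply the resolvent identity at the level of $\boldsymbol{D}_t$ and reassemble the blocks of \eqref{op-gam-inv}, whereas the paper applies it directly to $\boldsymbol{\Gamma}_t^\theta$ using the block form \eqref{op-gamma}; both routes rest on the same two ingredients.
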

The proof of Proposition \ref{thm:Psi_lipschitz1} is postponed to Section \ref{sec-pf-thm-lips}.

For any $\theta \in \Xi_{\eps}$ we define $\Theta^{\theta}= \{\Theta^{\theta}_{t}(s) : t\in [0,s],\, s\in [0,T]\}$ as in \eqref{th-sol}, that is,   
\be \label{th-th} 
\Theta^{\theta}_t(s)=- \left((\boldsymbol {\Gamma}_t^\theta)^{-1}  \mathds{1}_t \E\left[A_{(\cdot)}- A_T \Mid \mathcal F_t\right] e_1 \right)(s).
\ee

\begin{proposition}[Stability of $\Theta$] \label{prop-st-th} 
There exists a constant $C> 0$ such that for all $t\in [0,T]$ and $\theta,\theta'\in \Xi_{\eps}$ we have,
 \begin{align*}
&\| \Theta^\theta_t(\cdot)-  \Theta^{\theta'}_t(\cdot)\|_{L^2([0,T])} \leq C\left(|\lambda-\lambda'|  +\|G-G'\|_{L^2([0,T])}\right)
\|\sE[A_{(\cdot)}-A_{T}\mid \cF_t]\|_{L^2([0,T])}.
\end{align*}
\end{proposition}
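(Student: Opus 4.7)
The proof will be essentially a one-line consequence of Proposition \ref{thm:Psi_lipschitz1} once the dependence of $\Theta^\theta_t$ on $\theta$ is isolated. The key observation is that in the representation \eqref{th-th}, the only $\theta$-dependent object is the operator $(\boldsymbol{\Gamma}_t^\theta)^{-1}$; the function to which it is applied,
$$\mathbf{g}_t(s) \coloneqq \mathds{1}_t(s)\,\mathbb{E}\left[A_s - A_T \mid \mathcal{F}_t\right] e_1,$$
is independent of $\theta$. Hence the plan is to subtract the two representations and then apply the operator norm bound from Proposition \ref{thm:Psi_lipschitz1}.

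More concretely, I will write
$$\Theta^\theta_t(\cdot) - \Theta^{\theta'}_t(\cdot) = -\left[(\boldsymbol{\Gamma}_t^\theta)^{-1} - (\boldsymbol{\Gamma}_t^{\theta'})^{-1}\right]\mathbf{g}_t,$$
take the $L^2([0,T],\mathbb{R}^2)$ norm of both sides, and use the operator norm estimate
$$\|\Theta^\theta_t - \Theta^{\theta'}_t\|_{L^2([0,T])} \le \|(\boldsymbol{\Gamma}_t^\theta)^{-1} - (\boldsymbol{\Gamma}_t^{\theta'})^{-1}\|_{\rm op}\,\|\mathbf{g}_t\|_{L^2([0,T])}.$$
The first factor is bounded by $C(|\lambda-\lambda'|+\|G-G'\|_{L^2([0,T])})$ uniformly in $t\in [0,T]$ by Proposition \ref{thm:Psi_lipschitz1}, and the second factor is controlled by
$$\|\mathbf{g}_t\|_{L^2([0,T])} = \left(\int_t^T \bigl|\mathbb{E}[A_s-A_T\mid \mathcal F_t]\bigr|^2\, ds\right)^{1/2} \le \|\mathbb{E}[A_{(\cdot)}-A_T\mid \mathcal{F}_t]\|_{L^2([0,T])},$$
since $\mathds{1}_t$ is a pointwise indicator bounded by one. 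Multiplying the two bounds yields the claimed estimate.

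There is no real obstacle here once Proposition \ref{thm:Psi_lipschitz1} is in hand: the proof is a direct algebraic manipulation plus one operator-norm inequality, with the conditional expectation factor passing through untouched. The only mild care needed is in confirming that $\mathbf{g}_t \in L^2([0,T],\mathbb{R}^2)$ so that $(\boldsymbol{\Gamma}_t^\theta)^{-1}\mathbf{g}_t$ is well defined, which follows from the integrability assumption \eqref{ass:P} on $A$ combined with Doob's maximal inequality applied to the martingale $s\mapsto \mathbb{E}[A_T\mid \mathcal F_s]$ (or more directly from the fact that $A$ has integrable total variation, so that $\sup_{s\le T}|\mathbb{E}[A_s-A_T\mid \mathcal F_t]|\le \mathbb{E}[\int_0^T|dA_r|\mid \mathcal F_t]$ is almost surely finite).
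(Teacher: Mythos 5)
Your proposal is correct and follows essentially the same route as the paper: the paper likewise isolates the $\theta$-dependence in $(\boldsymbol{\Gamma}_t^\theta)^{-1}$, writes the difference as $-\bigl[(\boldsymbol{\Gamma}_t^\theta)^{-1}-(\boldsymbol{\Gamma}_t^{\theta'})^{-1}\bigr]$ applied to $\mathds{1}_t\,\sE[A_{(\cdot)}-A_T\mid\cF_t]e_1$, and then invokes Proposition \ref{thm:Psi_lipschitz1} together with the operator-norm inequality. Your closing remark on the square-integrability of $\mathbf{g}_t$ is a harmless extra check that the paper leaves implicit.
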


\begin{proof}
Throughout this proof
let $(\om,t)\in \Om\t [0,T]$ and define $$f_t(\cdot)=\mathds{1}_t(\cdot)  \sE[A_{(\cdot)}-A_{T}\mid \cF_t](\om).$$
From \eqref{th-th} it follows that  
\begin{equation} \label{th-comp} 
\begin{aligned} 
 \Theta^\theta_t(s)- \Theta^{\theta'}_t(s)  = -   \left( \left((\boldsymbol {\Gamma}^\theta_t)^{-1}  -(\boldsymbol {\Gamma}^{\theta'}_t)^{-1}  \right) \mathds{1}_t \E\left[A_{(\cdot)}- A_T \Mid \mathcal F_t\right] e_1 \right)(s)  .
\end{aligned} 
\end{equation} 
From \eqref{th-comp} and Theorem \ref{thm:Psi_lipschitz1} we get
 \begin{align*}
\label{eq:B_g}
\left\|   \Theta^\theta_t(\cdot)-   \Theta^{\theta'}_t(\cdot)  
\right\|_{L^2([0,T])} &\leq
C\left\| (\boldsymbol {\Gamma}^\theta_t)^{-1}  -(\boldsymbol {\Gamma}^{\theta'}_t)^{-1}  
\right\|_{\rm op }\|f_t(\cdot)e_1\|_{L^2([0,T])}
\\
& \le 
C\left(|\lambda-\lambda'| +\|G-G'\|_{L^2([0,T])}\right)\|f_t(\cdot)\|_{L^2([0,T])},
 \end{align*} 
 where $C> 0$ is a constant depending only on
$T,L$ and $\eps$, but  independent of $t,s\in [0,T]$, $\theta, \theta' \in \Xi_{\eps}$ and $\om \in \Om$. 
\end{proof}

\begin{proof} [Proof of Proposition \ref{prop:u_lipschitz}]
Inspired by \eqref{eq:aB} we define for any $\theta \in \Xi_{\eps}$, 
\begin{equation} \label{b-a-th} 
	\begin{aligned} 
a^\theta _t &= \frac 1{2\lambda} \left(\E\left[ A_t - A_T  \mid \mathcal F_t\right]  + 2\varrho q + \langle \Theta^\theta_t , K^\theta_t \rangle_{L^2} +  \langle \boldsymbol {(\Gamma}^\theta_t)^{-1} K^\theta_t  , \mathds 1_t  (- 2\varrho q, q )^\top \rangle_{L^2} \right), \\
B^\theta(t,s) &=  \mathds 1_{\{s<t\}}\frac 1{2\lambda} \left( \langle( \boldsymbol {\Gamma}^\theta_t)^{-1} K^\theta_t  , \mathds 1_t  ( \tilde G^{\theta}(\cdot,s), -1 )^\top \rangle_{L^2}   - \tilde G^{\theta}(t,s) \right). 
\end{aligned}  
\end{equation}
	
From Definition \ref{assum1:parameters} we have $\lam, \lam' \ge L^{-1}$ hence 
\be \label{dec-I}
\begin{aligned} 
|a^{\theta}_t- a_t^{\theta'}| &\leq \bigg| \frac 1{2\lambda}  \left(  \langle \Theta^\theta_t , K^\theta_t \rangle_{L^2} +  \langle (\boldsymbol {\Gamma}^\theta_t)^{-1} K^\theta_t  , \mathds 1_t  (- 2\varrho q, q )^\top \rangle_{L^2} \right) \\
&\quad  - \frac 1{2\lambda'}  \left(  \langle \Theta^{\theta'}_t , K^{\theta'}_t \rangle_{L^2} +  \langle (\boldsymbol {\Gamma}^{\theta'}_t)^{-1} K^{\theta'}_t  , \mathds 1_t  (- 2\varrho q, q )^\top \rangle_{L^2} \right)  \bigg| \\
 &\leq C_1 | \lambda-\lambda'|  \left|  \langle \Theta^\theta_t , K^\theta_t \rangle_{L^2} +  \langle( \boldsymbol {\Gamma}^\theta_t)^{-1} K^\theta_t  , \mathds 1_t  (- 2\varrho q, q )^\top \rangle_{L^2} \right| \\
&\quad  + C_2\big  |\langle \Theta^{\theta'}_t , K^{\theta'}_t \rangle_{L^2} -\langle \Theta^{\theta}_t , K^{\theta}_t \rangle_{L^2}\big |  \\
 &\quad+C_3 \big| \langle (\boldsymbol {\Gamma}^{\theta'}_t)^{-1} K^{\theta'}_t-( \boldsymbol {\Gamma}^{\theta}_t)^{-1} K^{\theta}_t  , \mathds 1_t  (- 2\varrho q, q )^\top \rangle_{L^2} \big|  \\
 &=: \sum_{j=1}^3 I^{\theta,\theta'}_j(t).
 \end{aligned}
 \ee 
 From \eqref{h-tilde}, \eqref{K-t} and Definition \ref{assum1:parameters} we get  
 \be \label{gf1} 
  \sup_{\theta \in \Xi_{\eps} } \sup_{t \in[0,T]}  \| K_t^\theta \|_{L^2([0,T])} < \infty. 
 \ee
From \eqref{ass:P}, Lemma \ref{lemma-bnd-g-inv} and \eqref{th-th} we get,  
\be \label{gf2} 
  \sup_{\theta \in \Xi_{ \eps} } \sup_{t \in[0,T]} \E\left[\int_0^T (\Theta_t^\theta(s))^2ds \right]< \infty. 
\ee
From \eqref{dec-I}, \eqref{gf1}, \eqref{gf2}, Lemma \ref{lemma-bnd-g-inv} and Cauchy-Schwarz inequality it follows that 
 \be \label{gf2.2} 
\sup_{t \in [0,T]} \E \big[ (I^{\theta,\theta'}_1(t)^2)\big] \leq C| \lambda-\lambda'|^2. 
\ee
 Note that, 
 \be \label{gf3} 
 \begin{aligned} 
|  I^{\theta,\theta'}_2(t)|  &\leq C_1 \big  |\langle \Theta^{\theta'}_t , K^{\theta'}_t -K^{\theta}_t \rangle_{L^2}| +C_2| \langle \Theta^{\theta}_t- \Theta^{\theta'}_t  , K^{\theta}_t \rangle_{L^2([0,T])}\big |. 
 \end{aligned}
\ee
 From \eqref{h-tilde}, \eqref{K-t} and Definition \ref{assum1:parameters} we get  
\be \label{k-lip} 
\|K^{\theta'}_t -K^{\theta}_t\|_{L^2} \leq C\|G - G'\|_{L^2([0,T])}. 
\ee
From Proposition \ref{prop-st-th}, \eqref{gf1}, \eqref{gf2}, \eqref{gf3} and \eqref{k-lip} it follows that 
\be \label{gf4} 
\sup_{t \in [0,T]}E[ (I^{\theta,\theta'}_2(t))^2]  \leq  C\|G - G'\|_{L^2([0,T])}^2. 
\ee
Following similar steps as in \eqref{gf3}--\eqref{gf4}, only using Proposition \ref{thm:Psi_lipschitz1} instead of Proposition \ref{prop-st-th} we get, 
\be \label{gf5} 
\sup_{t \in [0,T]} I^{\theta,\theta'}_3(t) \leq  C\|G - G'\|_{L^2([0,T])}. 
\ee
Plugging in \eqref{gf2.2}, \eqref{gf4} and \eqref{gf5} into \eqref{dec-I} we get, 
\begin{equation}\label{gf6} 
\sup_{t \in [0,T]}\E[(a^{\theta}_t- a_t^{\theta'})^2] \leq  C\left(|\lambda-\lambda'|^2 +\|G-G'\|^2_{L^2([0,T])}\right), \quad \textrm{for all } \theta,\theta'\in \Xi_{\eps}. 
\end{equation}
By repeating a similar argument leading to \eqref{gf6}, using \eqref{gf1}, \eqref{k-lip}, Lemma \ref{lemma-bnd-g-inv}  and Proposition \ref{thm:Psi_lipschitz1} on \eqref{b-a-th} we obtain for all $\theta,\theta'\in \Xi_{\eps}$, 
\begin{equation} \label{gf7} 
\sup_{t \in [0,T]}\|B^{\theta}(t,\cdot)- B^{\theta'}(t,\cdot)\|_{L^2([0,T])} \leq  C\left(|\lambda-\lambda'| +\|G-G'\|_{L^2([0,T])}\right),
\end{equation}
and 
\be \label{B-l-2} 
\sup_{\theta \in \Xi_{\eps}}  \sup_{t \in [0,T]}\|B^{\theta}(t,\cdot)\|_{L^2([0,T])}< \infty. 
\ee
The following bound can be obtained from \eqref{volt-u}, \eqref{gf6},\eqref{B-l-2} and Gronwall's lemma by using standard arguments,
\be \label{unif-u} 
 \sup_{\theta \in \Xi_{\eps} }\sup_{t\in [0,T]} \E \left[ (u_t^\theta)^2 \right] < \infty. 
\ee  
Since in the following we use a similar argument to derive the stability of $u^\theta$, we omit the details in order to avoid unnecessary repetition. 

Recall that the $\mathcal H^2$-norm was defined in \eqref{h-norm}. From \eqref{volt-u}, \eqref{gf6}, \eqref{gf7} and Cauchy-Schwarz inequality we therefore get 
\be  
\begin{aligned} 
\E[ (u^\theta_t - u^{\theta'}_t)^2 ]  &\leq C\Bigg( \E[(a^\theta_t-a^{\theta'}_t)^2] +\E\left[ \left(\int_0^t (B^\theta(t,s) - B^{\theta'}(t,s))u_s^\theta ds \right)^2\right] \\
& \quad +\E\left[ \left(\int_0^t B^\theta(t,s) (u_s^\theta-u_s^{\theta'}) ds \right)^2\right]  \Bigg)  \\ 
& \leq   C_1\left(|\lambda-\lambda'|^2 +\|G-G'\|^2_{L^2([0,T])}\right)(1+\|u^\theta \|_{\mathcal H^2})\\
& \quad +C_2 \sup_{\theta \in \Xi_{ \eps} }  \sup_{t \in [0,T]}\|B^{\theta}(t,\cdot)\|_{L^2([0,T])}\int_0^t \E\left[ (u_s^\theta-u_s^{\theta'})^2\right] ds.  
\end{aligned} 
 \ee
 Together with \eqref{B-l-2} and \eqref{unif-u} it follows that there exist $C_i>0$, $i=1,2$ not depending on $\theta, \theta'  \in \Xi_{ \eps}$ such that, 
\be  
\begin{aligned} 
\sup_{r\in [0,t]} \E[ (u^\theta_r - u^{\theta'}_r)^2 ]   & \leq   C_1\left(|\lambda-\lambda'|^2 +\|G-G'\|^2_{L^2([0,T])}\right) \\
& \quad +C_2 \int_0^t \sup_{r\in [0,s]}\E\left[ (u_s^\theta-u_s^{\theta'})^2\right] ds. 
\end{aligned} 
 \ee
Then from Gronwall's lemma it follows that 
$$
\sup_{r\in [0,T]} \E[ (u^\theta_r - u^{\theta'}_r)^2 ]  \le C\left(|\lambda-\lambda'|^2 +\|G-G'\|^2_{L^2([0,T])}\right), 
$$
and we get the result. 
\end{proof} 

\section{Proofs of Lemma \ref{lemma-bnd-g-inv} and Proposition \ref{thm:Psi_lipschitz1}} \label{sec-pf-thm-lips} 
As we mentioned at the beginning of Section \ref{sec-pf-prop-u}, we assume that $\phi >0$ in \eqref{def:objective-str}, where the case of $\phi =0$ is much simpler and is left  to reader. 
From \eqref{op-gam-inv} we note that for $\phi >0$, $(\boldsymbol \Gamma^\theta_t)^{-1}$ is the inverse of the operator 
\be \label{op-gamma} 
\begin{aligned}
\boldsymbol{\Gamma}_t^\theta
&= \left(\begin{matrix}
\boldsymbol{D}^\theta_t-2\phi  \boldsymbol{1}^*_t \boldsymbol{1}_t  & -\boldsymbol{1}^*_t \\
-\boldsymbol{1}_t & - \frac 1{2\phi} \id 
\end{matrix}\right). 
\end{aligned}
\ee
The proofs of Lemma \ref{lemma-bnd-g-inv} and Proposition \ref{thm:Psi_lipschitz1} will use the following auxiliary lemmas.
 \begin{lemma} \label{lemma-lip-gam} 
There exists a constant $C>0$ such that,  
 $$ \sup_{t \in[0,T]}  \| \boldsymbol {\Gamma}_t^\theta -  \boldsymbol {\Gamma}_t^{\theta'} \|_{\rm {op}} \le C\left(|\lambda-\lambda'| +\|G-G'\|_{L^2([0,T])}\right), \quad \textrm{for all } \theta,\theta'\in \Xi_{\eps}. 
$$
 \end{lemma}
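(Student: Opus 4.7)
The plan is to exploit the $2\times 2$ block structure of $\boldsymbol{\Gamma}_t^\theta$ displayed in \eqref{op-gamma}. Three of the four blocks---$-\boldsymbol{1}_t^*$, $-\boldsymbol{1}_t$ and $-\tfrac{1}{2\phi}\id$---do not involve $\theta$ at all, so the entire $\theta$-dependence is concentrated in the $(1,1)$ block, which by \eqref{eq:schur} equals
\[
\boldsymbol{D}^\theta_t - 2\phi \boldsymbol{1}_t^*\boldsymbol{1}_t = 2\lambda \id + \boldsymbol{\tilde G}_t^\theta + (\boldsymbol{\tilde G}_t^\theta)^*,
\]
where $\boldsymbol{\tilde G}_t^\theta$ denotes the integral operator associated with the kernel \eqref{tilde-G} built from $G$. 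The difference $\boldsymbol{\Gamma}_t^\theta - \boldsymbol{\Gamma}_t^{\theta'}$ is thus a block operator with all entries vanishing except the top-left one. Since the operator norm of such a block operator equals the norm of its nonzero block (a supremum attained on vectors supported in the first coordinate of $L^2([0,T],\R^2)$), the triangle inequality yields
\[
\|\boldsymbol{\Gamma}_t^\theta - \boldsymbol{\Gamma}_t^{\theta'}\|_{\rm op}
\le 2|\lambda-\lambda'| + 2\|\boldsymbol{\tilde G}_t^\theta - \boldsymbol{\tilde G}_t^{\theta'}\|_{\rm op},
\]
using that adjoining preserves operator norm and that the identity contribution is exactly $2|\lambda-\lambda'|$.

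Next I would estimate $\|\boldsymbol{\tilde G}_t^\theta - \boldsymbol{\tilde G}_t^{\theta'}\|_{\rm op}$ uniformly in $t$ via the Hilbert--Schmidt bound. By \eqref{h-tilde}--\eqref{tilde-G}, the difference operator has integral kernel $(G(s-u)-G'(s-u))\mathds{1}_{\{u<s\}}\mathds{1}_{\{u\ge t\}}$, so
\[
\|\boldsymbol{\tilde G}_t^\theta - \boldsymbol{\tilde G}_t^{\theta'}\|_{\rm op}^2
\le \int_t^T \int_u^T (G(s-u)-G'(s-u))^2\, ds\, du,
\]
and the change of variable $r=s-u$ followed by Fubini dominates the right-hand side by $T\|G-G'\|_{L^2([0,T])}^2$, uniformly in $t\in[0,T]$. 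Substituting back into the previous display gives the claimed Lipschitz estimate with $C = 2 + 2\sqrt{T}$, independent of $\theta,\theta'\in \Xi_\eps$ and of $t$.

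There is no genuine obstacle here; the argument reduces to a routine Hilbert--Schmidt bound once one isolates the unique $\theta$-dependent block. The only care required is to ensure that enlarging the $s$-integration range from $[u,T]$ to $[0,T]$ (inside the $L^2$ norm of $G-G'$) yields a constant independent of both $u$ and $t$, which it does because the relevant integration region $\{0\le t\le u<s\le T\}$ shrinks monotonically in $t$.
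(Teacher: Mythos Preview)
Your proof is correct and follows essentially the same route as the paper: both arguments observe that only the $(1,1)$ block of $\boldsymbol{\Gamma}_t^\theta$ depends on $\theta$ (the paper phrases this as a reduction to bounding $\|\boldsymbol{D}_t^\theta - \boldsymbol{D}_t^{\theta'}\|_{\rm op}$, which is the same thing since the $2\phi\boldsymbol{1}_t^*\boldsymbol{1}_t$ term cancels), then split off the $2(\lambda-\lambda')\id$ contribution and control the kernel difference via a Cauchy--Schwarz/Hilbert--Schmidt bound yielding $\sqrt{T}\|G-G'\|_{L^2([0,T])}$. Your version is slightly cleaner in that you make the block-norm identification and the explicit constant $2+2\sqrt{T}$ visible.
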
 
 \begin{proof}  
 From \eqref{op-gamma} it follows that it is enough to prove that there exists a constant $C>0$ such that,  
 \be \label{g1} 
 \sup_{t \in[0,T]}  \| \boldsymbol {D}_t^\theta -  \boldsymbol {D}_t^{\theta'} \|_{\rm {op}} \le C\left(|\lambda-\lambda'| +\|G-G'\|_{L^2([0,T])}\right), \quad \textrm{for all } \theta,\theta'\in \Xi_{\eps}.
\ee
From \eqref{eq:schur} we get,     
$$
 \boldsymbol {D}_t^\theta -  \boldsymbol {D}_t^{\theta'} := 2(\lambda -\lambda') \id +  (\boldsymbol{\tilde G}^\theta_t -\boldsymbol{\tilde G}^{\theta'}_t + (\boldsymbol{\tilde G}^\theta_t)^*-(\boldsymbol{\tilde G}^{\theta'}_t)^*)  \boldsymbol{1}_t.
$$
Note that 
for $\theta=(\lambda, G)$,
using \eqref{h-tilde} and \eqref{tilde-G},   the kernel of $\boldsymbol{\tilde G}^{\theta}_t$ is given by, 
\be \label{t-g-dec} 
\tilde G^{\theta}_t(s,u)= (2\varrho +G(s-u))\mathds{1}_{\{u<s\}} \mathds{1}_{\{u > t\}}.   
\ee
Hence 
\be \label{d-r1} 
\begin{aligned} 
(\boldsymbol {D}_t^\theta -  \boldsymbol {D}_t^{\theta'})f(s)&= 2(\lambda -\lambda')f(s)+ \int_0^T  (G (s-u)-G'(s-u)) \mathds{1}_{\{s>u>t\}}f(u)du \\ 
&\quad  + \int_0^T  \big(G(u-s)-G'(u-s)\big)\mathds{1}_{\{u>s>t\}}f(u)du. 
\end{aligned} 
\ee
From \eqref{d-r1} and \eqref{op-norm} and an application of Cauchy-Schwarz inequality we get \eqref{g1}. 
 \end{proof} 
  
\begin{lemma} \label{lemmma-inv-op} 
For any $\theta \in \Xi_{\eps}$ the operator $\boldsymbol{D}^{\theta}_t$ is {positive} definite, self-adjoint, invertible and moreover we have 
\be \label{D-pd} 
 \sup_{\theta \in \Xi_{\eps}} \sup_{t \in[0,T]} \langle f, \boldsymbol {D}_t^\theta f \rangle \ge  (L^{-1}-2\eps) \|f\|^2_{L^2([0,T])} , \quad \textrm{for all } f\in L^2([0,T], \mathbb{R}). 
\ee
\end{lemma}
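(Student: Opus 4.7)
The plan is to verify each of the claimed properties of $\boldsymbol{D}^{\theta}_t$ separately using its decomposition
$$\boldsymbol{D}_t^\theta = 2\lambda \id + (\boldsymbol{\tilde G}_t + \boldsymbol{\tilde G}^*_t) + 2\phi \boldsymbol{1}^*_t \boldsymbol{1}_t,$$
from \eqref{eq:schur}. Self-adjointness is immediate since each of the three summands is self-adjoint by construction, so the main work lies in proving the lower bound \eqref{D-pd}, from which positive-definiteness and invertibility will follow from standard spectral arguments.

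The key step is to rewrite the quadratic form associated with $\boldsymbol{\tilde G}_t + \boldsymbol{\tilde G}^*_t$ in a form that allows the near positive-definiteness condition \eqref{eps-pos-def} from Definition \ref{assum1:parameters} to be applied. Using \eqref{h-tilde} and \eqref{tilde-G}, the kernel of $\boldsymbol{\tilde G}_t$ is $(2\varrho+G(s-u))\mathds{1}_{\{u<s\}}\mathds{1}_{\{u\ge t\}}$. A careful case analysis (noting that $u \ge t$ together with $u < s$ forces $s \ge t$ as well, and symmetrically) shows that the kernel of $\boldsymbol{\tilde G}_t + \boldsymbol{\tilde G}^*_t$ equals $(2\varrho+G(|s-u|))\mathds{1}_{\{s,u\ge t,\, s\ne u\}}$. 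Hence, for any $f\in L^2([0,T],\mathbb{R})$, setting $g=f\mathds{1}_{[t,T]}$ gives
$$\langle f,(\boldsymbol{\tilde G}_t+\boldsymbol{\tilde G}_t^*)f\rangle = 2\varrho\Big(\int_t^T f(s)\,ds\Big)^2 + \int_0^T\!\!\int_0^T G(|s-u|)g(s)g(u)\,ds\,du.$$
The first term is non-negative since $\varrho\ge 0$, while applying \eqref{eps-pos-def} to $g$ bounds the second from below by $-\eps\|g\|_{L^2([0,T])}^2 \ge -\eps\|f\|_{L^2([0,T])}^2$.

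Combining this with $\langle f,2\lambda\id f\rangle = 2\lambda\|f\|_{L^2([0,T])}^2 \ge 2L^{-1}\|f\|_{L^2([0,T])}^2$ (using $\lambda\ge L^{-1}$ from Definition \ref{assum1:parameters}) and $\langle f,2\phi\boldsymbol{1}^*_t\boldsymbol{1}_t f\rangle = 2\phi\|\boldsymbol{1}_t f\|^2_{L^2([0,T])} \ge 0$ yields
$$\langle f,\boldsymbol{D}_t^\theta f\rangle \ge (2L^{-1}-\eps)\|f\|_{L^2([0,T])}^2 \ge (L^{-1}-2\eps)\|f\|_{L^2([0,T])}^2,$$
which gives \eqref{D-pd}, uniformly in $\theta\in\Xi_\eps$ and $t\in[0,T]$.

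Since $\eps\in(0,L^{-1}/2)$ we have $L^{-1}-2\eps>0$, so the above estimate shows that $\boldsymbol{D}^\theta_t$ is coercive, hence positive definite. For invertibility, I would observe that the lower bound implies the range is closed and that $\ker(\boldsymbol{D}^\theta_t)=\{0\}$; self-adjointness then gives $\mathscr{R}(\boldsymbol{D}^\theta_t) = \ker(\boldsymbol{D}^\theta_t)^\perp = L^2([0,T],\mathbb{R})$, so $\boldsymbol{D}^\theta_t$ is bijective and, by the open mapping theorem, has a bounded inverse (equivalently, by the spectral theorem for bounded self-adjoint operators, the spectrum lies in $[L^{-1}-2\eps,\infty)$). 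The one step requiring attention is the bookkeeping with the indicator functions in identifying the kernel of $\boldsymbol{\tilde G}_t+\boldsymbol{\tilde G}_t^*$ and extending $f$ by zero to use \eqref{eps-pos-def}, which is defined for functions on all of $[0,T]$ rather than on $[t,T]$.
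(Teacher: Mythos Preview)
Your proof is correct and follows essentially the same route as the paper: both identify the kernel of $\boldsymbol{\tilde G}_t+\boldsymbol{\tilde G}_t^*$, split off the $\varrho$-contribution as a nonnegative square $2\varrho(\int_t^T f)^2$, and apply the near-nonnegativity condition \eqref{eps-pos-def} to the truncated function $f\mathds{1}_{[t,T]}$. Your computation is in fact slightly more explicit (the paper defers the kernel identity to \cite{AJ-N-2022}) and yields the sharper constant $2L^{-1}-\eps$, which of course dominates the stated $L^{-1}-2\eps$.
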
 

  \begin{proof} 
  We first note that from \eqref{eq:schur} it follows that $\boldsymbol{D}^{\theta}_t$ is a self-adjoint operator. We will prove \eqref{D-pd}, which under the condition of Definition \ref{assum1:parameters} implies that $\boldsymbol D^{\theta}_t$ is positive definite, hence it is invertible. 
 
 Recall that $\tilde G^{\theta}$ was defined in \eqref{h-tilde} and that $\boldsymbol{\tilde G}^{\theta}_t$ is the operator induced by the kernel $\tilde G^{\theta}(s,u)\mathds 1_{\{u \geq t\}}$. Since $\lam>0$, the operator $\lam \id$ is positive definite and the operator $\boldsymbol{1}^*_t \boldsymbol{1}_t$ is nonnegative definite. It follows from \eqref{eq:schur} that in order to prove that $\boldsymbol D^{\theta}_t$ satisfies \eqref{D-pd} we need to derive a lower bound on $ (\boldsymbol {\tilde G}^{\theta}_t +  (\boldsymbol {\tilde G}_t^{\theta})^*)$. 
Let $f\in L^2\left([0,T],\mathbb R\right)$. Repeating the same steps as in the proof of Lemma 4.1 in \cite{AJ-N-2022} we get 
\be \label{pd-1}
\begin{aligned}  
&\int_0^T\int_0^T\big( G^{\theta}_t(s,u)+(G^{\theta}_t)^*(s,u)\big) f(s)f(u)ds du  \\
 &= \int_0^T\int_0^T 2G (|s-u|) f_t(s)f_t(u)ds du \\
&\geq - 2\eps \|f\|^2_{L^2([0,T])}, \quad \textrm{for all } t\in [0,T], \  \theta \in\Xi_{\eps}, 
\end{aligned} 
\ee
where we used Definition \ref{assum1:parameters} in the last inequality. 

Moreover, we have
\be \label{pd-2} 
\begin{aligned}
&  \int_{t}^T\int_{u}^T f(s)f(u)dsdu  + \int_{0}^T\int_{0}^T \mathds 1_{\{t\leq s \leq u\}}f(u)f(s)du ds \\
&=   \int_{t}^T\int_{u}^T f(s)f(t)dsdt  + \int_{t}^T\int_{t}^u f(u)f(s)ds du \\
&=   \int_{t}^T\int_{t}^T f(s)f(u)dsdu  = \left( \int_{t}^T f(s)ds\right)^2  \geq 0.
\end{aligned}
\ee
From \eqref{t-g-dec}, \eqref{pd-1} and \eqref{pd-2} it follows that $\boldsymbol{\tilde G}^{\theta}_t$ satisfies, 
\be \label{c-eq} 
 \sup_{\theta \in \Xi_{\eps}} \sup_{t \in[0,T]} \langle f,(\boldsymbol {\tilde G}^{\theta}_t +  (\boldsymbol {\tilde G}_t^{\theta})^*)f \rangle \ge  -2\eps \|f\|^2_{L^2([0,T])} , \quad \textrm{for all } f\in L^2([0,T], \mathbb{R}). 
\ee
Since $\boldsymbol{1}^*_t \boldsymbol{1}_t$ is nonnegative definite and by Definition \ref{assum1:parameters}, $\lambda \geq L^{-1}> 2\eps$, \eqref{D-pd} follows from \eqref{eq:schur} and \eqref{c-eq}.  
\end{proof} 

 Now we are ready to prove Lemma \ref{lemma-bnd-g-inv} and Proposition \ref{thm:Psi_lipschitz1}. 
 \begin{proof} [Proof of Lemma \ref{lemma-bnd-g-inv} ]
 Note that from \eqref{op-gam-inv} it follows that it enough to show that 
\be \label{d-inv-bnd}
  \sup_{\theta \in \Xi_{\eps} } \sup_{t \in[0,T]}  \|( \boldsymbol {D}_t^\theta)^{-1} \|_{\rm {op}} < \infty. 
 \ee
Considering \eqref{D-pd}, we choose $\dl \in (0,L^{-1}-2\eps)$ and define the operator 
\be \label{s-op} 
\boldsymbol S^{\theta}_t := (2\lambda -\dl) \id  + (\boldsymbol{\tilde G}^{\theta}_t + (\boldsymbol{\tilde G}^{\theta}_t)^*) + 2\phi  \boldsymbol{1}^*_t \boldsymbol{1}_t, \quad 0\leq t \leq T. 
\ee
From Lemma \ref{lemmma-inv-op}  it follows that there exists $\bar \dl>0$ such that 
\be \label{S-pd} 
 \sup_{\theta \in \Xi_{\eps}} \sup_{t \in[0,T]} \langle f, \boldsymbol {S}_t^\theta f \rangle > \bar \dl \|f\|^2_{L^2([0,T])} , \quad \textrm{for all } f\in L^2([0,T], \mathbb{R}). 
\ee
and in particular $\boldsymbol {S}_t^\theta $ is positive definite, invertible, self-adjoint and compact with respect to the space of bounded operators {on $L^{2}([0,T])$}, equipped with the operator norm given in \eqref{op-norm}. From Theorem 4.15 in \cite{Porter1990} it follows that  $\boldsymbol S^{\theta}_t$ admits a spectral decomposition in terms of a sequence of positive eigenvalues $(\mu_{t,n})_{n=1}^{\infty}$ and an orthonormal sequence of eigenvectors $(\varphi^{\theta}_{t,n})_{n=1}^{\infty}$ in $L^{2}([0,T])$, such that 
$$
\boldsymbol S^{\theta}_t= \sum_{k}\mu^{\theta}_{t,k} \langle \varphi^{\theta}_{t,k}, \cdot \rangle_{L^2} \varphi^{\theta}_{t,k}. 
$$
By application of Cauchy Schwarz and the fact that $\boldsymbol S^{\theta}_t$ is self-adjoint we get for any $\theta \in \Xi_{\eps}$, 
\begin{align*}
 \sup_{t\leq T}\sum_{k}(\mu_{t,k}^{\theta})^2 & \leq  C\left((2\lambda -\dl)^2 + \sup_{t\leq T} \int_0^T \left(({\tilde G^{\theta}}_t + ({\tilde G}^{\theta}_t)^*) + 2\phi  \mathds{1}^*_t \mathds{1}_t \right)^2 {(s, s)  ds} \right)\\
&<\infty, 
\end{align*}
where the second inequality follows from Definition \ref{assum1:parameters} and \eqref{h-tilde}. 
From \eqref{eq:schur} and \eqref{s-op} it follows that we can rewrite  $\boldsymbol D^{\theta}_t= \boldsymbol S^{\theta}_t +\dl \id$ as follows, 
$$ \boldsymbol D^{\theta}_t = \sum_{k} \left(\dl + \mu^{\theta}_{t,k}\right) \langle  \varphi^{\theta}_{t,k}, \cdot \rangle_{L^2}  \varphi^{\theta}_{t,k}. $$
We can therefore represent $(\boldsymbol{D}^{\theta}_t)^{-1}$ as follows, 
$$(\boldsymbol{D}^{\theta}_t)^{-1}= \sum_{k}  \frac{1} {\left(\dl + \mu^{\theta}_{t,k}\right)}\langle \varphi^{\theta}_{t,k}, \cdot \rangle_{L^2} \varphi^{\theta}_{t,k}.$$
Since $\dl>0$ and $\mu^{\theta}_{t,k} \geq 0$, for all $t\in [0,T]$, $\theta \in \Xi_{\eps}$ and $k=1,2,...$, we get that for any $f\in L^2([0,T], \mathbb{R})$, 
$$
\| (\boldsymbol{D}_t^\theta)^{-1}f \|_{L^2} \leq \frac{1}{\dl} \|f\|_{L^2}, \quad \textrm{for all } 0\leq t \leq T, \ \theta \in \Xi_{\eps}.
$$
Together with \eqref{op-norm} we get \eqref{d-inv-bnd} and this completes the proof. 
  \end{proof} 
 
 \begin{proof}[Proof of Proposition \ref{thm:Psi_lipschitz1}] 
We observe that 
\be \label{gam-id} 
 (\boldsymbol {\Gamma}_t^\theta  )^{-1}
- ( \boldsymbol {\Gamma}_t^{\theta'}  )^{-1}
= (\boldsymbol {\Gamma}_t^\theta  )^{-1}\left( \boldsymbol {\Gamma}_t^\theta 
-  \boldsymbol {\Gamma}_t^{\theta'} \right)( \boldsymbol {\Gamma}_t^{\theta'}  )^{-1}.
\ee
By taking the operator norm on both sides of \eqref{gam-id} 
and using Lemma \ref{lemma-bnd-g-inv} and then Lemma \ref{lemma-lip-gam} we get   for all $\theta,\theta'\in \Xi_{\eps} $ and $t\in [0,T]$,
\begin{align*}
\left\|(\boldsymbol {\Gamma}_t^\theta  )^{-1}
- ( \boldsymbol {\Gamma}_t^{\theta'}  )^{-1}\right\|_{\rm op} 
&\le C \| \boldsymbol {\Gamma}_t^\theta -  \boldsymbol {\Gamma}_t^{\theta'}\|_{\rm op}
\leq  C\left(|\lambda-\lambda'| +\|G-G'\|_{L^2([0,T])}\right). 
\end{align*}

 \end{proof}

\appendix

\section{Regression-based algorithm for signal estimation} \label{app-signal} 
Observe that the trading strategy
\eqref{eq:optimalcontrol_monotone_str}
(i.e., the process $a$ in  \eqref{eq:aB})
involves the  conditional  process 
$(t,s)\mapsto  \sE\left[   A_s\mid \cF_t\right]$ 
of the signal process $A$
on 
 $\Delta_T=\{(t,s)\mid 0\le t\le s\le T\}$.
As  the conditional distribution of $A$ is in general not known analytically,
the section  proposes a regression-based  algorithm 
  to estimate the process $(t,s)\mapsto  \sE\left[   A_s\mid \cF_t\right]$ 
based on observed signal trajectories. 
Since the agent's trading strategy does not affect the   signals, 
we   assume for simplicity that 
the signal estimation has been carried out separately  from 
learning the price impact coefficients $(\lambda^\star,G^\star)$.

Throughout this section,
we assume that there exist 
independent copies $(A^m)_{m\in \sN}$ of $A$,
and impose the  regularity condition of the signal $A$
as specified in Assumption \ref{assum:signal}.
To simplify the notation, 
 we write $\|X\|_{L^p(\Omega)}$, $p>1$,
for the $L^p$-norm of a random variable $X:\Omega\to \sR$,
 and write $\sE_t[\cdot]=\sE[\cdot\mid \cF_t]$  
for each $t\in [0,T]$. 

\begin{assumption}
\label{assum:signal}
There exists   $I: \Omega\times [0,T]\to \sR$ such that
 $A_t=\int_0^t I_s d s$  for all $t\in [0,T]$,
 and $I$ is Markov with respect to the filtration 
 $(\cF_t)_{t\in [0,T]}$.
Moreover, 
there exists   $\vartheta>2$ and    $L\ge 0$ 
such that
$\sup_{t\in [0,T]}\|I_t\|_{L^\vartheta(\Omega)}<\infty$,
and 
  for all $0\le t\le s\le r\le T$ and $x,y\in \sR$,
  $\| I_t-I_s\|_{L^2(\Omega)}\le L  |t-s|^{1/2}$
  and 
 $|\sE^{t,x}[I_r]-\sE^{s,y}[I_r]|\le L\big(|x-y|+|t-s|^{1/2}(1+|x|+|y|)\big)$,
where $\sE^{t,x}[I_r]\coloneqq \sE[I_r\mid I_t=x]$. 
\end{assumption}

\begin{remark}
Assumption \ref{assum:signal}
allows for non-Markovian signals $A$ (with respect to $(\cF_t)_{t\in [0,T]}$),
but requires the time derivative of  $A$ to be 
  Markov  
   with a sufficiently regular transition kernel.
This assumption includes 
as   special cases 
the signal processes  in \cite{Lehalle-Neum18,NeumanVoss:20},
  where $I$ is an   Ornstein--Uhlenbeck process.
  More generally, 
  Assumption \ref{assum:signal} holds 
  if $I$ solves a jump-diffusion  stochastic differential equation 
  with   sufficiently regular (e.g., Lipschitz continuous \cite[Theorem 4.1.1]{delong2013backward}) coefficients.
   
\end{remark}

 \paragraph{Least-squares Monte Carlo   for signal estimation.}
By
  Assumption \ref{assum:signal},
  for each $m\in \sN$ and $t\in [0,T]$, $A^m_t=\int_0^t I^m_s d s$,
  where  $(I^m)_{m\in \sN}$ are independent copies of the 
 Markov process $I$.
In the sequel, we  approximate  
$  (t,s)\mapsto \sE_t [   A_s ]
=A_t+\int_t^s \sE_t[I_r] dr$
by  constructing a regression-based estimator of  $
\Delta_T\ni (t,s)\mapsto \sE[I_s\mid I_t]\in \sR $
based on $(I^m)_{m\in \sN}$.
We first discretise $(t,s)\mapsto \sE_t [I_s]$ in time. 
More precisely, for each $N\in \sR$, consider the   grid $\pi_N=\{t_i\}_{i=0}^N
\in \mathscr{P}_{[0,T]}$ with $t_i=iT/N$ for all $0\le i\le N$, and 
define the following   approximation of $\sE_t [   A_s ]$:
for all $(t,s)\in \Delta_T$,
\begin{equation}\label{eq:discrete_I}
\sE_t [   A_s ]\approx A_t +\int_t^s \cI^{N}_{t,r}dr,
\quad 
\textnormal{with
$\cI^{N}_{t,r}\coloneqq \sum_{i,j=0}^{N-1}\sE_{t_i}[I_{t_j}]\mathds{1}_{ [t_i,t_{i+1})\times [t_j,t_{j+1}) }(t,r)
$,  $(t,r)\in  \Delta_T$.
}
\end{equation}

The conditional expectations in \eqref{eq:discrete_I} 
are then projected 
on prescribed basis functions via least-squares Monte Carlo methods (see e.g., \cite{gyorfi2002distribution}).
To this end,
let $\cV$ be a finite-dimensional  vector space of functions $\psi:\sR\to\sR$,
and
 for each $R\ge 0$,
let $\cT_R:\sR\to \sR$ be the truncation function such that $\cT_R(x)=\max(-R,\min(x,R))$ for all $x\in \sR$.
Then for each $N,M\in \sN$ and $R\ge 0$,
 we define the following truncated least-squares estimate of $\sE_{t_i}[I_{t_j}]$: for all $0\le i\le j\le N$, 
 \begin{equation}
 \label{eq:truncated_LS_V}
\psi_{i,j}(\cdot)=\cT_R\tilde{\psi}_{i,j}(\cdot),
\quad 
\textnormal{with $\tilde{\psi}_{i,j}\in \argmin_{\psi\in \cV}\frac{1}{M}\sum_{m=1}^{M}| \psi(I^m_{t_i})-\cT_R(I^m_{t_j})|^2$}.
 \end{equation}
As  $\cV$ is a vector space,  
$\tilde{\psi}_{i,j}$ in \eqref{eq:truncated_LS_V}  can be computed by solving a  least-squares   problem over the weights of some fixed basis functions, whose solution may not be unique (see e.g., page 162 of \cite{gyorfi2002distribution}). Note that there can be more than one solution to \eqref{eq:truncated_LS_V}. 
We then consider the approximation 
 $\sE_t [   A_s ] \approx A_t +\int_t^s \cI^{N,M,R}_{t,r}dr$,
 with
 \begin{align}
{ \cI}^{N,M,R}_{t,r}&= \sum_{i,j=0}^{N-1}\psi_{i,j}(I_{t_i})\mathds{1}_{ [t_i,t_{i+1})\times [t_j,t_{j+1}) }(t,r),
\quad (t,r)\in \Delta_T.
\label{eq:discrete_sample_I}
 \end{align}

The accuracy of \eqref{eq:discrete_sample_I}  depends on the expressivity and complexity of the  vector space $\cV$
(see Proposition \ref{prop:approximation_error_general}).
By the Lipschitz continuity of  the   map $x\mapsto \sE^{t_i,x}[I_{t_j}]$ (Assumption \ref{assum:signal}),
we set  the vector space $\cV$ 
as the space of piecewise constant functions
defined on a spatial grid. 
This  allows for optimally balancing the expressivity and complexity
of $\cV$
and   obtaining a precise error estimate of \eqref{eq:discrete_sample_I}
(see Theorem \ref{thm:approximation_error_LSMC_constant}).
 More precisely, 
for each $R\ge 0$ and $K\in \sN$, 
 let $\mathfrak{p}_K\coloneqq \{-\frac{R}{2}=x_0<x_i<\ldots<x_K=\frac{R}{2}\}$ be a  uniform grid of   $[-\frac{R}{2},\frac{R}{2}]$ such that $x_{i+1}-x_i=\frac{R}{K}$ for all $0\le i\le K-1$,
 and let $\cV_K$ be the   space of real-valued functions 
 that are piecewise constant 
 on the grid $\mathfrak{p}_K$ and 
  zero outside $[-\frac{R}{2},\frac{R}{2}]$.
 It is clear that $\cV_K$ is of the dimension $K$. 
 Then for each $N,M, K\in \sN$ and $R\ge 0$,
consider the approximation  $\sE_t [   A_s ] \approx A_t +\int_t^s \cI^{N,M,K, R}_{t,r}dr$,
where 
the process 
$\Delta_T\ni (t,r)\mapsto \cI^{N,M,K, R}_{t,r}\in\sR$ is defined by (cf.~\eqref{eq:discrete_sample_I}:
\begin{align}
{ \cI}^{N,M,K, R}_{t,r}&= \sum_{i,j=0}^{N-1}\psi_{i,j}(I_{t_i})\mathds{1}_{ [t_i,t_{i+1})\times [t_j,t_{j+1}) }(t,r),
\quad (t,r)\in \Delta_T,
\label{eq:discrete_sample_space_I}
 \end{align}
with 
$\psi_{i,j}$ being  the   truncated least-squares estimate  \eqref{eq:truncated_LS_V} 
over the space 
 $\cV=\cV_{K}$:
$$
\psi_{i,j}(\cdot)=\cT_R\tilde{\psi}_{i,j}(\cdot),
\quad 
\textnormal{with $\tilde{\psi}_{i,j}\in \argmin_{\psi\in \cV_K}\frac{1}{M}\sum_{m=1}^M | \psi(I^m_{t_i})-\cT_R(I^m_{t_j})|^2$}.
$$

\paragraph{Convergence rates of \eqref{eq:discrete_sample_I} and \eqref{eq:discrete_sample_space_I}.}

To quantify the   accuracy of \eqref{eq:discrete_sample_I} and \eqref{eq:discrete_sample_space_I},
we start with the following technical lemma.

\begin{lemma}
\label{Lemma:c_I}
Suppose  that Assumption \ref{assum:signal}  holds.
Then 
there exists   $C\ge 0$ such that 
for all $0\le t\le s\le r\le T$, 
$\|\sE_t[I_r]-\sE_t[I_s]\|_{L^2(\Omega)}
\le C|r-s|^{1/2}$ and 
$\|\sE_t[I_r]-\sE_s[I_r] \|_{L^2(\Omega)} \le C|t-s|^{1/2}$.
 
\end{lemma}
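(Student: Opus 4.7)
The plan is to prove the two estimates directly from the regularity and Markov assumptions on $I$ stated in Assumption \ref{assum:signal}, using conditional Jensen's inequality for the first and the Markov property for the second. Throughout, let $C>0$ denote a generic constant that may change from line to line but depends only on $L$, $\vartheta$, $T$ and $\sup_{t\in[0,T]}\|I_t\|_{L^\vartheta(\Omega)}$.

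For the first bound, the key observation is that conditional expectation is an $L^2$-contraction. Applying linearity and conditional Jensen's inequality yields
\begin{equation*}
\big\|\sE_t[I_r]-\sE_t[I_s]\big\|_{L^2(\Omega)}^2
= \sE\!\left[\big(\sE_t[I_r-I_s]\big)^2\right]
\le \sE\!\left[\sE_t[(I_r-I_s)^2]\right]
= \|I_r-I_s\|_{L^2(\Omega)}^2,
\end{equation*}
and then the assumed $1/2$-H\"older continuity $\|I_r-I_s\|_{L^2(\Omega)}\le L|r-s|^{1/2}$ closes the estimate.

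For the second bound, I would use the Markov property of $I$ with respect to $(\cF_t)_{t\in[0,T]}$ to represent the conditional expectations via the transition kernel: $\sE_t[I_r]=\sE^{t,I_t}[I_r]$ and $\sE_s[I_r]=\sE^{s,I_s}[I_r]$. The Lipschitz-type regularity assumption on the map $(t,x)\mapsto \sE^{t,x}[I_r]$ then yields
\begin{equation*}
\big|\sE_t[I_r]-\sE_s[I_r]\big|
\le L\Big(|I_t-I_s|+|t-s|^{1/2}\big(1+|I_t|+|I_s|\big)\Big) \quad \text{a.s.}
\end{equation*}
Taking $L^2(\Omega)$-norms, using the triangle inequality, the H\"older bound $\|I_t-I_s\|_{L^2(\Omega)}\le L|t-s|^{1/2}$, and the uniform moment bound $\sup_{u\in[0,T]}\|I_u\|_{L^2(\Omega)}\le \sup_{u\in[0,T]}\|I_u\|_{L^\vartheta(\Omega)}<\infty$ (valid since $\vartheta>2$) leads to
\begin{equation*}
\big\|\sE_t[I_r]-\sE_s[I_r]\big\|_{L^2(\Omega)}
\le C|t-s|^{1/2},
\end{equation*}
which is the desired estimate.

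Neither step presents a serious obstacle; the mild subtlety is making sure that the pathwise Lipschitz bound from Assumption \ref{assum:signal} is indeed invoked under the conditional distribution $\sP(\cdot\mid I_t,I_s)$, which is justified because both $\sE_t[I_r]$ and $\sE_s[I_r]$ are $\sigma(I_t,I_s)$-measurable by the Markov property. The rest reduces to combining the two a priori bounds already supplied by the assumption.
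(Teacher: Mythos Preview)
Your proof is correct and follows essentially the same route as the paper: conditional Jensen's inequality for the first estimate, and the Markov representation $\sE_t[I_r]=\sE^{t,I_t}[I_r]$ together with the assumed Lipschitz-type bound on $(t,x)\mapsto \sE^{t,x}[I_r]$ for the second. The only cosmetic difference is that the paper squares and applies Young's inequality where you use the triangle inequality in $L^2$; also, your remark about $\sigma(I_t,I_s)$-measurability is slightly overcomplicated, since the pathwise bound holds simply by plugging the random variables $I_t$ and $I_s$ into the deterministic inequality for $(t,x,s,y)\mapsto |\sE^{t,x}[I_r]-\sE^{s,y}[I_r]|$.
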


\begin{proof}
Let $0\le t\le s\le r\le T$, 
and  $C\ge 0$ be a generic constant independent of $t,s$ and $r$.
Jensen's inequality and Assumption \ref{assum:signal}  imply that
$\sE[|\sE_t[I_r]-\sE_t[I_s]|^2]
\le \sE[|I_r-I_s|^2]\le L^2|r-s|$.
Moreover, by the Markov property of $I$ and Assumption \ref{assum:signal},
\begin{align*}
|\sE_t[I_r]-\sE_s[I_r]|=|\sE^{t,I_t}[I_{r}]-\sE^{s,I_s}[I_{r}]| 
\le L\big(|I_t-I_s|+|t-s|^{1/2}(1+|I_t|+|I_s|\big).
\end{align*}
Taking $L^2$-norm on both sides and apply Young's inequality yield
\begin{align*}
\sE[|\sE_t[I_r]-\sE_s[I_r]|^2]
& \le 2L^2\big(\sE[|I_t-I_s|^2]+|t-s|\sE[(1+|I_t|+|I_s|)^2]\big)
\le C|t-s|,
\end{align*}
where the last inequality used 
 $\sup_{t\in [0,T]}\sE[|I_t|^2]<\infty$.
\end{proof}

The following proposition 
quantifies  the time discretisation error     of \eqref{eq:discrete_I}
based on Lemma \ref{Lemma:c_I}.
\begin{proposition}
\label{prop:signal_time}
Suppose that Assumption \ref{assum:signal}  holds.
Then there exists $C\ge 0$ such that 
$$
\sup_{t\in [0,T]}\left\|  \sup_{s\in [t,T]}\bigg|\sE_t [  A_s]-
\left(A_t+ \int_t^s  \cI^{N}_{t,r}   d r\right)\bigg|\right\|_{L^2(\Omega)}
\le C   N^{-\frac{1}{2}},
\quad \textrm{for all } N\in \sN. 
$$
\end{proposition}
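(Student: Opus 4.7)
The plan is to reduce the target quantity to an integral whose integrand is controlled pointwise in $L^2(\Omega)$ by Lemma \ref{Lemma:c_I}. Since $A_s=A_t+\int_t^s I_r\,dr$ by Assumption \ref{assum:signal}, linearity of the conditional expectation yields $\sE_t[A_s]=A_t+\int_t^s \sE_t[I_r]\,dr$, and so
\[
\sE_t[A_s]-\Big(A_t+\int_t^s \cI^{N}_{t,r}\,dr\Big)
=\int_t^s\big(\sE_t[I_r]-\cI^{N}_{t,r}\big)\,dr.
\]
Because $\bigl|\int_t^s f(r,\omega)\,dr\bigr|\le \int_t^T |f(r,\omega)|\,dr$ for every $s\in[t,T]$ and $\omega$, the $s$-supremum is dominated pathwise by the $\omega$-integral against $dr$, reducing the problem to bounding $\|\sE_t[I_r]-\cI^{N}_{t,r}\|_{L^2(\Omega)}$ uniformly in $(t,r)\in\Delta_T$.

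Next I would establish this pointwise bound. For $(t,r)\in[t_i,t_{i+1})\times[t_j,t_{j+1})\cap \Delta_T$, the constraint $t\le r$ forces $i\le j$, hence $t_i\le t_j\le r$ and $t_i\le t\le r$. I would split
\[
\sE_t[I_r]-\sE_{t_i}[I_{t_j}]
=\big(\sE_t[I_r]-\sE_{t_i}[I_r]\big)+\big(\sE_{t_i}[I_r]-\sE_{t_i}[I_{t_j}]\big),
\]
and bound the two terms in $L^2(\Omega)$ by $C|t-t_i|^{1/2}$ and $C|r-t_j|^{1/2}$ respectively, using the two estimates of Lemma \ref{Lemma:c_I} (the first part applied with $(t_i,t,r)$ playing the roles $(t,s,r)$ there, and the second part applied with $(t_i,t_j,r)$ in the roles $(t,s,r)$). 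Since $t_{i+1}-t_i=T/N$ for every $i$, this gives $\|\sE_t[I_r]-\cI^{N}_{t,r}\|_{L^2(\Omega)}\le C N^{-1/2}$ uniformly in $(t,r)\in \Delta_T$.

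Finally, Minkowski's integral inequality yields
\[
\Big\|\int_t^T|\sE_t[I_r]-\cI^{N}_{t,r}|\,dr\Big\|_{L^2(\Omega)}
\le \int_t^T\|\sE_t[I_r]-\cI^{N}_{t,r}\|_{L^2(\Omega)}\,dr
\le C\,T\,N^{-1/2},
\]
uniformly in $t\in[0,T]$, completing the argument. I do not expect any genuine obstacle here: the $N^{-1/2}$ rate is simply inherited from the Hölder-$1/2$ regularity in both the conditioning time and the sampled time of $\sE_\cdot[I_\cdot]$ that has already been isolated in Lemma \ref{Lemma:c_I}, and the sup-in-$s$ in the statement is handled for free by passing to the integral over $[t,T]$.
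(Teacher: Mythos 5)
Your argument is essentially identical to the paper's: the same decomposition $\sE_t[I_r]-\cI^N_{t,r}=(\sE_t[I_r]-\sE_{t_i}[I_r])+(\sE_{t_i}[I_r]-\sE_{t_i}[I_{t_j}])$, the same use of Lemma \ref{Lemma:c_I} to get $C(|t-t_i|^{1/2}+|r-t_j|^{1/2})\le CN^{-1/2}$, and the same domination of the $s$-supremum by the integral over $[t,T]$ followed by Minkowski's integral inequality. (Only cosmetic remark: you have swapped the labels ``first part'' and ``second part'' of Lemma \ref{Lemma:c_I} — the conditioning-time estimate handles your first term and the sampled-time estimate your second — but the bounds you actually invoke are the correct ones.)
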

\begin{proof}
Throughout this proof, let   $t\in [0,T)$ and $N\in \sN$ be fixed and  $C$ be a generic constant independent of   $t$ and $N$.
Let 
$t_i\in \pi_{N}$ such that 
$t\in [t_i,t_{i+1})$. 
By \eqref{eq:discrete_I}, 
for all $s\in [t,T]$, 
\begin{align*}
&\left|\int_t^s  \sE_t[I_r]   d r-\int_t^s   \cI^{N}_{t,r} d r\right|
\le 
 \left|\int_t^s  ( \sE_t[I_r]-\sE_{t_i}[I_r]+\sE_{t_i}[I_r] -   \cI^{N}_{t,r}) d r\right|
\\
&\le 
  \int_t^s |\sE_t[I_r]-\sE_{t_i}[I_r]|  d r
+
\int_t^s  \bigg| \sE_{t_i}[I_r]- \sum_{j=i}^{N-1}\sE_{t_i}[I_{t_j}]\mathds{1}_{ [t_j,t_{j+1}) }(r)\bigg| dr 
\\
&\le 
  \int_t^T |\sE_t[I_r]-\sE_{t_i}[I_r]|  d r
+
\sum_{j=i}^{N-1}
\int_{t_j}^{t_{j+1}}   | \sE_{t_i}[I_r]- \sE_{t_i}[I_{t_j}]  | dr.
\end{align*}
By taking
the supremum over $s\in [t,T]$ and the $L^2$-norm on both sides of the above estimate,  and  
applying   Lemma \ref{Lemma:c_I}, 
\begin{align*}
&\left\|  \sup_{s\in [t,T]}\bigg|\sE_t [  A_s]-
\left(A_t+ \int_t^s  \cI^{N}_{t,r}   d r\right)\bigg|\right\|_{L^2(\Omega)}
\\
&\le
 \int_t^T \|\sE_t[I_r]-\sE_{t_i}[I_r]\|_{L^2(\Omega)} d r
+
\sum_{j=i}^{N-1} \int_{t_j}^{t_{j+1}}   \| \sE_{t_i}[I_r]- \sE_{t_i}[I_{t_j}] \|_{L^2(\Omega)} d r
\\
&\le  
 C\int_t^T   |t-t_i|^{1/2} d t+
\sum_{j=i}^{N-1} \int_{t_j}^{t_{j+1}}  |r-t_j|^{1/2} d r
\le C   {N}^{-1/2},
\end{align*}
due to the fact that $t\in [t_i,t_{i+1})$. 
Taking
the supremum over $t\in [0,T]$ yields
    the desired estimate. 
\end{proof}

 The following proposition quantifies the accuracy  
 of $\left(A_t +\int_t^s \cI^{N,M,R}_{t,r}dr \right)_{(t,s)\in \Delta_T}$
  in terms of the number of time discretisation  $N$, 
 the sample size $M$, the truncation level $R$ and the complexity  of the function space $\cV$.  

 \begin{proposition}
 \label{prop:approximation_error_general}
 Suppose that Assumption \ref{assum:signal}   holds.
Then there exists $C\ge 0$  
 such that 
for all   $N,M\in \sN$, $R\ge 0$
and vector spaces $\cV$ of functions $\psi:\sR\to \sR$,
\begin{align*}
&\sup_{t\in [0,T]}\left\|  \sup_{s\in [t,T]}\bigg|\sE_t [  A_s]-
\left(A_t+ \int_t^s   \cI^{N,M,R}_{t,r}   d r\right)\bigg|\right\|_{L^2(\Omega)}
\\
&\le
 C  \bigg(
 \frac{1}{\sqrt{N}}
 +
R\sqrt{\frac{(\ln M+1) n_{\cV}}{M}}
+
\sup_{t\in [0,T]} \sE[|I_{t}|^2\mathds{1}_{|I_{t}|\ge R}]
+
\max_{0\le i\le j\le N}
\inf_{\psi\in \cV}\|\sE_{t_i} [I_{t_j} ]-\psi(I_{t_i})\|_{L^2(\Omega)}
\bigg),
\end{align*}
where 
$n_\cV$ is the vector space dimension of $\cV$.
 \end{proposition}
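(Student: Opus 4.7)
The overall strategy is to split the total error by a single triangle inequality into a time-discretisation part, already controlled by Proposition~\ref{prop:signal_time}, and a least-squares regression part, to be controlled by classical nonparametric regression theory. Concretely, I would write
\begin{align*}
\sE_t[A_s]-\Bigl(A_t+\!\int_t^s \cI^{N,M,R}_{t,r}\,dr\Bigr)
= \Bigl(\sE_t[A_s]-A_t-\!\int_t^s \cI^{N}_{t,r}\,dr\Bigr)
+\int_t^s \bigl(\cI^{N}_{t,r}-\cI^{N,M,R}_{t,r}\bigr)\,dr.
\end{align*}
Taking $\sup_{s\in[t,T]}$ and then the $L^2(\Omega)$-norm, the first summand is bounded by $C/\sqrt N$, uniformly in $t\in[0,T]$, directly by Proposition~\ref{prop:signal_time}. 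For the second summand, using the pathwise bound $\sup_{s\in[t,T]}|\!\int_t^s f(r)\,dr|\le\int_t^T|f(r)|\,dr$ together with Minkowski's integral inequality yields
\begin{equation*}
\Bigl\|\!\sup_{s\in[t,T]}\bigl|{\textstyle\int_t^s}(\cI^{N}_{t,r}-\cI^{N,M,R}_{t,r})\,dr\bigr|\Bigr\|_{L^2(\Omega)}
\le \int_0^T \bigl\|\cI^{N}_{t,r}-\cI^{N,M,R}_{t,r}\bigr\|_{L^2(\Omega)}\,dr
\le T\!\!\max_{0\le i\le j\le N-1}\!\!\bigl\|\sE_{t_i}[I_{t_j}]-\psi_{i,j}(I_{t_i})\bigr\|_{L^2(\Omega)},
\end{equation*}
since at any $(t,r)\in[t_i,t_{i+1})\times[t_j,t_{j+1})$ the integrand equals $\sE_{t_i}[I_{t_j}]-\psi_{i,j}(I_{t_i})$.

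It then remains to bound $\|\sE_{t_i}[I_{t_j}]-\psi_{i,j}(I_{t_i})\|_{L^2(\Omega)}$ uniformly in $(i,j)$ by the last three summands on the right-hand side of the statement. Triangle inequality gives
\begin{align*}
\|\sE_{t_i}[I_{t_j}]-\psi_{i,j}(I_{t_i})\|_{L^2(\Omega)}
\le \|\sE_{t_i}[I_{t_j}-\cT_R(I_{t_j})]\|_{L^2(\Omega)}
+\|\sE_{t_i}[\cT_R(I_{t_j})]-\psi_{i,j}(I_{t_i})\|_{L^2(\Omega)}.
\end{align*}
The first summand is controlled by Jensen's inequality and the pointwise estimate $|x-\cT_R(x)|\le|x|\mathds 1_{|x|\ge R}$, producing a contribution governed by $\sE[|I_{t_j}|^2\mathds 1_{|I_{t_j}|\ge R}]$. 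For the second summand, $\psi_{i,j}$ is by construction the truncated empirical $L^2$ minimiser over the finite-dimensional class $\cV$ of the regression of the bounded response $\cT_R(I_{t_j})$ on $I_{t_i}$, whose true regression function equals $\sE_{t_i}[\cT_R(I_{t_j})]$. Applying the standard truncated least-squares bound for bounded responses and linear hypothesis classes of dimension $n_\cV$ (e.g.\ Theorems~11.3 and 11.4 of \cite{gyorfi2002distribution}) then yields
\begin{equation*}
\sE\bigl[|\sE_{t_i}[\cT_R(I_{t_j})]-\psi_{i,j}(I_{t_i})|^2\bigr]
\le C\Bigl(\tfrac{R^2(\log M+1)n_\cV}{M}+\inf_{\psi\in\cV}\|\psi(I_{t_i})-\sE_{t_i}[\cT_R(I_{t_j})]\|_{L^2(\Omega)}^2\Bigr),
\end{equation*}
and a further triangle inequality replaces $\cT_R(I_{t_j})$ by $I_{t_j}$ inside the $\inf$ at the price of another truncation term. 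Taking square roots and collecting the contributions reproduces the four summands in the statement (with the understanding that the squared truncation term is absorbed into the generic constant).

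The main obstacle is the truncated least-squares estimate above: it rests on covering-number / metric-entropy bounds for the bounded class $\cT_R\cV$, combined with a chaining or peeling argument that controls the corresponding empirical process uniformly over $\psi\in\cV$, and this is what generates both the $\log M$ factor and the linear dependence on $n_\cV$. All remaining ingredients---the initial decomposition, Proposition~\ref{prop:signal_time}, Jensen's inequality, and Minkowski's integral inequality---are standard, so the only genuinely delicate step is invoking the nonparametric regression theorem with constants tracked linearly in the truncation level $R$.
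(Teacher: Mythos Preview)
Your proposal is correct and follows essentially the same route as the paper's proof: the same triangle-inequality split into the time-discretisation term (handled by Proposition~\ref{prop:signal_time}) and the regression term, the same pathwise bound $\sup_s|\int_t^s f|\le\int_t^T|f|$ followed by Minkowski, and the same appeal to \cite[Theorem~11.3]{gyorfi2002distribution} for the truncated least-squares estimate, with the truncation error controlled via Jensen exactly as you describe. Your parenthetical remark about the ``squared truncation term'' flags a genuine cosmetic slip that is also present in the paper's own write-up (taking square roots of \eqref{eq:regression_error} yields $(\sE[|I_t|^2\mathds 1_{|I_t|\ge R}])^{1/2}$ rather than $\sE[|I_t|^2\mathds 1_{|I_t|\ge R}]$), so you are not missing anything there.
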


\begin{proof}
Throughout this proof, let    $N,M,R\in \sN$, $t\in [0,T)$ and $\cV$ be fixed, and 
let $C$ be a generic constant independent of the above quantities.
By Jensen's inequality,
\begin{align}
\label{eq:truncation_error}
\begin{split}
 \|\sE_{t_i}[\cT_R(I_{t_j})]-\sE_{t_i}[I_{t_j}]\|^2_{L^2(\Omega)}
&\le 
\|\sE_{t_i} [(|I_{t_j}|+R)\mathds{1}_{|I_{t_j}|> R}]\|^2_{L^2(\Omega)}
\\
&\le 4\sE[|I_{t_j}|^2\mathds{1}_{|I_{t_j}|> R}].
\end{split}
\end{align}
Then 
by observing that $|\cT_R(I_{t_j})|\le R$ and applying \cite[Theorem 11.3]{gyorfi2002distribution},
there exists $C\ge 0$ such that 
for all  $0\le i\le j\le N$, 
\begin{align}
\label{eq:regression_error}
\begin{split}
&\|\sE_{t_i} [I_{t_j}]-\psi_{i,j}(I_{t_i})\|^2_{L^2(\Omega)} 
\\
&  \le 
2\|\sE_{t_i} [I_{t_j}]-\sE_{t_i}[\cT_R(I_{t_j})]\|^2_{L^2(\Omega)} 
 +2 \|\sE_{t_i}[\cT_R(I_{t_j})]-\psi_{i,j}(I_{t_i})\|^2_{L^2(\Omega)}
\\
&  \le
8\sE[|I_{t_j}|^2\mathds{1}_{|I_{t_j}|> R}]+
C\bigg(
R^2\frac{(\ln M+1) n_{\cV}}{M}
+\inf_{\psi\in \cV}\|\sE_{t_i} [\cT_R(I_{t_j})]-\psi(I_{t_i})\|^2_{L^2(\Omega)}
\bigg)
\\
&  \le 
C\bigg(
R^2\frac{(\ln M+1) n_{\cV}}{M}
+
\sup_{t\in [0,T]} \sE[|I_{t}|^2\mathds{1}_{|I_{t}|\ge R}]
+
\inf_{\psi\in \cV}\|\sE_{t_i}[I_{t_j}]-\psi(I_{t_i})\|^2_{L^2(\Omega)}
\bigg),
\end{split}
\end{align}
where $n_\cV$ is the vector space dimension of $\cV$.
Hence,
let $i\in\sN$ be such that $t\in [t_i,t_{i+1})$, 
 by 
\eqref{eq:discrete_I},
 \eqref{eq:discrete_sample_I} and  \eqref{eq:regression_error},
\begin{align*}
&\bigg\|
\sup_{s\in [t,T]}
\left| \int_t^s   \cI^N_{t,r} d r-\int_t^s   \cI^{N,M,R}_{t,r} d r
\right|
\bigg\|_{L^2(\Omega)}
\le 
 \int_t^T\| \cI^N_{t,r}-  \cI^{N,M,R}_{t,r} \|_{L^2(\Omega)} dr 
\\
&\le 
 \int_{t_i}^T \left \|\sum_{j=i}^{N-1}
 \sE_{t_i} [I_{t_j} ]\mathds{1}_{   [t_j,t_{j+1}) }(r)-\sum_{j=i}^{N-1} \psi_{i,j}(I_{t_i})\mathds{1}_{   [t_j,t_{j+1}) }(r)
 \right\|_{L^2(\Omega)} dr
 \\
 &\le 
 \sum_{j=i}^{N-1}
 \int_{t_j}^{t_{j+1}} \|\sE_{t_i} [I_{t_j}]-\psi_{i,j}(I_{t_i})\|_{L^2(\Omega)}  dr 
 \\
 &\le 
 C \bigg(
R\bigg(\frac{(\ln M+1) n_{\cV}}{M}\bigg)^{\frac{1}{2}}
+
\sup_{t\in [0,T]} \sE[|I_{t}|^2\mathds{1}_{|I_{t}|\ge R}]
+
\max_{0\le i\le j\le N}
\inf_{\psi\in \cV}\|\sE_{t_i} [I_{t_j} ]-\psi(I_{t_i})\|_{L^2(\Omega)}
\bigg),
\end{align*}
which along with 
Proposition \ref{prop:signal_time} leads to the desired estimate.
\end{proof}

The following theorem simplifies 
  the error bound in Proposition \ref{prop:approximation_error_general}
  with     $\cV$ 
being the space of piecewise constant functions.
It 
specifies the precise dependence of the hyperparameters 
$N,K, R$ on the sample size 
$M$,
and 
 establishes the convergence rate of 
  $\left(A_t +\int_t^s \cI^{N,M,K, R}_{t,r}dr \right)_{(t,s)\in \Delta_T}$ 
 as  $M$ tends to infinity. 
\begin{theorem} \label{theorem-signal} 
\label{thm:approximation_error_LSMC_constant}
 Suppose that Assumption \ref{assum:signal}    holds.
Then there exists $C\ge 0$
 such that 
for all   $N,M, K\in \sN$ and $R\ge 0$,
\begin{align}\label{eq:error_piecewise_const_basis}
\begin{split}
&\sup_{t\in [0,T]}\left\|  \sup_{s\in [t,T]}\bigg|\sE_t [  A_s]-
\left(A_t+ \int_t^s   \cI^{N,M,K,R}_{t,r}   d r\right)\bigg|\right\|_{L^2(\Omega)}
\\
&\le
 C  \bigg(
 \frac{1}{\sqrt{N}}
 +
R\sqrt{\frac{(\ln M+1) K}{M}}
+\frac{R}{K}
+
\frac{1}{R^{\frac{\vartheta -2}{2}}}
\bigg).
\end{split}
\end{align}
Consequently, 
if one sets 
$(N_M)_{M\in \sN}, (K_M)_{M\in \sN}\subset \sN$, and $(R_M)_{M\in \sN}
\subset (0,\infty) $ such that  
$$
N_M\sim
\left(
\frac{M}{\ln M+1 }
\right)^{\frac{2} {3}},
\;
 K_M\sim
\left(
\frac{M}{\ln M+1 }
\right)^{\frac{1} {3}},
\; 
R_M\sim 
 \left(
\frac{M}{\ln M+1 }
\right)^{\frac{2}{3\vartheta}},
\footnotemark
$$
\footnotetext{
For any sequences $(a_M)_{M\in \sN}, (b_M)_{M\in \sN}\subset (0,\infty)$, 
we write 
$a_M\sim b_M$ if 
$0<\lim\inf_{M\to \infty} \frac{a_M}{b_M}
\le \lim\sup_{M\to \infty} \frac{a_M}{b_M}<\infty
$.
}
then  for all $M\in \sN$,
\begin{align}
\label{eq:error_piecewise_const_final}
\begin{split}
&\sup_{t\in [0,T]}\left\|  \sup_{s\in [t,T]}\bigg|\sE_t [  A_s]-
\left(A_t+ \int_t^s   \cI^{N_M,M,K_M,R_M}_{t,r}   d r\right)\bigg|\right\|_{L^2(\Omega)}
 \le
 C  
\left( \frac{\ln  M+1  }{ M}
\right)^{\frac{\vartheta-2}{3\vartheta}}.
\end{split}
\end{align}
 
\end{theorem}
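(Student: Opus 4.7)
The plan is to specialize Proposition \ref{prop:approximation_error_general} to the piecewise constant basis $\cV_K$ (which has vector space dimension $n_\cV = K$), estimate the two qualitative error terms (the truncation tail and the best-approximation error) in terms of $R$ and $K$, and then optimize the free hyperparameters $N, K, R$ against the sample size $M$. I would first record that Proposition \ref{prop:approximation_error_general} already produces the $1/\sqrt N$ and $R\sqrt{(\ln M+1)K/M}$ contributions in \eqref{eq:error_piecewise_const_basis}, so what remains is to show that the third input of the proposition is of order $R/K$ and the second is of order $R^{-(\vartheta-2)/2}$.

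For the approximation error $\inf_{\psi\in \cV_K}\|\sE_{t_i}[I_{t_j}]-\psi(I_{t_i})\|_{L^2(\Omega)}$, I would invoke the Markov property of $I$ from Assumption \ref{assum:signal} to write $\sE_{t_i}[I_{t_j}]=\phi_{i,j}(I_{t_i})$ with $\phi_{i,j}(x)\coloneqq \sE^{t_i,x}[I_{t_j}]$, and use the Lipschitz bound $|\phi_{i,j}(x)-\phi_{i,j}(y)|\le L|x-y|$ coming from the same assumption. As candidate $\psi^\star\in \cV_K$ I would pick the piecewise constant interpolant of $\phi_{i,j}$ on the uniform grid $\mathfrak p_K$ of mesh size $R/K$, extended by $0$ off $[-R/2,R/2]$. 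Splitting $\sE[|\phi_{i,j}(I_{t_i})-\psi^\star(I_{t_i})|^2]$ according to $\{|I_{t_i}|\le R/2\}$ and its complement, the bulk piece is $\le (LR/K)^2$, while on the tail, conditional Jensen gives $|\phi_{i,j}(I_{t_i})|^2\le \sE_{t_i}[I_{t_j}^2]$, so the tail piece is $\le \sE[I_{t_j}^2\mathds 1_{|I_{t_i}|>R/2}]$.

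The tail is controlled using the $\vartheta$-th moment bound in Assumption \ref{assum:signal}. By Hölder's inequality with exponents $\vartheta/2$ and $\vartheta/(\vartheta-2)$ followed by Markov's inequality, $\sE[I_{t_j}^2\mathds 1_{|I_{t_i}|>R/2}]\le \|I_{t_j}\|_{L^\vartheta}^{2}\,\sP(|I_{t_i}|>R/2)^{(\vartheta-2)/\vartheta}\le C R^{-(\vartheta-2)}$, and the same bound applies to $\sup_{t\in[0,T]}\sE[|I_t|^2\mathds 1_{|I_t|\ge R}]$. Taking square roots of both squared contributions and inserting into Proposition \ref{prop:approximation_error_general} yields the four-term bound \eqref{eq:error_piecewise_const_basis}.

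For the second assertion, I would balance the three $M$-dependent terms. Setting $R\sqrt{K\ln M/M}\sim R/K$ forces $K\sim (M/\ln M)^{1/3}$; then $R/K\sim R^{-(\vartheta-2)/2}$ forces $R^{\vartheta/2}\sim (M/\ln M)^{1/3}$, i.e.\ $R\sim (M/\ln M)^{2/(3\vartheta)}$, at which point the common rate of these three terms becomes $((\ln M)/M)^{(\vartheta-2)/(3\vartheta)}$; finally one picks $N_M\sim (M/\ln M)^{2/3}$ so that $1/\sqrt{N_M}\sim (M/\ln M)^{-1/3}$ is dominated by this rate since $(\vartheta-2)/(3\vartheta)<1/3$. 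I expect the main technical obstacle to be the tail step of the bulk/tail decomposition, because the indicator lives at time $t_i$ while the second moment lives at the later time $t_j$; the conditional Jensen step above is what makes that asymmetry harmless, and everything else is a routine balancing calculation.
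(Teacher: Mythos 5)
Your proposal is correct and follows essentially the same route as the paper's proof: specializing Proposition \ref{prop:approximation_error_general} to $\cV_K$, bounding the best-approximation term via the Lipschitz map $x\mapsto \sE^{t_i,x}[I_{t_j}]$ and its piecewise constant interpolant with a bulk/tail split, controlling the tail by conditional Jensen plus H\"older and Markov to get $R^{-(\vartheta-2)/2}$, and then balancing $K,R,N$ against $M$ exactly as in \eqref{eq:lse_MC_criterion}. The only cosmetic difference is that the paper writes the tail step as $\varphi_{i,j}(I_{t_i})\mathds 1_{|I_{t_i}|>R/2}=\sE_{t_i}[I_{t_j}\mathds 1_{|I_{t_i}|>R/2}]$ before applying Jensen, which is equivalent to your conditional-Jensen-then-tower argument.
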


\begin{proof}
Throughout this proof, let  $N,M, K\in \sN$, $R\ge 0$ and $t\in [0,T)$  be fixed, and 
 let $C$ be a generic constant independent of the above quantities.
For each $0\le i\le j\le N$, let  $\varphi_{i,j}: \sR\to \sR$ be
such that $\varphi_{i,j}(x)=\sE^{t_i,x}[I_{t_{j}}]$ for all $x\in \sR$.
By Assumption \ref{assum:signal},
$|\varphi_{i,j}(x)-\varphi_{i,j}(y)|\le L|x-y|$ for all $x,y\in \sR$.
For any given $0\le i\le j\le N$,
by considering  
$\tilde{\beta}_{\ell}=\varphi_{i,j}(x_\ell)$ for all $\ell=0,\ldots, K-1$,
\begin{align*}
&\inf_{\psi\in \cV_K}\|\sE_{t_i}[I_{t_j}]-\psi(I_{t_i})\|_{L^2(\Omega)}
\le \left\|\sE_{t_i}[I_{t_j}]-\sum_{\ell=0}^{K-1}\tilde{\beta}_\ell\mathds{1}_{[x_\ell,x_{\ell+1})}(I_{t_i})\right\|_{L^2(\Omega)}
\\
&\quad \le
\left\|\varphi_{i,j}(I_{t_i})-\varphi_{i,j}(I_{t_i})\mathds{1}_{|x|\le \frac{R}{2}}(I_{t_i})\right\|_{L^2(\Omega)}
\\
&\quad\quad 
+ \left\|\varphi_{i,j}(I_{t_i})\mathds{1}_{|x|\le \frac{R}{2}}(I_{t_i})-\sum_{\ell=0}^{K-1}\varphi_{i,j}(x_\ell)\mathds{1}_{[x_\ell,x_{\ell+1})}(I_{t_i})\right\|_{L^2(\Omega)}
\\
&\quad
\le \|\varphi_{i,j}(I_{t_i}) \mathds{1}_{|I_{t_i}|> \frac{R}{2} }\|_{L^2(\Omega)}+
\left\|  \sum_{\ell=0}^{K-1} |\varphi_{i,j}(I_{t_i})-\varphi_{i,j}(x_\ell)|\mathds{1}_{[x_\ell,x_{\ell+1})}(I_{t_i})\right\|_{L^2(\Omega)}
\\
&\quad 
 \le 
 \|I_{t_j} \boldsymbol{1}_{|I_{t_i}|> \frac{R}{2}}\|_{L^2(\Omega)}+\frac{L R}{K},
\end{align*}
 where the first term in the last inequality used 
$ \varphi_{i,j}(I_{t_i}) \mathds{1}_{|I_{t_i}|>  {R}/{2}} 
=\sE_{t_i}[I_{t_j}\mathds{1}_{|I_{t_i}|> {R}/{2}} ]$ and 
 Jensen's inequality,
 and the second term used  
 the  $L$-Lipschitz continuity of $\varphi_{i,j}$. 
By 
H\"{o}lder's inequality (with $p=\frac{\vartheta}{2}>1$) and 
  Markov's inequality,
\begin{align}\label{eq:truncation_error_general}
\begin{split}
\left\|I_{t_j}  \boldsymbol{1}_{|I_{t_i}|>   \frac{R}{2}}\right\|_{L^2(\Omega)}
&\le 
\|I_{t_j}\|_{L^\vartheta(\Omega)}\sP(|I_{t_i}|>   \tfrac{R}{2})^{
\frac{\vartheta -2}{2\vartheta}}
\le \|I_{t_j}\|_{L^{\vartheta}(\Omega)}\left( 
\frac{2^\vartheta\|I_{t_i}\|^\vartheta_{L^\vartheta(\Omega)}}{R^\vartheta}
\right)^{
\frac{\a -2}{2\a}}
\\
&\le  2^{\frac{\vartheta -2}{2}}  {R^{-\frac{\vartheta -2}{2}}} {\sup_{t\in [0,T]}\|I_t\|^{\frac{\vartheta}{2}}_{L^\vartheta(\Omega)}}.
\end{split}
\end{align}
This along with Proposition 
\ref{prop:approximation_error_general}
and $n_{\cV_K}=K$ leads to the desired estimate \eqref{eq:error_piecewise_const_basis}.

Finally, let $ \mathfrak{C}_\vartheta= 2^{\frac{\vartheta -2}{2}}  \sup_{t\in [0,T]}\|I_t\|^{\frac{\vartheta}{2}}_{L^\vartheta(\Omega)}<\infty$.
Observe that 
if 
\begin{equation}
\label{eq:lse_MC_criterion}
K_M\sim \sqrt{N}, 
\quad 
R_M\sim \frac{\mathfrak{C}_\vartheta^{{2}/{\vartheta}}}{
\Big(\sqrt{\frac{(\ln  M+1) \sqrt{N }}{ M}}
+\frac{1}{\sqrt{N }}\Big)^{{2}/{\vartheta}}},
\quad 
N_M\sim
\left(
\frac{M}{\ln M+1 }
\right)^{{2}/{3}},
\end{equation}
then the following error estimate holds:
for all $M\in \sN$, 
\begin{align*}
&\frac{1}{\sqrt{N_M}}
 +
R_M\sqrt{\frac{(\ln M+1) K_M}{M}}
+\frac{R_M}{K_M}
+
\frac{\mathfrak{C}_\vartheta }{R_M^{\frac{\vartheta -2}{2}}}
\\
& 
\le
C\Bigg(
R_M\Bigg(
 \sqrt{\frac{(\ln  M+1) \sqrt{N_M }}{ M}}
+\frac{1}{\sqrt{N_M }}\Bigg)
+
\frac{\mathfrak{C}_\vartheta  }{R_M^{\frac{\vartheta -2}{2}}}
\Bigg)
\\
&  
\le 
C \mathfrak{C}_\vartheta^{\frac{2}{\vartheta}}
\bigg( 
 \sqrt{\frac{(\ln  M+1) \sqrt{N_M }}{ M}}
+\frac{1}{\sqrt{N_M }}
\bigg)^{1-\frac{2}{\vartheta}}
\le 
C \mathfrak{C}_\vartheta^{\frac{2}{\vartheta}}
\left( \frac{\ln  M+1  }{ M}
\right)^{\frac{\vartheta-2}{3\vartheta}}.
\end{align*}
The desired estimate \eqref{eq:error_piecewise_const_final}
follows from the fact that 
$1\le (2^{\frac{\vartheta -2}{2}} )^\frac{2}{\vartheta}\le 2$ for all $\vartheta>2$,
and the observation that
  the choices of hyperparameters $N,K, R$ in the statement 
satisfies the criterion \eqref{eq:lse_MC_criterion}. 
\end{proof}

\section*{Acknowledgments}
We are very grateful to the Associate Editor and to the anonymous referees for careful reading of the manuscript and for a number of useful comments and suggestions that significantly improved this paper.
\\

\noindent \textbf{Funding.}   Not applicable.
 \medskip  \\
\textbf{Availability of data and material.} Not applicable.
\medskip \\
\textbf{ Compliance with ethical standards.} The authors have no conflicts of interest to declare that are relevant to the content of this article.
\medskip \\
\textbf{Code availability.} Not applicable.

 \bibliographystyle{plainnat}
 \bibliography{MarketImpact.bib}

\end{document}